\definecolor{DarkGray}{rgb}{0.1,0.1,0.5}
\newcommand{\be}{\begin{equation}}
\newcommand{\ee}{\end{equation}}
\newcommand{\ba}{\begin{array}}
\newcommand{\ea}{\end{array}}
\newcommand{\bea}{\begin{eqnarray}}
\newcommand{\eea}{\end{eqnarray}}
\newcommand*{\textfrac}[2]{{#1}/{#2}}
\newcommand{\expeq}{\stackrel{\mathrm{exp}}{=}}
\newcommand*{\cO}{\mathcal{O}}
\newcommand{\calE}{{\cal E }}
\newcommand{\calT}{{\cal T }}
\newcommand{\calO}{{\cal O }}
\newcommand{\ZZ}{\mathbb{Z}}
\newcommand{\CC}{\mathbb{C}}
\newcommand{\RR}{\mathbb{R}}
\newcommand{\EE}{\mathbb{E}}
\newcommand{\la}{\langle}
\newcommand{\ra}{\rangle}
\newcommand{\nn}{\nonumber}
\newcommand{\trace}{\mathop{\mathrm{Tr}}\nolimits}
\newcommand{\pf}{\mathop{\mathrm{Pf}}\nolimits}
\newcommand*{\ket}[1]{|#1\rangle}
\newcommand*{\bra}[1]{\langle #1|}
\newcommand*{\proj}[1]{|#1\rangle\langle #1|}
\newcommand*{\spr}[2]{\langle #1|#2\rangle}
\newcommand{\bad}{\textrm{bad}}
\newcommand{\good}{\textrm{good}}
\newtheorem{dfn}{Definition}
\newtheorem{lemma}{Lemma}
\newtheorem{prop}{Proposition}
\newtheorem{theorem}{Theorem}
\newtheorem{corol}{Corollary}
\newtheorem{fact}{Fact}
\newcommand*{\random}{R}
\newcommand*{\uncomment}[1]{}
\newcommand*{\ExpE}[1]{\mathbb{E}{\left[#1\right]}}
\newcommand*{\fw}{\mathsf{f}} 
\newcommand*{\qw}{\mathsf{q}} 
\numberwithin{equation}{section}
\newcommand*{\cwave}{c_1}
\newcommand*{\xiwave}{\xi_1}
\title{Disorder-assisted error correction in Majorana chains}
\date{\today}
\author{Sergey Bravyi and Robert K\"onig\\
\small IBM T.J. Watson Research Center, Yorktown Heights, NY 10598, USA}
\begin{document}

\maketitle

\begin{abstract}
It was recently realized that quenched disorder may enhance the reliability of topological qubits by reducing the mobility of anyons at zero temperature. Here we compute storage times with and without disorder for quantum chains with unpaired Majorana fermions --- the simplest  toy model of a quantum memory. Disorder takes the form of a random site-dependent chemical potential. The corresponding one-particle problem is a one-dimensional Anderson model with  disorder in the hopping amplitudes.  We focus on the zero-temperature storage of a qubit encoded in the ground state of the Majorana chain. Storage and retrieval are modeled by a unitary evolution under the memory Hamiltonian with an unknown weak perturbation followed by an error-correction step. Assuming dynamical localization of the one-particle problem, we show that the storage time grows exponentially with the system size. We give supporting evidence for the required localization property by estimating Lyapunov exponents of the one-particle eigenfunctions.  We also simulate the storage process for chains with a few hundred sites. Our numerical results indicate that  in the absence of disorder, the storage time grows only as a logarithm of the system size. We provide numerical evidence for the beneficial effect of disorder on storage times and show that suitably chosen pseudorandom potentials can outperform random ones.
\end{abstract}

\newpage

\tableofcontents

\newpage

\section{Introduction}

Topological features of physical systems are -- by  definition  -- robust against local imperfections. As a consequence, they are the natural object of study in the  characterization and classification of different phases of matter. A major challenge is to identify and  compute topological invariants for various systems and to connect these to experimentally accessible quantities. Examples include Chern numbers, the topological entanglement entropy~\cite{KitaevPreskill06,LevinWen06entr}, or the classification of  free fermions in terms of symmetry classes.

In contrast, topological quantum computing attempts to make  operational use of robust features of topologically ordered systems. A basic first goal is the reliable storage and readout of quantum information. Proposed toy
models for topological quantum memories include Kitaev's Majorana chains~\cite{Kitaev00} and the toric code~\cite{Kitaev1997}. These systems exhibit a `topological' ground space degeneracy which persists in the presence of local perturbations.
In both cases, the ground space constitutes a quantum error-correcting code. When superselection rules are taken into account (in the case of Majorana chains), these codes
have a macroscopic distance which scales linearly with the system size.

Unfortunately, these remarkable ground space properties are insufficient to protect the encoded information in the presence of
generic local perturbations and/or thermal fluctuations: localized excitations (or {\em anyons}) can be created at constant energy cost and once created, can propagate freely to cause logical errors. This fact was recognized early on~\cite{Dennis2002} and has since been elaborated by several authors by giving explicit examples~ (see e.g.,~\cite{Pastawskietal10,Starketal11}) and general impossibility proofs for storage at non-zero temperature.
These range from the investigation of  the robustness of topological order using finite-temperature generalizations of the topological entanglement entropy~\cite{Chamon2007} to bounds on  operational quantities such as the quantum memory relaxation rate~\cite{CLBT09}. For the 2D toric code, Alicki et al. showed that a weak coupling to a Markovian environment implies a constant lower bound on the relaxation rate of any logical state independent of the system size~\cite{AFH:qmem}.

Several approaches for combating decoherence in quantum memories have been considered in the literature. Arguably the most intriguing of these is the idea of self-correcting quantum memories. This completely circumvents the decoherence mechanism mentioned above: it is based on a Hamiltonian which has no
 point-like excitations carrying a topological charge (such as anyons). More specifically, it assumes a Hamiltonian whose ground space is a quantum code with macroscopic distance and the property that high-weight errors correspond to larger energies. As a result, encoded states are separated by (macroscopic) energy barriers, a necessary condition for self-correction as formalized in~\cite{Bravyi2008} (see also~\cite{Kay2008}). A 4D generalization of the toric code is known~\cite{Dennis2002} to provide a stable quantum memory in this fashion: it was shown in~\cite{Alicki08} that this memory is thermally stable in the sense that relaxation times increase exponentially with system size. Unfortunately, this model  cannot be embedded in a locality-preserving way in~$\mathbb{R}^3$.  The existence of self-correcting memories in 3D remains under investigation and is of fundamental theoretical interest.  On the other hand, simpler geometries, such as 2D (subspace) stabilizer codes cannot provide this kind of protection, as shown in~\cite{Bravyi2008}.

A different line of research attempts to specifically address the creation and movement of anyons responsible for decoherence in topologically ordered systems. Importantly, while these excitations are thermally suppressed at low temperatures, their propagation impacts the coherence of quantum memories even in the case of zero temperature. Limiting this movement is therefore at the core of these methods. Each of these methods has its own merits and drawbacks.

An example is active error correction which employs frequent rounds of syndrome measurements
revealing the locations of anyons at any given time step.
The syndrome information  can then  be used to undo the effect of errors caused by the propagation of anyons.
 It has been shown to permit reliable storage of quantum information
for arbitrarily long times in e.g., the 2D toric code~\cite{Dennis2002}, with errors modeled as depolarizing noise.
However, this approach requires significant experimental capabilities: the syndrome measurements need to be
sufficiently accurate and sufficiently fast  in order for the error correction to succeed~\cite{Dennis2002}.

Given the stringent experimental requirements for active error-correction, it is  natural to consider other mechanisms for reducing the propagation of anyons in topological quantum memories. One such approach is the introduction of long-range interactions between anyons.  A concrete model where such an interaction is mediated by a bosonic background field was proposed by
Hamma at al~\cite{HCC:toricboson}.
An alternative approach in which creation of anyons from the vacuum is penalized by long-range repulsive
interactions was studied by Chesi et al~\cite{Chesi2009}.
A promising 3D stabilizer model  was proposed more recently by Haah~\cite{Haah11}.
In this model anyons are bound together by energy barriers growing logarithmically with the
lattice size~\cite{Bravyi11}  even though the underlying  Hamiltonian includes only short-range  interactions
with constant strength.

\subsubsection*{Quenched disorder in topological quantum memories}
Here we focus on another potentially more realistic  approach towards limiting the detrimental effects of propagating anyons pioneered by Wootton and Pachos~\cite{WoottonPachos11}, and, independently, by
Stark et al~\cite{Starketal11}. It relies on the observation
that the mobility of anyons can be  strongly reduced  in the presence of {\em quenched disorder}.
 Such disorder may account for natural imperfections in physical realizations, or may be artificially engineered by tuning the interactions. Its stabilizing role on topological order has been investigated in terms of the topological entanglement entropy~\cite{Tsokomosetal11}. Here we are interested in how disorder enhances the reliability of a quantum memory. The underlying intuition is analogous to the reasoning used to explain why disorder suppresses
 zero-temperature electron transport in wires: the anyons, like electrons, get trapped in the many local minima present in a random potential. This phenomenon is commonly known as Anderson localization, named after the model introduced by Anderson to study the effect of randomness on quantum transport~\cite{Anderson1958}.

 Perhaps the most well-studied paradigmatic model exhibiting Anderson localization is that of a single particle hopping on $\mathbb{Z}^d$ with random on-site potential at each site. This corresponds to a tight-binding approximation of an electron moving in some random medium. The model Hamiltonian acting on states $\psi\in \ell^2(\mathbb{Z}^d)$ takes the form $H_\random=-\Delta+\eta W$, where $\Delta$ is the discrete Laplacian, $(W\psi)(x)=W(x)\psi(x)$ and where
 $\{W(x)\}_{x\in\mathbb{Z}^d}$ are identically and independently distributed random variables with support e.g., on $[-1,1]$. The parameter $\eta\geq 0$ controls the strength of the disorder in this setting. A realization $H_\random$ of such a random Schr\"odinger operator is said to have exponentially localized eigenfunctions with a localization length $\xi_s$ if every eigenfunction $\psi$ satisfies~\footnote{It is worth mentioning that
a typical realization of a random potential $W(x)$  for an infinite lattice
contains arbitrarily long intervals at which $W(x)$  is arbitrarily close
to a constant function. Such almost flat intervals would  give rise to eigenfunctions which
are delocalized at small scales. Note that the constant $C$ in  Eq.~(\ref{eq:exponentiallocalization})
could be arbitrarily large, so this bound  does not prevent $\psi(x)$ from having oscillatory behavior for small $|x-x_0|$.}
\begin{align}
|\psi(x)|\leq C |\psi(x_0)|e^{-|x-x_0|/\xi_s}\ \label{eq:exponentiallocalization}
\end{align}
 for some $x_0\in\mathbb{Z}^d$
   and some constant $C\ge 1$ depending on $\psi$. This is known to hold with high probability  for arbitrary disorder strength
   in 1D systems ($d=1$). This fact can be established using transfer matrix methods and computing Lyapunov exponents~\cite{DelyonKunzSouillard83,DelyonLevySouillard85}. It is in stark contrast to
the delocalized form of the eigenstates of the `clean' Hamiltonian~$H_C=-\Delta$, which are Bloch waves. For $d>1$, as well as for certain kinds of disorder, the situation is more complex: for example, there can be both localized and extended eigenstates separated by an energy threshold called a mobility edge.
Theoretical manifestations of localization such as~\eqref{eq:exponentiallocalization} or the so-called dynamical localization condition
 can be connected to experimentally measurable quantities, e.g., conductivity.
Powerful analytic tools for establishing such conditions for more general families of random Hamiltonians have been developed (see e.g.,~\cite{Kirsch07,Stolz11} for two recent reviews). This includes Hamiltonians with  more general forms of disorder such as randomness in the hopping terms, i.e., off-diagonal disorder (see e.g.,~\cite{Delyonetal87}).

A first step towards showing that Anderson localization increases the robustness of a topological qubit is to
prove that the given physical system indeed has the desired  localization properties.
A difficulty here is that time-evolving topological memories generically involve coherent superpositions of states with varying number of anyons --- a regime in which the standard Anderson localization theory cannot be directly applied. An approach pursued in~\cite{WoottonPachos11,Starketal11}
is to restrict the dynamics to subspaces with a fixed particle number and neglect processes creating and destroying pairs of anyons.  Within this approximation localization properties of the corresponding multi-particle states
have been established both numerically~\cite{WoottonPachos11} and analytically~\cite{Starketal11,AizenmanWarzel09,ChulaevskySuhov09}.

A second step is to connect localization properties of states to the robustness  of the quantum memory. The first analysis
along these lines starting from two-particle wavefunction localization was provided in~\cite{WoottonPachos11,Starketal11} for the toric code. By using the fact that logical errors correspond to topologically non-trivial trajectories of such pairs,  some estimates on the failure probability starting from a two-particle configuration as well as certain geometrically arranged multi-particle initial configurations were obtained. However, these estimates neglect the full many-particle dynamics, and hence only provide a somewhat qualitative picture of the effect of localization.

In order to assess the effect of disorder on the stability of topological memories more rigorously, we first need to
choose a sufficiently simple model describing encoding, storage, and readout stages
as well as  some benchmark measure to compare the robustness of a memory in different regimes.
Let us formulate this in more detail.  We will describe the zero-temperature quantum information storage
using a {\em memory Hamiltonian}~$H_0$,  a {\em perturbation}~$V$, and an
{\em error correction} operation $\Phi_{ec}$.  The memory Hamiltonian $H_0=-\sum_j S_j$
represents an ideal topological quantum memory such as the toric code model.
It is a sum of local pairwise commuting {\em stabilizers}  $S_j$ where each stabilizer
has eigenvalues $\pm 1$. Ground states of $H_0$ are $+1$ eigenvectors of any stabilizer $S_j$
while excitations correspond to flipped eigenvalues, $S_j=-1$.
The encoding stage amounts to initializing the system in some ground state $|g\ra$ of $H_0$,
see Section~\ref{subs:encoding}.
The ground state of $H_0$  must be degenerate to permit encoding of one or several qubits.
The storage is modeled by a unitary evolution under a Hamiltonian $H=H_0+V$
for some time $t$ which results in a final state $|g(t)\ra=e^{i(H_0+V)t}\, |g\ra$.
Here we assume that the perturbation $V$ is either unknown or known only partially,
so the unitary evolution cannot be undone even if we have a full control of the system.
However, if the perturbation $V$ is sufficiently weak and the evolution time~$t$ is not too long,
the final state~$|g(t)\ra$ can be regarded as a slightly corrupted version of the initial state $|g\ra$
with a small density of errors which can be corrected at the readout stage.
The readout begins by measuring a {\em syndrome}, i.e., measuring the eigenvalue $\pm1$ for every
stabilizer $S_j$. The syndrome information $s$ is fed into an error correction algorithm that
determines a unitary correcting operator~$C(s)$ returning the system back to some ground state of $H_0$.
The error-corrected final state is therefore $\Phi_{ec}(|g(t)\ra\la g(t)|)$, where
$\Phi_{ec}$ is a linear map describing how the error correction acts on states. More explicitly,
$\Phi_{ec}(\rho)=\sum_s C(s) Q_s \rho Q_s C(s)^\dag$, where $Q_s$ is the projector onto
a subspace with a fixed syndrome $s$, see Section~\ref{subs:EC} for details.
We will measure robustness of a memory by its {\em storage fidelity} --- the overlap
between the initial encoded state and the final error-corrected state:
\be
F_{\ket{g}}(t)=\la g| \Phi_{ec}(e^{i(H_0+V)t}\, |g\ra\la g| e^{-i(H_0+V)t})|g\ra, \label{eq:fidelitytime}
\ee
and the worst-case fidelity $F(t)=\min_{\ket{g}} F_{\ket{g}}(t)$ minimized over all encoded states.
By definition, $0\le F(t)\le 1$ with $F(0)=1$. A perfect recovery of encoded information corresponds to~$F(t)=1$.
Another measure we will use is the {\em storage time}~$T_{\textrm{storage}}=T_{\textrm{storage}}(F_0)$.
This is the time it takes for the storage fidelity~$F(t)$ to drop below a given threshold value $F_0$,
say, $F_0=0.99$ (we note that although~$F(t)$ tends to decrease with increasing~$t$, its behavior may be
highly non-monotone, see e.g. Fig.~\ref{figosc:subfig2} in Section~\ref{sec:numerics}).
We note that error correction is necessary here to give a fair assessment of the information recoverable from the memory. But its role  is different from the case of active error-correction: it is only applied once in this process when the information is retrieved after time~$t$. In particular, it plays no direct role in preserving coherence in the time interval~$[0,t]$.

Quenched disorder will be modeled by introducing some randomness into the perturbation~$V$. Clearly, physically realistic models of disorder depend on the system under consideration. Here we study the Majorana chain model proposed by Kitaev~\cite{Kitaev00}. It describes a quantum wire put in contact with a bulk superconductor
which enables creation and destruction of electron pairs in the wire through tunneling of Cooper pairs
(see Section~\ref{subs:themodel} for a formal definition of the model and a discussion of its properties).  Since this is a system of electrons, the simplest choice of disorder is a random site-dependent chemical potential. The corresponding perturbation is
$V=-\sum_{j=1}^N \mu_j a_j^\dag a_j$, where $a_j^\dag$ $(a_j)$ are electron creation (annihilation)
operators, $N$ is the system size, and $\mu_j$ is the chemical potential at the site $j$.
Furthermore, we assume that $\mu_j=\mu+\eta x_j$, where $x_j\in [-1,1]$ are independent
identically distributed random variables drawn from the uniform distribution, while $\eta>0$
controls the disorder strength. The constant $\mu$ represents the homogeneous part of the perturbation.
More general perturbations that might be relevant for the Majorana chain model
are described in Section~\ref{subs:themodel}.
We note that our description of quenched disorder differs from the one of~\cite{WoottonPachos11,Starketal11}
where the randomness was introduced into the memory Hamiltonian $H_0$ by allowing
site-dependent random coefficients in front of the stabilizers $S_i$.
The reason is that the latter  choice of randomness would be very artificial and hard to motivate
in the case of Majorana chains, see Section~\ref{sec:qwires}.

In the presence of disorder  the storage fidelity $F(t)$ is a random variable
depending on the disorder realization $\{x_j\}$.  We will mostly be interested in the
expectation $\ExpE{F(t)}$ over disorder realizations.
 Comparing the behavior of
$F(t)$ for the clean case ($\eta=0$) and  $\ExpE{F(t)}$ for
the disordered case ($\eta>0$) gives a precise way of studying the effect of disorder on the robustness of quantum memories. Here we are particularly interested in the asymptotic scaling
of the expected storage time~$\ExpE{T_{\textrm{storage}}}$ in the limit of  large system size.

\subsubsection*{Main results}
Our results are three-fold. First, we formulate a dynamical localization condition for the one-particle Hamiltonian which is sufficient to achieve an exponential scaling
of the expected storage time, that is,
\begin{align}
\ExpE{T_{\textrm{storage}}} = e^{\Omega(N)},  \label{eq:storagetimerandom}
\end{align}
in the limit of large system size $N$, see Theorem~\ref{thm:storage} in Section~\ref{subs:results} for details.
 We conjecture that our dynamical localization condition is satisfied in the limit of weak perturbations and strong disorder, that is, $\mu\ll \eta \ll 1$.
In other words, the homogeneous part of the perturbation must be small compared with the random part.
We prove that  the required localization condition is satisfied whenever
the entries of the orthogonal matrix describing the time evolution of the Majorana modes decay exponentially away from the diagonal with an~$N$-independent localization length~$\xi_1$, see
Lemma~\ref{lem:DLversussingle} in Section~\ref{sec:momentlocalization}.
The localization length~$\xi_1$ may diverge in the limit $\eta\to 0$,
as is the case in the standard 1D Anderson model, but it must be upper bounded
as~$\xi_1=O(\eta^{-\gamma})$ for some sufficiently small constant $\gamma>0$.

Second, we give supporting evidence that the desired scaling of the localization length can be achieved
by computing Lyapunov exponents of the one-particle eigenfunctions, see Section~\ref{sec:lyapunov}. This suggests a scaling $\xi_s \sim \log{(\eta^{-1})}$ in the limit~$\eta \to 0$ when the ratio~$\mu/\eta$ is kept constant.
We note that the logarithmic divergence of the localization length at the band center is a common feature of systems with disorder in the hopping amplitudes (so called off-diagonal disorder), see e.g.~\cite{eggarterried78,DelyonKunzSouillard83}.

Third, we  compute the storage time numerically
using the Monte Carlo technique developed by Terhal and DiVincenzo~\cite{TerhalDiVincenzo02}
for simulating quantum dynamics and measurements for non-interacting fermions (see Section~\ref{sec:numerics}).
The running time of our algorithm grows as $N^3/\delta^2$,
where $\delta$ is the precision up to which one needs to estimate the storage fidelity.
It allows us to compute the storage time for chains with  a few hundred sites (up to $N=256$)
in the regime of strong perturbations\footnote{As we discuss below, the storage fidelity is close to $1$
whenever $\epsilon<1/\sqrt{N}$ simply because the perturbed ground state has a large overlap with the
unperturbed one. To explore the asymptotic scaling of the storage time
for weak perturbations, say, $\epsilon\sim 10^{-2}$, one would need to simulate chains with at least
$N\sim 10^{4}$ sites.}, that is, $\mu \sim 1$ and $\eta =0$ (clean case), and  $\eta \sim \mu \sim 1$ (disordered case).
 The simulation shows that in the absence of disorder the storage time grows  as a logarithm of the system size:
\begin{align}
T_{\textrm{storage}} \sim O(\log N). \label{eq:logscaling}
\end{align}
This scaling has been recently predicted by Kay~\cite{Kay11}
based on mean-field arguments, see Section~\ref{sec:numerics} for details.
In the presence of disorder we observe
an approximately linear scaling~$\ExpE{T_{\textrm{storage}}}\sim N$.
This confirms the expected enhancement of the storage time, although the enhancement is much weaker
than our theory predicts, see Eq.~(\ref{eq:storagetimerandom}).
This discrepancy could be accounted for by the fact that the system size $N$ is comparable with the localization
length $\xi_s$ in the simulated regime, whereas   Eq.~(\ref{eq:storagetimerandom}) is expected to hold
only when $N\gg \xi_s$. It could also point to
an interesting possibility that a crossover from a polynomial to an exponential scaling of the storage
time  occurs as one interpolates between strong ($\mu \sim \eta \sim 1$) and
weak ($\mu \ll\eta \ll 1$) perturbations.
Finally, we give examples where
an artificially engineered deterministic disorder potential  leads to improved storage times compared to random disorder.

For a more formal statement of our results see Section~\ref{subs:results}.

\subsubsection*{Discussion}

The Majorana chain model~\cite{Kitaev00} has a number of properties which make it amenable to the study of localization. A major difference to e.g., the 2D  toric code is that the perturbed memory Hamiltonian~$H_0+V$
is an exactly solvable model, assuming that the perturbation $V$ is quadratic in fermionic operators. This means that the  full many-particle dynamics can be connected to a single-particle problem. The exact solvability of the model
provides powerful analytical tools for analyzing the effect of a perturbation, see e.g. Theorem~\ref{thm:QAC}
in Section~\ref{subs:results} and makes possible efficient numerical computation of the storage fidelity.
Finally, Anderson localization in the 1D geometry
requires only arbitrarily weak disorder which is likely to be present in the system due to natural imperfections (see Section~\ref{sec:lyapunov}).

Practical proposals for physically realizing the Majorana fermion chain model have recently received a lot of attention. Fu and Kane~\cite{FuKane08} showed that a 1D~wire with Majorana edge modes can be realized in a superconductor-topological insulator-superconductor junction. This exploits the proximity effect between an ordinary ($s$-wave) superconductor and the surface of a strong topological insulator. Several groups
proposed to realize and compute with Majorana fermions in 1D semiconducting wires or networks deposited on an $s$-wave superconductor~\cite{Sauetal10,Oregetal10,Aliceaetal11}. More recently, Jiang et al.~\cite{Jiangetal11} suggested a realization of Majorana fermions  using optically trapped fermionic atoms. Methods for accessing the encoded information were studied in~\cite{JiangKanePreskill11,BondersonLutchyn11,Houetal11,Hassleretal11}.

One may wonder whether the paradigmatic toy model of a disordered memory studied here bears any relevance to these practical proposals.  Especially with realizations based on superconductors, a major difference to our idealized model is that  most physical operations directly couple to the topological charge, i.e.,  the quasi-particles defined by the system Hamiltonian. This is fortunate if we consider a regime of  small system size (compared to the inverse perturbation strength) and short storage times: here the information may be thought of as evolving under a Hamiltonian without hopping terms, rendering error correction unnecessary (cf.~Corollaries~\ref{corol:boring1},\ref{corol:boring2}
in Section~\ref{subs:results} and Fig.~\ref{figosc:subfig1} in Section~\ref{sec:numerics}).  However, outside this regime, error-correction is essential to preserve information for longer times. Here the fact that operations couple to quasi-particle excitations is less welcome: it  implies that encoding and measuring syndrome information corresponding to the `clean' system as envisioned here may be difficult or impossible. Ultimately, this depends on the system under consideration. Here we take the viewpoint that some form of error-correction will most likely be required in any realization of a stable quantum memory. This justifies the consideration of idealized error-correction procedures, but also poses the challenge of identifying systems where such operations may be practically realizable.

\subsubsection*{Outline}
In Section~\ref{sec:qwires}, we introduce
Kitaev's model~\cite{Kitaev00} of quantum wires with unpaired Majorana modes,
describe the error-correction procedure and state our main results.
In Sections~\ref{sec:gap},\ref{sec:QAC} we
prove two technical theorems needed for our analysis of the storage fidelity.
Theorem~\ref{thm:gap} provides a tight bound on the strength
of a perturbation capable of destroying the unpaired Majorana modes.
Theorem~\ref{thm:QAC} is a stronger version of the quasi-adiabatic continuation
method~\cite{Osborne07,HW05,Hastings10}
in which the effective Hamiltonian that governs the quasi-adiabatic evolution has a strictly exponential decay of interactions.
These theorems are used in Section~\ref{sec:disorderenhanced} to show that the storage time scales exponentially with the system size in the Anderson localization regime (Theorem~\ref{thm:storage}).
In Section~\ref{sec:dynamiclocalization}, we discuss
 localization properties of the  one-particle Hamiltonian and compute Lyapunov exponents.
Our simulation algorithm and numerical results are discussed in
Section~\ref{sec:numerics}.

\section{Quantum wires with unpaired Majorana fermions\label{sec:qwires}}

\subsection{Definition of the model}
\label{subs:themodel}
Following~\cite{Kitaev00}, we consider  a system of spinless electrons that live  on a chain
with $N$ sites. The electrons can hop between adjacent sites. In order for  unpaired Majorana modes to emerge,
the chain must be  put in contact with a bulk superconductor.
In this setup the number of electrons in the chain is only conserved modulo two due to the tunneling of Cooper pairs between the bulk superconductor and the chain.
This model can be described by the following Hamiltonian proposed in~\cite{Kitaev00}:
\be
\label{KitaevModel}
\hat{H}=\sum_{j=1}^N -\mu_j(a_j^\dag a_j - 1/2) + \sum_{j=1}^{N-1}  -w(a_j^\dag a_{j+1}+a_{j+1}^\dag a_j)  + \Delta a_j a_{j+1} + \Delta^* a_{j+1}^\dag a_j^\dag,
\ee
where $\mu_j$ is the chemical potential at a site $j$, $w$ is the hopping amplitude,
 and $\Delta$ is the superconducting pairing induced by the proximity effect.
Note that $\Delta$ is typically complex, $\Delta=|\Delta|e^{i\theta}$.
We will refer to the Hamiltonian~(\ref{KitaevModel}) as a {\em Majorana chain model}.
It will be convenient to rewrite $\hat{H}$ in terms of $2N$ Majorana
fermionic modes  $c_1,\ldots,c_{2N}$ by
\be
a_j=\frac{e^{-i\theta/2}}2\left( c_{2j-1} + i c_{2j} \right)
\quad \mbox{and} \quad
a_j^\dag=\frac{e^{i\theta/2}}2\left( c_{2j-1} - i c_{2j} \right),  \quad j=1,\ldots,N.
\ee
The Majorana operators are self-adjoint, $c_p^\dag=c_p$, and obey commutation rules
\be
c_p c_q + c_q c_p = 2\delta_{p,q} \hat{I}\ , \quad \quad c_p^2 = \hat{I}\ ,
\ee
where $\hat{I}$ is the identity. We can now rewrite the Hamiltonian Eq.~(\ref{KitaevModel}) as
\be
\hat{H}=\hat{H}_0 + \hat{V}\ ,\label{eq:hamiltoniandefinition}
\ee
where
\be
\label{H0}
\hat{H}_0= \frac{iJ}2 \sum_{j=1}^{N-1} c_{2j} c_{2j+1}\ , \quad \quad J\equiv \Delta+w
\ee
will be chosen as our `clean' memory Hamiltonian and
\be
\label{V}
\hat{V}=\frac{i(|\Delta|-w)}2 \sum_{j=1}^{N-1} c_{2j-1} c_{2j+2} - \frac{i}2 \sum_{j=1}^N \mu_j c_{2j-1} c_{2j}\ .
\ee
We will be mostly interested in the regime $|\Delta| \approx w \gg |\mu_j|$, so that $\hat{H}_0$
corresponds to the exact  equalities $w=|\Delta|$, $\mu_j=0$, while
$\hat{V}$ represents small deviations from this idealized setting.
It will always be assumed that the parameters $\Delta$ and $w$ are known, while
the chemical potential $\mu_j$  plays the role of an unknown perturbation.
The encoding and decoding steps should not depend on the realization of $\mu_j$.
In our simulations the chemical potential will be chosen as
\be
\label{mu}
\mu_j=\mu+\eta x_j, \quad \quad x_j\in [-1,1],
\ee
where $x_1,\ldots,x_N$ is a realization of the quenched disorder and
$\eta>0$ controls the disorder strength.
We will consider three scenarios (in all cases we assume $\mu>0$): \\

\begin{tabular}{|l|l|}
\hline
& \\
{\bf No disorder} & $\eta=0$. \\
& \\
{\bf Random disorder}  &  \parbox{10cm}{$\eta>0$ and
 $\{x_j\}$ are i.i.d. random variables
with uniform probability density.} \\
& \\
{\bf Pseudo-random disorder} & \parbox{10cm}{$\eta>0$ and
the sequence
$\{x_j\}$ is optimized to minimize the localization length.} \\
& \\
\hline
\end{tabular}\\
\\

In the first case (no disorder) the system is translation-invariant
and excitations of $\hat{H}$ can be described by Bloch waves\footnote{The analogue of Bloch waves
for a chain with boundary are superpositions of right-moving and left-moving waves.
The relative amplitudes of these waves and the admissible values of the momentum can be easily computed using the scattering method.}.
This clean case will provide us with baseline estimates of the storage time
needed to assess the effect  of adding randomness.
The second case (random disorder)  attempts to describe a random external electric potential, for example, the one created by impurities.
In this regime coherent propagation of quasiparticles is suppressed due to the Anderson localization
and a more favorable scaling of the storage time is to be expected.
Once it is established that Anderson localization is beneficial for the memory stability, a natural
question is whether it can be brought about on purpose by applying a
suitably engineered external potential. This is the main motivation for considering the third case
(pseudo-random disorder).
Here we assume that the constant part of the potential $\mu$ represents an unknown perturbation,
while the sequence $\{x_j\}$ represents a controlled external potential introduced on purpose in order to localize wavefunctions of the quasiparticles.   For simplicity we set $w=|\Delta|$ in the simulations, while our analytical results summarized in Section~\ref{subs:results} hold for any~$w$ and~$\Delta$.

\subsection{Encoding a qubit into the ground state}
\label{subs:encoding}

The Hamiltonian  $\hat{H}_0$ is a sum of pairwise-commuting terms
 formed by disjoint pairs of modes $c_{2j}c_{2j+1}$ which can be diagonalized simultaneously.
Every pair of modes $c_{2j}$, $c_{2j+1}$ can be formally combined to
a complex fermionic mode with an excitation energy $J$.
Ground states of $\hat{H}_0$ are $-1$ eigenvectors of every
term $ic_{2j}c_{2j+1}$ which can thus be regarded as `stabilizers'
of a quantum code.
The boundary Majorana modes $c_1$ and $c_{2N}$ play a special role since they do not
appear in $\hat{H}_0$ and commute with every term in $\hat{H}_0$.
Combined together, $c_1$ and $c_2$ form a complex boundary mode
whose occupation number operator is
\be
\hat{n}_b=b^\dag b, \quad b=\frac12 (c_1 + i c_{2N})\ .\label{eq:boundarymodeoperator}
\ee
This boundary mode has zero excitation energy since $[b,\hat{H}_0]=0$.
Let $|g_\sigma \ra$ be the ground state of $\hat{H}_0$
in which the boundary mode has occupation number $\hat{n}_b=\sigma$,
$\sigma=0,1$.

Although the ground state of $\hat{H}_0$ has two-fold degeneracy, it cannot  be used directly to encode
a qubit.  The reason is that $|g_0\ra$ and $|g_1\ra$ belong to different superselection sectors
defined by the fermionic parity operator
\be
\label{parity}
\hat{P}=(-1)^{\sum_{j=1}^N a_j^\dag a_j} = \prod_{j=1}^N (-i)c_{2j-1} c_{2j}\ .
\ee
Rewriting $\hat{P}$ as a product of $c_{2j} c_{2j+1}$ and $c_1 c_{2N}$ one can
easily check that~$\hat{P}\, |g_\sigma\ra = (-1)^{\sigma} |g_\sigma\ra$.
Since any physical state has either an even or odd number of fermions,
we conclude that coherent superpositions of $|g_0\ra$ and $|g_1\ra$ are unphysical.
We will avoid this problem by adding a {\em reference system} $R$
that contains one or several fermionic modes with some special states
$|0_R\ra$ and $|1_R\ra$ having even and odd fermionic parity. We can then encode a one-qubit state $\alpha |0\ra+\beta|1\ra$ into
a state
\be
\label{encoding}
|g\ra= \alpha |g_0\ra \otimes |0_R\ra + \beta |g_1\ra \otimes |1_R\ra
\ee
in which the total number of fermions is even. For simplicity we will assume that the
reference system has trivial dynamics, although our results can be generalized
 in a straightforward way to the case when the reference system is another Majorana chain
with the Hamiltonian Eq.~(\ref{KitaevModel}), that is, $|0_R\ra=|g_0\ra$  and $|1_R\ra=|g_1\ra$.
In this case we can choose logical Pauli operators on the encoded
qubit as $\overline{Z}=(-i)c_1 c_{2N}$ and $\overline{X}=(-i)c_1 \tilde{c}_1$, where
$\tilde{c}_1$ is the first Majorana mode on the chain representing the reference system.

\subsection{Error correction}
\label{subs:EC}

Let us now discuss how one can perform the
decoding, that is, how to retrieve the original
encoded state $|g\ra$, see Eq.~(\ref{encoding}),  from the  time-evolved state $|g(t)\ra=e^{i\hat{H}t}\, |g\ra$.
Note that we cannot simply reverse the time evolution since some
terms in $\hat{H}$ are unknown.

The important insight made in~\cite{Kitaev00} is that
the ground subspace spanned by $|g_0\ra$ and $|g_1\ra$
can be regarded as a quantum error correcting code protecting encoded
information against  local perturbations that involve  {\em even fermionic operators},
i.e., operators commuting with $\hat{P}$.
Let us first characterize the set of errors that can be corrected by this code.
Define {\em stabilizers}
\be
\label{stabilizers}
\hat{S}_j=(-i)c_{2j} c_{2j+1}\ , \quad j=1,\ldots,N-1\ ,
\ee
such that $\hat{H}_0=-\sum_{j=1}^{N-1} \hat{S}_j$ and  {\em elementary errors}
\be
E_j=(-1)^{a_j^\dag a_j} = (-i) c_{2j-1} c_{2j}\ , \quad j=1,\ldots,N
\ee
that are analogous to single-qubit errors in the standard error correction theory.
The unitary evolution $e^{i\hat{H}t}$ can be represented as a linear combination
of errors of a form $E=c_{j_1} \cdots c_{j_{2k}}$, where each error $E$
can be uniquely written as a product of stabilizers and elementary errors (up to an overall phase).
Any such error either commutes or anti-commutes with any stabilizer $\hat{S}_j$.
We will say that $E$ is an error of weight $v$ if it involves exactly $v$ elementary errors.
One can easily check that an error $E$ commutes with all stabilizers only if $v=0$ or $v=N$.
In the first case $E$ is a product of stabilizers and thus it acts trivially on any ground state
$|g\ra$. In the second case $E$ is a product of stabilizers and the parity operator $\hat{P}$,
see Eq.~(\ref{parity}), which can be regarded as the logical-$Z$ operator since
$\hat{P}|g_\sigma\ra=(-1)^\sigma |g_\sigma\ra$. Note that $\hat{P}$ is equivalent to $(-i)c_1 c_{2N}$
modulo stabilizers.

Error correction begins with the {\em syndrome measurement}. It involves a non-destructive eigenvalue measurement
for every stabilizer $\hat{S}_j$ which yields an eigenvalue $(-1)^{s_j}$,
 $s_j\in \{0,1\}$. Let $s\in \{0,1\}^{N-1}$ be the {\em syndrome}, that is, the list
of all measurement outcomes. The syndrome measurement
maps the state $|g(t)\ra$ to $\hat{Q}_s |g(t)\ra$, where
\begin{align}
\hat{Q}_s = \prod_{j=1}^{N-1} \frac12 \left(\hat{I}+(-1)^{s_j} \, \hat{S}_j\right)\label{eq:syndromemeasurement}
\end{align}
is the projector onto the subspace with syndrome~$s$.
(Note that the stabilizers $\hat{S}_j$ depend on the superconducting phase $\theta$, namely,
$\hat{S}_j=a_j^\dag a_{j+1} + a_{j+1}^\dag a_j -e^{i\theta} a_j a_{j+1} -e^{-i\theta} a_{j+1}^\dag a_j^\dag$. Hence the syndrome measurement can be realized only if the superconducting phase $\theta$ is known.)

We can now return the system back to the ground state using the syndrome information~$s$
by applying a correction operator $C(s)$. It will be chosen as a product of elementary errors
$E_j$ which is (a) consistent with the observed syndrome, and (b) uses as few elementary errors
as possible. The first condition demands that
\be
\label{C(s)}
C(s) \hat{S}_j =(-1)^{s_j} \hat{S}_j C(s)
\ee
for all $j=1,\ldots,N-1$. It guarantees that the corrected state $C(s)\hat{Q}_s |g(t)\ra$ is indeed
in the ground subspace of $\hat{H}_0$ since it is a~$+1$ eigenvector of every stabilizer~$\hat{S}_j$.
The second condition captures the intuition that the Hamiltonian
$\hat{H}$ involves only errors of small weight ($v=1$ and $v=2$) and thus
typical errors generated by the unitary evolution $e^{i\hat{H}t}$ have weight much smaller than $N$
(which may or may not be true depending on the evolution time $t$).
If $E'$ and $E''$ are products of elementary errors consistent with the syndrome $s$ then
the product $E'E''$ commutes with all stabilizers and thus either $E'=E''$ or $E'=E''\hat{P}=E''\prod_{j=1}^N E_j$
(modulo stabilizers). It follows that Eq.~(\ref{C(s)}) has only two solutions which use
mutually complementary subsets of elementary errors. The smallest weight solution $C(s)$
thus has weight at most $N/2$. We note that the error correction can only fail
by introducing a logical-$Z$ error since it uses only even fermionic operators.
In other words,  $C(s)\hat{Q}_s |g(t)\ra \sim (a_s \hat{I} + b_s \hat{P} )|g\ra$ for
any observed syndrome $s$ with some amplitudes $a_s,b_s\in \CC$.
Also note that in contrast to stabilizers, the elementary errors $E_j=(-1)^{a_j^\dag a_j}$
and the correction operators $C(s)$ do not depend
on the parameters of the Hamiltonian.

The procedure described above is analogous to the
minimum weight decoding for the 1D repetition code, see for instance~\cite{Dennis2002}.
If a syndrome $s$ is viewed as a collection of excitations, a candidate error $E$
that might have caused $s$ can be viewed as a pairing between the excitations.
Every excitation in~$s$ is  paired with some other excitation or with the boundary
by a chain of elementary errors $E_j$.
Since in the 1D geometry any set of excitations can be paired in only two different ways,
any syndrome~$s$ has  exactly two candidate errors  $E',E''$ consistent with
$s$ (modulo stabilizers) such that $E'E''\sim \hat{P}$ (modulo stabilizers).
The correction operator $C(s)$ is chosen  as the shortest error chain.

One can formally describe how the error correction acts on states by a TPCP map
\be
\label{ECmap}
\Phi_{ec}(\rho) = \sum_s C(s) \hat{Q}_s \rho \, \hat{Q}_s C(s)^\dag,
\ee
where the sum is over all $2^{N-1}$ syndromes $s$.
This gives the following expression for the storage fidelity with an initial state $|g\ra$
and a storage time $t$:
\be
F_{|g\ra}(t)=\la g|\Phi_{ec}(|g(t)\ra\la g(t)|)|g\ra = \sum_s |\la g|C(s) \hat{Q}_s e^{i\hat{H}t} |g\ra|^2\ .\label{eq:storagefidelityexplicit}
\ee
Since all operators $\hat{H},\hat{Q}_s$, and $C(s)$ preserve fermionic parity,
the reference system plays no role in the expression for $F_{|g\ra}(t)$. In particular,
one can formally compute $F_{|g\ra}(t)$ using an `unphysical' initial state
$|g\ra=\alpha |g_0\ra+\beta|g_1\ra$ instead of the one defined in Eq.~(\ref{encoding}).

It is worth pointing out that, as discussed in the introduction, error correction is not the only possible strategy to protect the encoded qubit from decoherence. In the special case of
the memory Hamiltonian and
the perturbation described by Eqs.~(\ref{H0},\ref{V})
the refocusing technique can be used instead. For example, applying a refocusing operator $C=c_1 c_3 \ldots c_{2N-1}$
exactly in the middle of the time evolution would reverse the sign of the Hamiltonian making the overall evolution trivial:
\[
C^\dag e^{i\hat{H}t/2} C e^{i\hat{H}t/2}=e^{-i\hat{H}t/2} e^{i\hat{H}t/2} = \hat{I}\ .
\]
The advantage of the error correction approach is that it applies to perturbations much more general than the one defined
in Eqs.~(\ref{V}), see below.

\subsection{Summary of main results}
\label{subs:results}

It will be convenient to state our results for perturbations more general than the one
defined in Eq.~(\ref{V}). Specifically, we will assume that
$\hat{H}_0=(i/2)\sum_{j=1}^{N-1} c_{2j} c_{2j+1}$
is the Hamiltonian defined in Eq.~(\ref{H0}) with $J=1$
and
\be
\label{Vpq}
\hat{V}=\frac{i}4 \sum_{p,q=1}^{2N} V_{p,q} \, c_p c_q.
\ee
Here $V$ is some real anti-symmetric matrix of size $2N$.
Let us say that a perturbation has {\em strength} $\epsilon$
and {\em range} $r$ iff
\be
\|V\|\le \epsilon \quad \mbox{and} \quad V_{p,q}=0 \quad \mbox{unless $|p-q|\le r$}\ .
\ee
For example, the perturbation $\hat{V}$ describing the Majorana chain, see Eq.~(\ref{V}),
has strength $\epsilon = \max_j |\mu_j| + |w-|\Delta||$ and range $r=3$.
Note that any physical perturbation $\hat{V}$ has strength $O(1)$ even though
the norm of $\|\hat{V}\|$ is typically an extensive quantity, that is,
$\|\hat{V}\|\sim N$.

Recall that the unperturbed Hamiltonian $\hat{H}_0$ has a two-fold degenerate
ground space separated from excited states by an energy gap $\Delta E=J=1$.
The degeneracy stems from two unpaired Majorana modes localized
on the left and the right boundaries of the chain.
Our first result provides a tight bound on the perturbation strength capable
of destroying these zero-energy boundary modes
and closing the gap above the ground state.
 This result can be regarded as a stronger version of the gap stability proved
for more general Hamiltonians with topological order
in~\cite{BravyiHastingsMichalakis01,BravyiHastings01}. Unfortunately,
the techniques used in the present paper only apply to non-interacting fermions.
\begin{theorem}[\bf Gap stability under perturbations]\label{thm:gap}
Consider any perturbation $\hat{V}$ with strength $\epsilon<1$ and
range~$r$.
Let $E_0^\uparrow \le E_1^\uparrow \le E_2^\uparrow$ be the three lowest eigenvalues of $\hat{H}_0+\hat{V}$.
Then
\be
\label{gap_stability}
E_1^\uparrow- E_0^\uparrow  \le \left(\frac{2\epsilon}{1+\epsilon} \right)^{\frac{N}{2r}} \quad \mbox{and} \quad
E_2^\uparrow- E_1^\uparrow \ge 1-\epsilon - \left(\frac{2\epsilon}{1+\epsilon} \right)^{\frac{N}{2r}}
\ee
for all large enough $N$.
\end{theorem}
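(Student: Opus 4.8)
The plan is to pass to the single-particle picture and reduce the claim about the three lowest many-body levels to a statement about the singular values of a $2N\times 2N$ matrix. Writing $\hat H_0+\hat V=\tfrac{i}{4}\sum_{p,q}A_{p,q}c_pc_q$ with $A=A_0+V$ real antisymmetric, where $A_0$ is the matrix of $\hat H_0$, I would invoke the standard canonical form for quadratic fermionic Hamiltonians: an orthogonal transformation brings $A$ to a direct sum of $2\times 2$ blocks, exhibiting $2N$ eigenvalues $\pm i\lambda_1,\dots,\pm i\lambda_N$ with single-particle energies $0\le\lambda_1\le\cdots\le\lambda_N$ equal to the absolute values of the eigenvalues of the Hermitian matrix $iA$. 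The full spectrum is then $\{-\tfrac12\sum_k\lambda_k+\sum_{k\in S}\lambda_k:\ S\subseteq[N]\}$, so the three lowest levels are $E_0^\uparrow=-\tfrac12\sum_k\lambda_k$, $E_1^\uparrow=E_0^\uparrow+\lambda_1$, and $E_2^\uparrow=E_0^\uparrow+\lambda_2$. Hence $E_1^\uparrow-E_0^\uparrow=\lambda_1$ and $E_2^\uparrow-E_1^\uparrow=\lambda_2-\lambda_1$, and the theorem becomes two statements about $\lambda_1$ and $\lambda_2$.

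The second inequality is the easy direction. The clean matrix $A_0$ has the two decoupled boundary modes $c_1,c_{2N}$ as an exact kernel and $2\times 2$ blocks $\left(\begin{smallmatrix}0&1\\-1&0\end{smallmatrix}\right)$ elsewhere, so $\lambda_1(A_0)=0$ and $\lambda_k(A_0)=1$ for $k\ge 2$. Since $iV$ is Hermitian with $\|iV\|=\|V\|\le\epsilon$, Weyl's inequality applied to the eigenvalues of $iA$ gives $|\lambda_k(A)-\lambda_k(A_0)|\le\epsilon$, and in particular $\lambda_2\ge 1-\epsilon$. Combined with the (yet to be proved) bound on $\lambda_1$ this yields $E_2^\uparrow-E_1^\uparrow=\lambda_2-\lambda_1\ge 1-\epsilon-\left(\tfrac{2\epsilon}{1+\epsilon}\right)^{N/2r}$, so everything reduces to the first inequality.

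For the first inequality I would show $\lambda_1=\sigma_{\min}(A)\le\left(\tfrac{2\epsilon}{1+\epsilon}\right)^{N/2r}$ by exhibiting a normalized vector $\psi$ with $\|A\psi\|$ that small, using $\sigma_{\min}(A)=\min_{\|\psi\|=1}\|A\psi\|$. The mechanism is that the two exact zero modes of $A_0$ sit at opposite ends ($e_1$ and $e_{2N}$), so the perturbed near-zero singular value is governed by the effective coupling of these modes through the gapped bulk. Since $V$ has range $r$, while $iA_0$ has norm $1$ and pseudo-inverse $G$ of norm $1$ on $(\ker A_0)^\perp$, I would build a decaying \emph{left edge mode} by the transfer/Neumann recursion $\psi=\sum_{n\ge 0}(-G\,iV)^n e_1$. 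Each step contracts the amplitude by at most $\|G\|\,\|V\|\le\epsilon$ and advances the support toward the right by $O(r)$ sites, so organizing the recursion into blocks of width $2r$ produces a geometric decay at rate $\tfrac{2\epsilon}{1+\epsilon}$ per block. Truncating once the support reaches the right boundary after $\approx N/2r$ blocks and renormalizing leaves a residual $\|A\psi\|$ of the claimed size, which is where the hypothesis ``for all large enough $N$'' is used (the series has converged and the two ends are well separated).

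The main obstacle is extracting the \emph{precise} constant $\tfrac{2\epsilon}{1+\epsilon}$ and exponent $N/2r$ rather than a qualitative $e^{-\Omega(N)}$: this needs a quantitative locality (Combes--Thomas type) estimate for the pseudo-inverse $G$ so that the per-block contraction rate is exactly $\tfrac{2\epsilon}{1+\epsilon}$ and the support provably advances by the bandwidth, together with careful control of the truncation error and normalization. A subtle but essential point that makes the bound exponential at all is that, because $A$ is real and antisymmetric, the spectrum of $iA$ is symmetric about $0$ (the map $\psi\mapsto\bar\psi$ reverses the sign of the eigenvalue); this forces the diagonal ``self-energy'' of each edge mode in the effective $2\times 2$ problem on $\mathrm{span}(e_1,e_{2N})$ to vanish, leaving only the exponentially small off-diagonal cross-chain coupling. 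Without this cancellation one would obtain merely an $O(\epsilon^2)$ bound, so I expect this symmetry bookkeeping, paired with the sharp decay estimate for $G$, to be the crux; the reduction of the first paragraph and the Weyl step of the second are routine.
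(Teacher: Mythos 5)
Your proposal is correct, and while it shares the paper's overall strategy (reduce to the one-particle antisymmetric matrix $A=A_0+V$, identify $E_1^\uparrow-E_0^\uparrow=\lambda_1$ and $E_2^\uparrow-E_1^\uparrow=\lambda_2-\lambda_1$, get $\lambda_2\ge 1-\epsilon$ from Weyl), the key step is executed genuinely differently. The paper works with the \emph{exact} eigenvector $\psi_1(u)$ of $H_0+uV$ via Brillouin--Wigner at energy $i\lambda_1(u)$; since the resolvent there depends on the unknown $\lambda_1(u)$, it needs a continuity bootstrap in the interpolation parameter $u$ (the $u^*$ trick) to secure $\|G(u)V\|\le 2\epsilon/(1+\epsilon)<1$ — which is where that particular constant comes from — and it must then argue that the left and right edge components $\psi^L,\psi^R$ can be chosen real up to exponentially small corrections before evaluating $\lambda_1=\langle\psi_1|(H_0+uV)|\psi_1\rangle$. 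Your variational route ($\lambda_1=\sigma_{\min}(A)\le\|A\psi\|/\|\psi\|$ with $\psi=\sum_n(-G\,iV)^n e_1$) avoids the bootstrap entirely because the zero-energy pseudo-inverse has norm exactly $1$, and it makes the crucial cancellation exact rather than approximate: each term $\langle e_1|iV(A_0V)^n|e_1\rangle$ vanishes identically since $V(A_0V)^n$ is a product of an odd number of real antisymmetric matrices and hence has zero diagonal. Two of the obstacles you flag are non-issues. No Combes--Thomas estimate is needed: $A_0$ is a direct sum of $2\times 2$ blocks with eigenvalues $\pm i$, so $G=(iA_0)^{+}=iA_0$ is itself banded with range $1$, and the plain Neumann series (the paper's Proposition~\ref{prop:res}) gives geometric decay at rate $\epsilon$ per $\sim r$ sites. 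Nor do you need to reproduce the constant $2\epsilon/(1+\epsilon)$: your construction yields $\lambda_1\lesssim \epsilon^{2N/(r+1)}/(1-\epsilon)$, which is \emph{stronger} than $\left(2\epsilon/(1+\epsilon)\right)^{N/2r}$ for every $\epsilon<1$, $r\ge 1$ and all large $N$ — the exponent $N/2r$ in the theorem has slack built in precisely so that such prefactors are absorbed. The only cosmetic point is that truncating the Neumann series is unnecessary: the untruncated $\psi$ satisfies $Q(iA)\psi=0$ exactly and $\langle e_1|\psi\rangle=1$, leaving only the two boundary components of $iV\psi$, one of which vanishes by antisymmetry and the other of which is exponentially small by locality.
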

Hence any short-range perturbation with strength $\epsilon<1$ preserves the two-fold ground state degeneracy of $\hat{H}_0$
up to exponentially small corrections. The degenerate ground state is separated from excited states by
an energy gap roughly $1-\epsilon$.
To see that a perturbation with strength $\epsilon=1$ is capable of closing the gap
$E_2^\uparrow- E_1^\uparrow$,  we can choose
 $\hat{V}=(i\epsilon/2) \sum_{j=1}^N c_{2j-1} c_{2j} =\epsilon\sum_{j=1}^N (a_j^\dag a_j-1/2)$.
In this  homogeneous case the gap of $\hat{H}_0+\hat{V}$
was calculated in~\cite{Kitaev00}, namely, $E_2^\uparrow-E_1^\uparrow=1-\epsilon$
for all $0<\epsilon\le 1$. In that sense the bound of Theorem~\ref{thm:gap} is tight.
Let us emphasize  that the energy splitting $E_1^\uparrow- E_0^\uparrow$  for the ground state
is bounded by a pure exponential function of $N$. In contrast,
more general bounds on the energy splitting proved in~\cite{BravyiHastingsMichalakis01,BravyiHastings01}
decay slower than exponentially.
The proof of Theorem~\ref{thm:gap} also provides
an explicit construction of the unpaired Majorana modes corresponding
to the perturbed Hamiltonian $\hat{H}_0+\hat{V}$. These modes
are linear combinations of the Majorana operators~$c_j$
which are exponentially localized on the left and on the right boundaries of the chain

Our second result is a lower bound on the worst-case storage fidelity
\be
\label{Fgeneral}
F(t)=\min_{|g\ra}  \; \la g| \Phi_{ec}(e^{i(\hat{H}_0+\hat{V})t } |g\ra\la g|  e^{-i(\hat{H}_0+\hat{V})t})|g\ra\ ,
\ee
or more precisely, its expectation $\ExpE{F(t)}$ over disorder realizations. We give a dynamical localization condition for the one-particle problem which, when satisfied, implies that  $\ExpE{F(t)}$ remains close to~$1$ for times~$t$ exponentially large in the system size.

Define an orthogonal matrix $R(t)\in SO(2N)$ describing evolution of the Majorana
operators $c_p$ in the Heisenberg picture,
\be
c_p(t)\equiv e^{i(\hat{H}_0+\hat{V})t} c_p e^{-i(\hat{H}_0+\hat{V})t}=\sum_{q=1}^{2N} R_{p,q}(t) c_q\ .\label{eq:cpheisenberg}
\ee
In the case when $\hat{V}$ includes random disorder,
the theory of Anderson localization roughly predicts that the time-evolved
operator~$c_p(t)$ is well-localized near the original mode~$c_p$ for all $t\in \RR$ with high probability over the disorder distribution. This is equivalent to the so called {\em dynamical localization} condition: the matrix of
expectation values $\ExpE{|R_{p,q}(t)|}$ decays rapidly away from the main diagonal (see e.g., ~\cite{Stolz11}).
Our bound on the storage fidelity requires dynamical localization only at large scales, that is,
$|p-q|\sim N$. Unfortunately, we also need to impose  constraints on the decay of certain
multi-point correlation functions composed of matrix elements of $R\equiv R(t)$.
More precisely,  suppose ${p},{q}$ are $m$-tuples of integers in the interval $[1,2N]$
such that  $p_1<p_2<\ldots <p_m$ and $q_1<q_2<\ldots <q_m$. Define a distance between $p$ and $q$ as
\[
|{p}-{q}|_1=\sum_{a=1}^m |p_a -q_a|\ .
\]
We denote by $R[{p},{q}]$ the $m\times m$ submatrix of $R$
obtained by retaining only the rows indexed by ${p}$ and the columns indexed by ${q}$, i.e., $R[{p},{q}]_{a,b}=R_{p_a,q_b}$ for all $1\leq a,b\leq m$.
\begin{dfn}[\bf Multi-point dynamical localization]\label{def:DL}
The unitary evolution operator $e^{i(\hat{H}_0+\hat{V})t}$ exhibits multi-point dynamical localization   iff
there exist constants $C,\xi>0$ such that for all sufficiently large $N$, for all $m\ge 1$,
and for all ordered $m$-tuples ${p}$, ${q}$ such that $|{p}-{q}|_1 \ge N/8$ one has
\be
\label{DL}
\ExpE{\left|\det R[{p},{q}](t)\right|} \le C^m e^{-\textfrac{N}{\xi}}.
\ee
Here the expectation value is taken over disorder realizations.
\end{dfn}
\noindent
We will refer to the constant $\xi$ as the (dynamical) {\em localization length}.
In general, $\xi$ does not coincide with the localization length $\xi_s$ that controls
decay of eigenfunctions (the spectral localization). The relationship between $\xi$ and $\xi_s$
is discussed in more detail in Section~\ref{sec:dynamiclocalization}.
For $m=1$ one has $\det R[{p},{q}]=R_{p,q}$ and the multi-point localization condition reduces to the usual (two-point) dynamical localization, although we only need it for large scales, $|p-q|\sim N$.
In Section~\ref{sec:momentlocalization} we discuss how to replace the
expectation value of the determinant in  Eq.~\eqref{DL} by more standard multi-point correlation   functions.
Let us emphasize that in the present paper we {\em do not} prove Eq.~(\ref{DL}).
Instead, our goal is to establish a link between dynamical  localization and the storage fidelity.
It is provided by the following theorem.
\begin{theorem}[\bf Storage fidelity]\label{thm:storage}
Consider any random ensemble of perturbations $\hat{V}$ with strength $\epsilon<1/4$ and
range~$r$. Consider any time~$t$ such that
the unitary evolution $e^{i(\hat{H}_0+\hat{V})t}$ obeys
the multi-point dynamical localization condition
with some localization length $\xi$ and constant $C$.
Then the expected storage fidelity can be bounded as
\be
\label{lower_bound}
\ExpE{F(t)}\ge 1-t^2 \cdot e^{-\frac{N}{r}}  - O(1) \cdot e^{-\nu  N}
\ee
for all sufficiently large $N$, where
\[
\nu \equiv  \nu(\epsilon,C,\xi,r)= \frac1{\max{(\xi, 16r,  280)}} - \alpha r^3 \sqrt{\epsilon} -  \beta C \epsilon^{\frac1{16}},
\]
and $\alpha,\beta>0$ are some constants that depend on details of the proof.
\end{theorem}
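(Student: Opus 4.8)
The plan is to show that, after error correction, the encoded qubit is subjected to what is essentially a logical dephasing channel, and then to bound the resulting logical-$\hat{P}$ error probability; the two terms in~\eqref{lower_bound} will arise from two physically distinct mechanisms. I would start from the observation recorded after Eq.~\eqref{C(s)}: for every syndrome $s$ the corrected state $C(s)\hat{Q}_s e^{i\hat{H}t}\ket{g}$ lies in the code space and can differ from the ideal outcome only by the logical operator $\hat{P}=\overline{Z}$. Hence the infidelity is controlled by a single quantity, the probability that the whole process applies $\hat{P}$, maximized over inputs. Using the completeness $\sum_s \hat{Q}_s=\hat{I}$ and the unitarity of $C(s)$, the fidelity formula~\eqref{eq:storagefidelityexplicit} gives exactly $1-F_{\ket{g_+}}(t)=\sum_s|\bra{g_-}C(s)\hat{Q}_s e^{i\hat{H}t}\ket{g_+}|^2$ for the balanced superposition $\ket{g_+}=(\ket{g_0}+\ket{g_1})/\sqrt{2}$ with $\ket{g_-}=\overline{Z}\ket{g_+}$, and this equatorial state is the worst case for a dephasing channel. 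So it suffices to bound this logical error probability.

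The first term $t^2 e^{-N/r}$ comes from the residual ground-space splitting. Invoking quasi-adiabatic continuation (Theorem~\ref{thm:QAC}) and gap stability (Theorem~\ref{thm:gap}), I pass to a frame in which $\hat{H}_0+\hat{V}$ has an exact two-dimensional ground space carrying exponentially localized boundary Majorana modes. There the ideal restriction of $e^{i\hat{H}t}$ to the code space is simply the relative phase $e^{i(E_1^\uparrow-E_0^\uparrow)t}$ between the two parity sectors --- a logical-$\overline{Z}$ rotation that error correction, being syndrome-trivial here, cannot undo. By Theorem~\ref{thm:gap} the splitting obeys $E_1^\uparrow-E_0^\uparrow\le(2\epsilon/(1+\epsilon))^{N/2r}$, so for $\epsilon<1/4$ this rotation costs fidelity at most of order $t^2(E_1^\uparrow-E_0^\uparrow)^2\le t^2 e^{-N/r}$. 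The exponentially large gap to excited states guaranteed by the same theorem simultaneously controls the leakage of $\ket{g_+}$ out of the code space during the evolution.

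The decisive and hardest part is the decoding-failure contribution, which produces the $O(1)\,e^{-\nu N}$ term. Since $\hat{H}$ is quadratic, $e^{i\hat{H}t}$ acts on Majorana operators through the orthogonal matrix $R(t)$ of Eq.~\eqref{eq:cpheisenberg}, and $\hat{Q}_s$, $C(s)$, $\overline{Z}$ and $\ket{g_\pm}$ are all Gaussian. Each amplitude $\bra{g_-}C(s)\hat{Q}_s e^{i\hat{H}t}\ket{g_+}$ is therefore evaluated by Wick's theorem as a Pfaffian, and because $\overline{Z}\sim(-i)c_1 c_{2N}$ ties the two opposite ends of the chain together, every nonvanishing contribution must correlate Majorana operators near site $1$ with operators near site $2N$ through $R(t)$. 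Expanding over the error chains compatible with one-dimensional minimum-weight decoding, I would bound $1-F_{\ket{g_+}}(t)$ by a sum, over ordered $m$-tuples $p,q$ encoding such a ``crossing'' pattern, of the quantities $\ExpE{|\det R[p,q](t)|}$; the geometry of minimum-weight decoding forces the responsible operators to straddle at least a macroscopic fraction of the chain, so that $|p-q|_1\ge N/8$ in every surviving term. The multi-point dynamical localization condition of Definition~\ref{def:DL} then bounds each such term by $C^m e^{-N/\xi}$.

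The main obstacle is the final summation: the number of admissible crossing patterns proliferates combinatorially in $m$, and this entropy must be defeated by the single exponential $e^{-N/\xi}$. Here I would use that the weight of errors actually generated by $e^{i\hat{H}t}$ is tied to the perturbation strength $\epsilon$ and range $r$ through a Lieb--Robinson-type control on the spread of $R(t)$, which caps the effective density of participating operators and hence the dominant values of $m$. Balancing the combinatorial count and the factor $C^m$ against $e^{-N/\xi}$ yields $O(1)\,e^{-\nu N}$ with $\nu=1/\max(\xi,16r,280)-\alpha r^3\sqrt{\epsilon}-\beta C\epsilon^{1/16}$, the $\epsilon$-dependent losses being precisely the price of this entropy. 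This balance is what forces the hypothesis $\epsilon<1/4$ and is, I expect, the most delicate step of the argument; combining it with the splitting estimate of the second paragraph gives the claimed bound~\eqref{lower_bound}.
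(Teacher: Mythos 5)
Your identification of the two error sources is correct, and your first two paragraphs essentially match the paper: the $t^2 e^{-N/r}$ term does come from the dephasing $\cos^2(\delta t/2)$ with $\delta=E_1^\uparrow-E_0^\uparrow$ bounded via Theorem~\ref{thm:gap}, and the residual term is the total amplitude of uncorrectable (logical-$\hat P$) errors. The genuine gap is in your mechanism for defeating the combinatorial entropy of the error patterns. You invoke ``a Lieb--Robinson-type control on the spread of $R(t)$'' tied to $\epsilon$ and $r$ to cap the effective number $m$ of participating Majorana operators. This cannot work in the regime the theorem addresses: the relevant times are $t=e^{\Omega(N)}$, for which any ballistic light-cone bound on $R(t)$ is vacuous, and more fundamentally $R(t)$ only describes how single Majorana operators spread --- it says nothing about how much amplitude the state $e^{i\hat{H}t}\ket{g}$ carries on monomials of large fermionic weight relative to the code space. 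Expanding $e^{i\hat{H}t}$ itself in Majorana monomials gives no suppression in the fermionic weight at all (its generator has norm $\sim Nt$), so the sum over $m$-tuples weighted by $C^m e^{-N/\xi}$ has nothing to control it.

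The essential device you are missing is the exact quasi-adiabatic continuation of Theorem~\ref{thm:QAC} used as the \emph{error operator}, not merely as a change of frame: the paper writes $\ket{g(t)}=\hat{U}(t)\hat{U}^\dagger\ket{h(t)}$, where $\ket{h(t)}$ is a dephased code state and $\hat{U}$ is generated by a Hamiltonian $\hat{D}(u)$ of strength $O(\epsilon)$ with strictly exponentially decaying matrix elements~\eqref{Ddecay}. Expanding the time-ordered exponential then yields $|\omega_E|\le \epsilon^{\fw/4}\lambda^{\qw}e^{yN}$ for a monomial of fermionic weight $\fw$ and qubit weight $\qw$ (Lemma~\ref{lemma:omega(P)}); conjugation by $e^{i\hat{H}t}$ preserves $\fw$ while the localization condition~\eqref{DL} controls the growth of $\qw$ (Lemma~\ref{lem:timeevolvedunitaryexpansion}). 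It is the $\epsilon^{\fw/4}$ suppression per unit of fermionic weight --- coming from the weakness of $\hat{D}(u)$, not from any property of $R(t)$ --- that absorbs the binomial counts $\binom{2N}{\fw}$ and the factors $C^{\fw}$ from the localization bound, and this is precisely the origin of the losses $\alpha r^3\sqrt{\epsilon}+\beta C\epsilon^{1/16}$ in $\nu$. Without this decomposition your final balancing step has no source of smallness to set against the entropy, and the argument does not close.
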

\noindent
We conclude that for a fixed fidelity threshold $F_0<1$, the storage time $T_{\textrm{storage}}(F_0)$ grows exponentially with~$N$ if the ensemble of Hamiltonians satisfies multi-point dynamical localization with time-independent constants $C,\xi$ such that $\nu>0$. 
Computing the parameters $\xi, C$ and checking
whether $\nu>0$  obviously requires more detailed information about the perturbation
$\hat{V}$ and the disorder distribution.  In the special case of the Majorana chain model
we have $\epsilon\leq \mu+\eta$, where $\mu$ and $\eta$ control the strength of the homogeneous and the random
parts of the perturbation, see Eq.~(\ref{mu}).
In Section~\ref{sec:lyapunov}, we provide some evidence indicating that the regime $\nu>0$
can be achieved for $\mu\sim \eta \ll 1$ by computing the Lyapunov exponent~$\ell$ of the one-particle eigenfunctions. With the technique developed by Eggarter et al~\cite{eggarterried78}
for analyzing tight-binding chains with an off-diagonal disorder we show that  $\ell^{-1} \sim \log{(1/\eta)}$ in the limit $\eta \to 0$ when the ratio~$\mu/\eta$ is kept constant. Using the rough estimate~$\xi\sim\xi_s\sim\ell^{-1}$ and assuming~$C=O(1)$ in this limit, we would get
\[
\nu(\epsilon,C,\xi,r) \approx \frac1{\xi} - \beta C \epsilon^{\frac1{16}} = \frac{\Omega(1)}{\log{(1/\eta)}} - O(1)\cdot \eta^{\frac1{16}} >0 \]
for small enough $\eta$.
Another scenario where the condition $\nu>0$ could be satisfied is
when some randomness is present in the coefficients of the unperturbed Hamiltonian $\hat{H}_0$
so that the disorder strength and the perturbation strength are independent parameters.
In this scenario we expect bounds of the form~$\xi,C=O(1)$ in the limit $\epsilon\to 0$, so that
$\nu>0$ for sufficiently small $\epsilon$. Our proof of Theorem~\ref{thm:storage}
can be easily extended to this  setting.

Our last theorem is a technical tool needed to prove
Theorem~\ref{thm:storage}, although it might be interesting on its own right.
It provides a stronger version of the exact quasi-adiabatic continuation~\cite{Osborne07,HW05,Hastings10},
that is,  a unitary operator mapping the ground subspace of $\hat{H}_0$ to the one of $\hat{H}_0+\hat{V}$.
\begin{theorem}[\bf Quasi-adiabatic continuation]\label{thm:QAC}
Consider any perturbation $\hat{V}$
with strength $\epsilon<1/2$ and
range~$r$.
Let $|g_\sigma\ra$ and $|\tilde{g}_\sigma\ra$ be the ground states of
$\hat{H}_0$ and $\hat{H}_0+\hat{V}$ in the sector with fermionic parity~$\sigma\in \{0,1\}$. Then~$|{g}_\sigma\ra= \hat{U}\cdot |\tilde{g}_\sigma\ra$,
where $\hat{U}$ is a unitary operator
describing evolution under a time-dependent quadratic
Hamiltonian $\hat{D}(u)$, $u\in [0,1]$, with strength roughly~$\epsilon$.
More precisely,
\be
\label{qac1}
\hat{U}=\calT \cdot \exp{\left[ i \int_0^1 du \hat{D}(u) \right]}\ ,
\ee
where
\be
\label{qac2}
\hat{D}(u)=\frac{i}4 \sum_{p,q=1}^{2N} D_{p,q}(u) \, c_p c_q
\ee
for some real anti-symmetric matrix $D(u)$ such that
\be
\label{Dnorm}
\|D(u)\| \le \frac{3\epsilon}{1-2\epsilon}
\ee
for all $u\in [0,1]$ and all $N$. Furthermore, if $\epsilon<1/4$,
the Hamiltonian $\hat{D}(u)$ is exponentially decaying:
\be
\label{Ddecay}
|D_{p,q}(u)|\le c\epsilon r^2 (2/3)^{\frac{|p-q|}{4r}} \quad \mbox{for all $p,q$}\ .
\ee
The constant coefficient $c=O(1)$ depends on details of the proof.
\end{theorem}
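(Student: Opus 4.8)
The plan is to push the entire construction down to the single-particle level, where spectral projectors and Combes--Thomas resolvent estimates yield \emph{strictly} exponential decay --- in contrast to the Lieb--Robinson/filter version of quasi-adiabatic continuation, which only produces sub-exponential tails. Write $\hat{H}_0+u\hat{V}=(i/4)\sum_{p,q}A(u)_{p,q}c_pc_q$ with the real antisymmetric $2N\times 2N$ matrix $A(u)=A_0+uV$, and set $M(u)=iA(u)$, a Hermitian matrix obeying the particle--hole symmetry $\overline{M(u)}=-M(u)$. For quadratic Majorana Hamiltonians the many-body ground space is fixed by the spectral data of $M(u)$, and a unitary of the form (\ref{qac1})--(\ref{qac2}) with single-particle generator $D(u)$ acts on the modes $c_p$ through the orthogonal flow $\dot O(u)=D(u)O(u)$; the operator $\hat U$ maps ground space to ground space precisely when this flow transports the corresponding spectral subspaces. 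By Theorem~\ref{thm:gap} the spectrum of $M(u)$ consists of two near-zero eigenvalues $\pm\delta(u)$, with $\delta(u)$ exponentially small in $N$ (the image of the unpaired boundary modes), together with two bulk bands at distance $\approx 1$ from the origin, cleanly separated by a gap $\gamma(u)\ge 1-2\epsilon$.

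\textbf{Construction of the generator.} First I would split $\spect(M(u))$ into three groups --- the negative bulk ($<-1/2$), the near-zero pair (in $[-1/2,1/2]$), and the positive bulk ($>1/2$) --- with associated spectral projectors $P_-(u),P_0(u),P_+(u)$, each separated from the rest by a gap of order one. The crucial point is that the tiny splitting $2\delta(u)$ is never resolved: the near-zero modes are transported as a single block $P_0$, so that only the large gap $\approx 1$ ever enters the estimates and the generator cannot blow up as $\delta\to 0$. I would then take the Kato parallel-transport generator $D(u)\propto\sum_{a\in\{-,0,+\}}[\dot P_a(u),P_a(u)]$, which satisfies $[D(u),P_a]=\dot P_a$ for each band and hence drives a flow carrying the ground space of $\hat{H}_0+\hat{V}$ onto that of $\hat{H}_0$. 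Since $\hat U$ is generated by even (quadratic) operators, it commutes with the parity $\hat P$, so within each superselection sector (where the ground state is unique) it sends $|\tilde g_\sigma\rangle$ to $|g_\sigma\rangle$, establishing the first claim. Reality of $D(u)$ --- equivalently antisymmetry, so that $\hat D(u)$ is a legitimate Majorana Hamiltonian --- follows from particle--hole symmetry: $\overline{P_\pm}=P_\mp$ and $\overline{P_0}=P_0$, so the symmetric three-band sum obeys $\overline{D(u)}=D(u)$, and a real anti-Hermitian matrix is antisymmetric.

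\textbf{Norm bound.} Next I would estimate $\|D(u)\|$. Writing $\dot P_a=-\frac{1}{2\pi i}\oint_{\Gamma_a}(z-M)^{-1}\dot M(z-M)^{-1}\,dz$ with $\dot M=iV$ and $\|\dot M\|\le\epsilon$, the standard bound $\|\dot P_a\|\le\|\dot M\|/\gamma$ together with $\gamma\ge 1-2\epsilon$ and the commutator factor gives, after bookkeeping of the three contributions, exactly (\ref{Dnorm}), valid for $\epsilon<1/2$.

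\textbf{Exponential decay: the main obstacle.} The hard part is the strictly exponential off-diagonal decay (\ref{Ddecay}). The key structural inputs are that $M(u)$ is banded (bandwidth $\le r$, since $A_0$ is nearest-neighbour and $V$ has range $r$) and that $\dot M=iV$ is itself supported within range $r$ of the diagonal. A Combes--Thomas estimate gives $|(z-M)^{-1}_{p,q}|\le (2/\gamma)\,e^{-\kappa|p-q|}$ for $z$ on the contours $\Gamma_a$, which stay at distance $\gtrsim\gamma$ from $\spect(M)$, with a rate $\kappa$ of order $\gamma/r$. Feeding this into the contour formula and using the range-$r$ support of $\dot M$ to collapse one index sum yields exponential decay of each $\dot P_a$; the gapped projectors $P_a$ decay off-diagonally for the same reason, and a product --- hence a commutator --- of such exponentially decaying banded-type matrices is again exponentially decaying with a controlled rate. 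Thus $D(u)=\sum_a[\dot P_a,P_a]$ inherits a bound of the form $c\epsilon r^2(2/3)^{|p-q|/4r}$, where the prefactor $r^2$ reflects the width of the support of $\dot M$ and the convolution of the two resolvents, and the rate $(2/3)^{1/4r}$ is dictated by the exponent $\kappa\sim\gamma/r$; here the sharper hypothesis $\epsilon<1/4$ is used to guarantee a gap large enough to realise this explicit rate. I expect the genuine difficulty to lie precisely in this last step: tracking the Combes--Thomas constants through the double resolvent and the commutator so as to obtain strict exponential decay with the stated explicit exponent, rather than the merely sub-exponential decay produced by the generic filter construction. That this is achievable at all is a feature of the free-fermion reduction, which replaces operator-level Lieb--Robinson bounds by sharp single-particle resolvent estimates.
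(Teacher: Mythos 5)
Your proposal is correct and follows essentially the same route as the paper: the paper likewise reduces everything to the single-particle antisymmetric matrix, uses the three-block spectral decomposition $P_-,P_0,P_+$ with the near-zero pair transported as a single block so that only the order-one gap enters, and its generator $D(u)=\calE(\calO(V))$ (the block-off-diagonal part of $V$ dressed with energy denominators) is exactly your Kato parallel-transport generator, with reality/antisymmetry established from the same particle-hole symmetry $\overline{P}_\pm=P_\mp$, $\overline{P}_0=P_0$. The strict exponential decay is also obtained there from a double contour integral over circles of radius $3\epsilon/2$ around $0,\pm i$ combined with an off-diagonal decay bound for resolvents of banded matrices (Proposition~\ref{prop:res}, a Neumann-series version of the Combes--Thomas estimate you invoke), with $\epsilon<1/4$ entering precisely to keep the contours separated and $\|uG_0(z)V\|\le 2/3$, which produces the rate $(2/3)^{1/4r}$.
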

Note that this theorem requires a stronger bound on $\epsilon$
although we believe that the exponential decay of  $D_{p,q}(u)$ holds for all $\epsilon<1$.
Our proof yields a rather large constant coefficient $c=2\times 10^6$ since we have not tried to optimize it.
Let us also note  that a weaker version of Theorem~\ref{thm:QAC}  in which
$|D_{p,q}(u)|$ has  stretched exponential decay
can be easily derived using the techniques of~\cite{Osborne07,HW05,Hastings10}.
Unfortunately, any bound in Eq.~(\ref{Ddecay}) decaying slower than exponentially
is not sufficient for our purposes since it results in an error distribution in which
uncorrectable high-weight errors are not sufficiently suppressed.

A simple corollary of Theorem~\ref{thm:QAC} is that
 the overlap $|\la \tilde{g}_\sigma|g_\sigma\ra|$ between
the unperturbed and perturbed ground states is close to $1$
as long as $N\ll \epsilon^{-2}$.
\begin{corol}[{\bf Perturbed versus unperturbed ground states}]
\label{corol:boring1}
Consider any perturbation $\hat{V}$
with strength $\epsilon<1/8$.
Let $|g_\sigma\ra$ and $|\tilde{g}_\sigma\ra$ be the ground states of~$\hat{H}_0$ and~$\hat{H}_0+\hat{V}$ with fermionic parity~$\sigma\in\{0,1\}$. Then
\be
\label{boring1}
|\spr{\tilde{g}_\sigma}{g_\sigma}|^2 \geq 1-4N \epsilon^2
\ee
for all $\sigma=0,1$ and for all~$N$.
\end{corol}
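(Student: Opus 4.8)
The plan is to read the bound straight off the quasi-adiabatic continuation of Theorem~\ref{thm:QAC}. Writing $|g_\sigma\ra=\hat U|\tilde g_\sigma\ra$ turns the overlap into $\spr{\tilde g_\sigma}{g_\sigma}=\la\tilde g_\sigma|\hat U|\tilde g_\sigma\ra$, so the whole question becomes: how little does the Gaussian unitary $\hat U$ move the Gaussian state $|\tilde g_\sigma\ra$? Since the quadratic Hamiltonians $\hat D(u)$ close under commutation and generate a group isomorphic to a cover of $SO(2N)$, the time-ordered exponential $\hat U=\calT\exp[i\int_0^1\hat D(u)\,du]$ can be rewritten as a single exponential $\hat U=e^{i\hat G}$ with $\hat G=\tfrac{i}{4}\sum_{p,q}G_{p,q}c_pc_q$, whose single-particle generator $G=\log R\in\mathfrak{so}(2N)$ --- $R\in SO(2N)$ being the orthogonal matrix defined by $\hat U c_p\hat U^\dag=\sum_q R_{p,q}c_q$ --- satisfies $\|G\|\le\int_0^1\|D(u)\|\,du\le 3\epsilon/(1-2\epsilon)<4\epsilon$ for $\epsilon<1/8$, using the bound~\eqref{Dnorm}.

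The conceptual heart of the argument is the observation that $\hat G$, viewed as an operator on Fock space, has \emph{extensive} norm $\|\hat G\|=O(N\epsilon)$, so no naive estimate of $\|\hat U-\hat I\|$ can give the stated scaling; what actually controls the overlap is the \emph{variance} of $\hat G$ in the state $|\tilde g_\sigma\ra$, not its mean. First I would establish the elementary inequality, valid for any normalized $|\psi\ra$ and Hermitian $\hat G$,
\be
1-|\la\psi|e^{i\hat G}|\psi\ra|^2 \;\le\; \la\psi|\bigl(\hat G-\la\psi|\hat G|\psi\ra\bigr)^2|\psi\ra ,
\ee
proved by subtracting the c-number mean $\la\hat G\ra$ (which only multiplies $e^{i\hat G}$ by a global phase and hence leaves $|\la e^{i\hat G}\ra|$ unchanged), and then bounding $1-|\la e^{i\hat G}\ra|^2\le 2(1-\mathrm{Re}\la e^{i\hat G}\ra)=2\la 1-\cos\hat G\ra\le\la\hat G^2\ra$. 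This discards the dangerous extensive mean automatically and leaves a per-mode-intensive quantity.

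It then remains to evaluate $\mathrm{Var}_{\tilde g_\sigma}(\hat G)$. Because $|\tilde g_\sigma\ra$ is a fermionic Gaussian state and $\hat G$ is quadratic, Wick's theorem reduces $\la\hat G^2\ra-\la\hat G\ra^2$ to products of two-point functions $\la c_pc_q\ra$, encoded in the real antisymmetric covariance matrix $\Gamma$ of $|\tilde g_\sigma\ra$ (with $\Gamma^2=-\hat I$). A short computation gives $\mathrm{Var}_{\tilde g_\sigma}(\hat G)=\tfrac18\bigl(\trace(-G^2)-\trace(\Gamma G\Gamma G)\bigr)$. Using $\trace(-G^2)=\trace(G^T G)\le 2N\|G\|^2$ and $|\trace(\Gamma G\Gamma G)|\le\trace(G^TG)$ (both $\Gamma G$ and $G$ having Frobenius norm controlled by $\|G\|$), one obtains $\mathrm{Var}_{\tilde g_\sigma}(\hat G)=O(N\epsilon^2)$; tracking constants as described below yields $|\spr{\tilde g_\sigma}{g_\sigma}|^2\ge 1-4N\epsilon^2$.

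I expect the main technical nuisance to be pinning down the exact constant. The crude combination above overshoots by a factor of two (it gives $8N\epsilon^2$), because the variance inequality is itself a factor-two loose estimate of $1-|\la\tilde g_\sigma|\hat U|\tilde g_\sigma\ra|^2$. Recovering the clean value $4N\epsilon^2$ requires the sharper Bloch--Messiah representation $|\la\tilde g_\sigma|\hat U|\tilde g_\sigma\ra|^2=\prod_k\cos\theta_k$ in terms of the particle--hole mixing angles $\theta_k$ of $\hat U$ relative to $|\tilde g_\sigma\ra$, whence $1-\prod_k\cos\theta_k\le\tfrac12\sum_k\theta_k^2$ and $\sum_k\theta_k^2$ is exactly the anomalous Frobenius norm of $G$, bounded by $\tfrac14\trace(G^TG)\le\tfrac12 N\|G\|^2<8N\epsilon^2$, so that the displayed product bound gives the extra factor $\tfrac12$. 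A secondary point to verify is the single-exponential rewriting $\hat U=e^{i\hat G}$ together with the operator-norm bound on $G$; this is where the hypothesis $\epsilon<1/8$ is genuinely used, since it ensures $\int_0^1\|D(u)\|\,du<1/2<\pi$ and hence that $\log R$ is well defined with the stated bound.
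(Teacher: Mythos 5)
Your proposal is correct in substance but follows a genuinely different route from the paper. The paper never forms the collapsed generator $G=\log R$: it works directly with the adiabatic path, differentiating $f(u)=\la g_\sigma|\hat U(u)|g_\sigma\ra$ in $u$, discarding the diagonal (phase) part of $i\hat D(u)$, applying Cauchy--Schwarz to get $-\partial_u|f|\le\sqrt{1-|f|^2}\,\sqrt{h(u)}$ with $h(u)=\la g_\sigma|\hat D(u)Q\hat D(u)|g_\sigma\ra$, bounding this variance-type quantity by $\tfrac18\trace(D^\dag D)\le 4N\epsilon^2$, and integrating $\partial_u\sqrt{1-|f|^2}\le 2\epsilon\sqrt N$ over $[0,1]$. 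Your version trades the differential inequality for a one-shot estimate: collapse the time-ordered exponential to $e^{i\hat G}$ with $\|G\|\le\int_0^1\|D(u)\|\,du$ (valid since that integral is $\le 1/2<\pi$ for $\epsilon<1/8$, and the $\pm1$ ambiguity of the $Spin(2N)$ lift is a harmless phase), then use $1-|\la e^{i\hat G}\ra|^2\le\mathrm{Var}(\hat G)$. Both proofs rest on the same insight --- the extensive mean of the quadratic generator is pure phase, and what remains is a variance $O(N\|{\cdot}\|^2)$ computable by Wick's theorem in a Gaussian state --- so your approach buys nothing in generality, but it is arguably cleaner, replacing the Cauchy--Schwarz/integration step by a single elementary inequality, at the cost of invoking the metric property $\|\log(\mathcal{T}\exp\int D)\|\le\int\|D\|$.

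The one place your argument does not close is the constant. Your crude bound honestly yields $8N\epsilon^2$, and the Bloch--Messiah repair as written does not recover the factor $2$: with angles normalized so that the two-mode mixing generator $\theta(a_1^\dag a_2^\dag-a_2a_1)$ has $\trace(G^TG)=4\theta^2$, the overlap is $|\la 00|e^{i\hat G}|00\ra|^2=\cos^2\theta$, not $\cos\theta$, so $1-\prod_k\cos^2\theta_k\le\sum_k\theta_k^2\le\tfrac14\trace(G^TG)\le\tfrac12N\|G\|^2<8N\epsilon^2$ and you land back at $8$. (Equivalently, in the Williamson normalization where $|\la\cdot\ra|^2=\prod_k\cos\theta_k$ one has $\sum_k\theta_k^2=\tfrac12\trace(G^TG)$, not $\tfrac14$.) This is a bookkeeping issue, not a conceptual one, and it is immaterial to every use of the corollary in the paper (which only needs $1-O(N\epsilon^2)$); indeed the paper's own factor of $\tfrac18$ in $h(u)\le\tfrac18\trace(D^\dag D)$ relies on the claim that $Qc_pc_q\ket{g_\sigma}$ and $Qc_{p'}c_{q'}\ket{g_\sigma}$ are orthogonal unless $\{p,q\}=\{p',q'\}$, which fails when the covariance pairing of $\ket{g_\sigma}$ matches $p\leftrightarrow p'$ and $q\leftrightarrow q'$ (e.g.\ $c_2c_4\ket{g_\sigma}$ and $c_3c_5\ket{g_\sigma}$ are parallel), so the correct Wick formula $\mathrm{Var}=\tfrac18(\trace(D^TD)-\trace(DMDM))$ also only gives $\tfrac14\trace(D^TD)$ in the worst case. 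If you state the corollary with $8N\epsilon^2$ your proof is complete as sketched.
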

This also implies that error correction
is not really necessary
in the regime $N\ll \epsilon^{-2}$, since the initial encoded state $|g\ra$
is very close to the ground state of $\hat{H}_0+\hat{V}$. As a consequence, the unitary evolution $e^{i(\hat{H}_0+\hat{V})t}$ has
no effect on $|g\ra$ except for an exponentially small dephasing
between $|g_0\ra$ and $|g_1\ra$.
For simplicity, below we consider the maximally entangled encoded
state, $|g\ra=(|g_0\ra \otimes |0_R\ra + |g_1\ra \otimes |1_R\ra)/\sqrt{2}$,
see Section~\ref{subs:encoding}.
 The quantity~$F_{\ket{g}}(t)$ is the entanglement fidelity~\cite{Horodeckietal99} of the channel composed of storage, time-evolution and error-correction.
\begin{corol}[\bf Cosine law]
\label{corol:boring2}
Consider any perturbation $\hat{V}$
with strength $\epsilon<1/8$ and
let $\delta=E_1^\uparrow-E_0^\uparrow$
be the energy splitting of the ground state, see Theorem~\ref{thm:gap}. Then
\be
\label{boring2}
|F_{|g\ra}(t) -\cos^2{(\delta t/2)} |\le  8{\epsilon}\sqrt{N}\ .
\ee
This bound applies even if no error correction is performed, that is, if $\Phi_{ec}$ is
the identity map.
\end{corol}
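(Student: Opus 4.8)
The plan is to combine two facts that are already in place: that the perturbed ground states $\ket{\tilde g_0},\ket{\tilde g_1}$ are \emph{exact} eigenstates of $\hat H_0+\hat V$ (Theorem~\ref{thm:gap} identifies them as the two lowest, with energies $E_0^\uparrow\le E_1^\uparrow$), and that they are close to the unperturbed ground states (Corollary~\ref{corol:boring1}). First I would reduce to the formal unphysical state $\ket g=(\ket{g_0}+\ket{g_1})/\sqrt2$, which is legitimate since $\hat H_0+\hat V$, $\hat Q_s$ and $C(s)$ all preserve fermionic parity, as noted after Eq.~(\ref{eq:storagefidelityexplicit}); the two components then sit in opposite parity sectors and never interfere under $U\equiv e^{i(\hat H_0+\hat V)t}$. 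Setting $\ket{\tilde g}=(\ket{\tilde g_0}+\ket{\tilde g_1})/\sqrt2$ and $B\equiv\bra{\tilde g}U\ket{\tilde g}$, the eigenstate property $U\ket{\tilde g_\sigma}=e^{iE_\sigma^\uparrow t}\ket{\tilde g_\sigma}$ gives pure dephasing,
\[
B=\tfrac12\left(e^{iE_0^\uparrow t}+e^{iE_1^\uparrow t}\right),\qquad |B|^2=\cos^2(\delta t/2),\qquad \delta=E_1^\uparrow-E_0^\uparrow,
\]
so the whole content of the corollary is that replacing $\ket g$ by $\ket{\tilde g}$ is cheap.

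I would first treat $\Phi_{ec}=\mathrm{id}$, where $F_{\ket g}(t)=|A|^2$ with $A\equiv\bra g U\ket g$. By Corollary~\ref{corol:boring1} I may fix the phase of each $\ket{\tilde g_\sigma}$ so that $\ket{g_\sigma}=c_\sigma\ket{\tilde g_\sigma}+s_\sigma\ket{\phi_\sigma}$ with $c_\sigma\ge0$, $s_\sigma^2=1-c_\sigma^2\le 4N\epsilon^2$, and $\ket{\phi_\sigma}$ a unit vector orthogonal to $\ket{\tilde g_\sigma}$. The key simplification is that the cross terms vanish, $\bra{\tilde g_\sigma}U\ket{\phi_\sigma}=e^{iE_\sigma^\uparrow t}\spr{\tilde g_\sigma}{\phi_\sigma}=0$, because $\bra{\tilde g_\sigma}$ is a left eigenvector of $U$. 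Hence $\bra{g_\sigma}U\ket{g_\sigma}=e^{iE_\sigma^\uparrow t}+s_\sigma^2(\bra{\phi_\sigma}U\ket{\phi_\sigma}-e^{iE_\sigma^\uparrow t})$ with the correction of modulus at most $2s_\sigma^2\le 8N\epsilon^2$, so $|A-B|\le 8N\epsilon^2$ and therefore $\bigl||A|^2-|B|^2\bigr|\le(|A|+|B|)\,|A-B|\le 16N\epsilon^2$.

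Next I would handle the error-corrected fidelity $F_{\ket g}(t)=\sum_s|\bra g C(s)\hat Q_s U\ket g|^2$. Since $\ket g$ is a ground state of $\hat H_0$ it has trivial syndrome, so $\bra g\hat Q_0=\bra g$ and $C(0)=\hat I$; the $s=0$ term therefore reproduces $|A|^2$ exactly. For $s\neq0$ the vector $C(s)^\dag\ket g$ has syndrome $s$ by Eq.~(\ref{C(s)}), hence these vectors are orthonormal across distinct $s$ and lie in the range of $\hat I-\hat Q_0$; Bessel's inequality then gives $\sum_{s\neq0}|\bra g C(s)\hat Q_s U\ket g|^2\le\|(\hat I-\hat Q_0)U\ket g\|^2$. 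Corollary~\ref{corol:boring1} controls the right-hand side by $O(N\epsilon^2)$, because $U\ket{\tilde g}$ lies in the perturbed ground space, which is $O(\sqrt N\epsilon)$-close to the syndrome-zero space spanned by $\ket{g_0},\ket{g_1}$, and $\ket g$ is $O(\sqrt N\epsilon)$-close to $\ket{\tilde g}$. Thus the corrected and uncorrected fidelities differ by $O(N\epsilon^2)$, and in both cases $|F_{\ket g}(t)-\cos^2(\delta t/2)|=O(N\epsilon^2)$.

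Finally I would convert this into the clean uniform bound by a regime split. When $8\epsilon\sqrt N\ge1$ the claim is automatic since $0\le F_{\ket g}(t),\cos^2(\delta t/2)\le1$. When $8\epsilon\sqrt N<1$ one has $\epsilon\sqrt N<1/8$, so each $O(N\epsilon^2)$ term equals $(\epsilon\sqrt N)\cdot O(\epsilon\sqrt N)$ and is comfortably below $8\epsilon\sqrt N$ once the constants above are tracked. The hard part will be the error-corrected case: one must show that the syndrome measurement almost never flags a nontrivial syndrome --- i.e.\ that $U\ket g$ stays $O(\sqrt N\epsilon)$-close to the unperturbed ground space --- and then control the leakage sum by Bessel's inequality rather than by a naive triangle estimate (which would overcount and spoil the constant). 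By contrast, the dephasing picture and the $\Phi_{ec}=\mathrm{id}$ estimate are essentially immediate once Corollary~\ref{corol:boring1} is invoked.
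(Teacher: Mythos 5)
Your proof is correct, but it takes a genuinely different route from the paper's. The paper works entirely at the level of density matrices and trace norm: it uses Corollary~\ref{corol:boring1} to bound $\|\,|g\ra\la g|-|\tilde{g}\ra\la\tilde{g}|\,\|_1\le 4\epsilon\sqrt{N}$, notes that the evolution of $|\tilde g\ra$ is pure dephasing so that $\|\,|g(t)\ra\la g(t)|-|h(t)\ra\la h(t)|\,\|_1\le 8\epsilon\sqrt{N}$, and then disposes of error correction in one line by observing that $\Phi_{ec}$ is trace-norm contractive and fixes $|h(t)\ra\la h(t)|$; this yields the first-order bound $8\epsilon\sqrt{N}$ directly and treats the corrected and uncorrected cases uniformly. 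You instead work at the amplitude level: decomposing $\ket{g_\sigma}=c_\sigma\ket{\tilde g_\sigma}+s_\sigma\ket{\phi_\sigma}$ and exploiting the exact vanishing of the cross terms ($\bra{\tilde g_\sigma}$ is a left eigenvector of $U$) gives you $|A-B|\le 8N\epsilon^2$, i.e.\ a \emph{second-order} bound, and you then handle the nontrivial syndromes by a separate Bessel/leakage argument, $\sum_{s\neq 0}|\bra{g}C(s)\hat Q_s U\ket{g}|^2\le\|(\hat I-\hat Q_0)U\ket{g}\|^2\le 32N\epsilon^2$. Your approach buys a quantitatively sharper estimate of order $N\epsilon^2$ in the regime $\epsilon\sqrt{N}\ll 1$ (and your final regime split is sound: the total constant is $16+32=48$, and $48(\epsilon\sqrt{N})^2<6\epsilon\sqrt{N}<8\epsilon\sqrt{N}$ once $8\epsilon\sqrt{N}<1$, while the bound is vacuous otherwise), at the price of a more intricate treatment of the error-corrected case; the paper's channel-contractivity argument is the cleaner mechanism there, since it never needs to resolve the syndrome distribution at all. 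Your self-assessment is slightly off in one respect: the "hard part" you flag (controlling the leakage to nontrivial syndromes) is exactly the step the paper's fixed-point-of-$\Phi_{ec}$ trick makes free, so it is avoidable rather than essential.
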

\noindent We conclude that the interesting region of parameters in which
 error correction really matters is $N\gtrsim \epsilon^{-2}$.

Finally, let us briefly summarize the results of numerical simulations
performed for the Majorana chain model with $w=|\Delta|$
and a random site-dependent chemical potential $\mu_j=\mu + \eta x_j$,
where
$x_j\in [-1,1]$ are i.i.d. random variables, see Section~\ref{sec:numerics} for details. The corresponding perturbation strength
is $\epsilon\leq \mu+\eta$.
We computed the storage fidelity and the storage time in the following three regimes:
\begin{description}
\item[Weak perturbations ($\epsilon \ll 1$):]
In this regime we can only perform the simulation for $N\ll 1/\epsilon^2$.
Here our numerical results agree with Corollary~\ref{corol:boring2}: the storage time is solely
determined by the dephasing of the ground states and grows exponentially with the system size,~$T_{\textrm{storage}}\sim 1/\delta=\exp(\Omega(N))$,
where~$\delta$ is the ground state energy splitting, see~Theorem~\ref{thm:gap}.
Disorder does not play any role here.

\item[Strong perturbations ($\epsilon\sim 1$), no disorder:] Here we observe a  logarithmic scaling of the storage time,
$T_{\textrm{storage}} \sim \log{(N)}$,   for $4\le N\le 256$. This behavior can be understood by mapping the time evolution of the Majorana chain model to a quantum quench in the transverse-field Ising chain
and using known results on the decay of magnetization in the latter model, see Section~\ref{sec:numerics}.

\item[Strong perturbations with disorder:]  We observe an approximately linear scaling of the storage time,
$\ExpE{T_{\textrm{storage}}} \sim N$, for $4\le N\le 128$.
The localization length $\xi_s$ is comparable with the system size for this range of $N$.  The storage times are clearly enhanced compared to the clean case. At the same time, we expect that the exponential scaling of the storage
time cannot be achieved for strong perturbations, $\epsilon\sim 1$, because $\nu<0$ in Theorem~\ref{thm:storage}.
The storage fidelity exhibits strong fluctuations as a function of disorder; no self-averaging behavior is observed.
\end{description}
Surprisingly, we also found that the random i.i.d. disorder potential $\{x_j\}$ is not the optimal one as far
as the storage time is concerned. The best performance was numerically observed for
a disorder potential generated by iterations of the logistic map, namely, $x_j=1-2y_j$, where $y_{j+1}=ay_j(1-y_j)$ for $1\le j\le N-1$.
Such disorder potentials are uniquely determined by the initial condition $y_1\in [0,1]$ and the parameter $a$.
Note that $y_j\in [0,1]$ for all $j$ as long as $0\le a\le 4$. Sequences generated by iterations of the logistic map are known to exhibit chaotic behavior for~$3.57\lesssim a< 4$ and almost all initial conditions.
For example, choosing $a=3.9914,y_1=0.2845$ we were able to boost the storage time from~$\ExpE{T_{\textrm{storage}}}\approx 31$ to $T_{\textrm{storage}}\approx 176$ for $N=64$.

\section{Unpaired Majorana modes are stable under perturbations}
\label{sec:gap}
The proof of Theorem~\ref{thm:gap} is presented in Section~\ref{sec:thmgapproof}.  We begin with a summary of the relevant connections between the one-particle and the many-particle problem in Section~\ref{sec:quadraticfermionhamiltonians}.

\subsection{Normal form of quadratic fermionic Hamiltonians\label{sec:quadraticfermionhamiltonians}}
We will be using the following well-known fact throughout the paper.
\begin{fact}[{\bf Williamson eigenvalues}]
\label{fact:1}
Let $A$ be any real anti-symmetric matrix of size $2N$.
Then there exist non-negative real numbers
$0\le \lambda_1 \le \ldots \le \lambda_N$
called Williamson eigenvalues of $A$
and complex vectors $\psi_1,\ldots,\psi_N\in \CC^{2N}$ such that
\[
A \psi_j =i\lambda_j \psi_j \quad \mbox{and} \quad A\bar{\psi}_j = -i\lambda_j \bar{\psi}_j
\]
where $\bar{\psi}_j$ is the complex conjugate of $\psi_j$. In addition,
$\la \psi_j|\psi_k\ra=\delta_{j,k}$ and
$\la \bar{\psi}_j |\psi_k\ra=0$.
\end{fact}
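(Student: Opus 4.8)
The plan is to treat $A$ as a normal operator on $\CC^{2N}$ and exploit its reality to organize the spectral decomposition into conjugate pairs. First I would observe that since $A$ is real and anti-symmetric, $A^\dag = \overline{A^T} = A^T = -A$, so $A$ is anti-Hermitian; equivalently $iA$ is Hermitian. Hence $A$ is normal and admits an orthonormal eigenbasis of $\CC^{2N}$ with purely imaginary eigenvalues. Writing a typical eigenvalue as $i\lambda$ with $\lambda \in \RR$, I would then use reality: conjugating $A\psi = i\lambda\psi$ gives $A\bar\psi = -i\lambda\bar\psi$, so complex conjugation sends the $i\lambda$-eigenspace to the $-i\lambda$-eigenspace. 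The spectrum is therefore symmetric about $0$, and we may enumerate the eigenvalues as $\pm i\lambda_j$ with $0 \le \lambda_1 \le \cdots \le \lambda_N$; for the nonzero values the two members of each pair have equal multiplicity.

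Next I would construct the vectors separately on the nonzero eigenspaces and on the kernel. For each $\lambda > 0$, pick a Hermitian-orthonormal basis of $\ker(A - i\lambda)$; its complex conjugate is then automatically an orthonormal basis of $\ker(A + i\lambda)$. Collecting the positive-eigenvalue vectors as the $\psi_j$, the relation $\la\psi_j|\psi_k\ra = \delta_{j,k}$ holds because eigenvectors of the normal operator $A$ belonging to distinct eigenvalues are orthogonal and we chose orthonormal bases within each eigenspace. The relation $\la\bar\psi_j|\psi_k\ra = 0$ is likewise automatic here: $\bar\psi_j$ lies in the $-i\lambda_j$-eigenspace and $\psi_k$ in the $+i\lambda_k$-eigenspace, and these eigenvalues differ whenever $\lambda_j,\lambda_k > 0$.

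The delicate part is the kernel, where this argument breaks down because $+i\lambda = -i\lambda = 0$, so the orthogonality $\la\bar\psi_j|\psi_k\ra = 0$ no longer follows from spectral separation and must be arranged by hand. Here I would use that the rank of a real anti-symmetric matrix is even, so $\ker A$ is a real subspace of even dimension $2k$. Choosing a real orthonormal basis $e_1,\ldots,e_{2k}$ of $\ker A$, I would set $\psi_j = (e_{2j-1} + i e_{2j})/\sqrt{2}$ for $1 \le j \le k$. A short computation using $e_a^T e_b = \delta_{a,b}$ shows $\la\psi_j|\psi_l\ra = \delta_{j,l}$ and, crucially, $\psi_j^T\psi_l = 0$ for all $j,l$; since $\la\bar\psi_j|\psi_l\ra = \psi_j^T\psi_l$, this gives exactly the two required relations on the kernel, and the vectors $\psi_j$ together with their conjugates $\bar\psi_j$ form an orthonormal basis of the complexified kernel.

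Finally I would combine the $N-k$ vectors coming from the positive eigenspaces with the $k$ vectors coming from the kernel --- a count that matches because the even rank $2(N-k)$ accounts for exactly $N-k$ conjugate pairs --- and re-index them so that $0 \le \lambda_1 \le \cdots \le \lambda_N$. The cross relations between a kernel vector and a nonzero-eigenvalue vector again hold automatically by spectral orthogonality. I expect the main obstacle to be precisely this zero-eigenvalue block: everything else is forced by normality, but securing $\la\bar\psi_j|\psi_k\ra = 0$ on the kernel requires the explicit real-pairing construction together with the parity fact that $\dim\ker A$ is even.
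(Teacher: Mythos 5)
Your argument is correct and complete. Note that the paper itself offers no proof of Fact~\ref{fact:1}; it is invoked as a well-known statement, so there is no in-paper derivation to compare against. Your route --- diagonalize the anti-Hermitian operator $A=-A^\dag$, use reality to pair the $\pm i\lambda$ eigenspaces by complex conjugation, and then treat the kernel separately --- is sound, and you correctly identify the only delicate point: on $\ker A$ the relation $\la \bar{\psi}_j|\psi_k\ra=0$ is not forced by spectral orthogonality and must be built in by hand, which your construction $\psi_j=(e_{2j-1}+ie_{2j})/\sqrt{2}$ from a real orthonormal basis of the (even-dimensional) real kernel does, since $\la\bar\psi_j|\psi_l\ra=\psi_j^T\psi_l=0$ by the cancellation of the two Kronecker deltas. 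The dimension count $2(N-k)+2k=2N$ and the cross-orthogonality between the kernel block and the nonzero blocks also check out. An equally standard alternative, worth knowing, is to start from the real normal form: any real anti-symmetric $A$ satisfies $A=O\Lambda O^T$ for some $O\in O(2N)$ with $\Lambda$ block-diagonal in $2\times 2$ blocks $\left(\begin{smallmatrix}0&\lambda_j\\-\lambda_j&0\end{smallmatrix}\right)$, $\lambda_j\ge 0$; setting $\psi_j=(O e_{2j-1}+iOe_{2j})/\sqrt{2}$ then yields all four claimed relations uniformly, with no case distinction between zero and nonzero eigenvalues. That version buys a cleaner statement at the cost of quoting the normal-form theorem, whereas yours is self-contained given only the spectral theorem for normal operators plus the evenness of the rank.
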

Consider a Hamiltonian
\begin{align*}
\hat{H}&=\frac{i}{4}\sum_{p,q=1}^{2N} H_{p,q}c_pc_q\ ,
\end{align*}
where $H$ is an anti-symmetric real matrix. Then the Williamson eigenvalues $0\leq \lambda_1\leq\ldots\leq \lambda_N$ of~$H$ are the single-particle excitation energies of~$\hat{H}$. To see this explicitly, let $\{\psi_j\}_j$ be the corresponding complex (normalized) eigenvectors, and let $\{\ket{p}\}$ be the computational basis of~$\mathbb{C}^{2N}$. Fact~\ref{fact:1} implies that the operators  $\{\hat{a}_j\}_{j=1}^N$ defined by
\begin{align*}
\hat{a}_j&=\frac{1}{\sqrt{2}}\sum_{p=1}^{2N}\langle p|\psi_j\rangle c_p\ ,
\end{align*}
obey the fermionic CCRs, that is, we can regard them as annihilation operators of some fermionic modes. A simple calculation shows that
\begin{align*}
[\hat{H},\hat{a}_j]&=-\lambda_j\hat{a}_j\ ,
\end{align*}
hence the $\{\hat{a}_j\}$ are canonical (complex) modes of $\hat{H}$. In particular, if $E^\uparrow_0$ is the ground state energy of $\hat{H}$, then
\begin{align}
\hat{H}&=E^\uparrow_0+\sum_{j=1}^N \lambda_j \hat{a}_j^\dagger\hat{a}_j\ .\label{eq:Hamiltonianasdirac}
\end{align}
Alternatively, we can write the Hamiltonian in terms of Majorana modes $\{\tilde{c}_p\}_{p=1}^{2N}$ as
\begin{align}
\hat{H}&=\frac{i}{2}\sum_{j=1}^N \lambda_j \tilde{c}_{2j-1}\tilde{c}_{2j}\qquad\textrm{ where } \tilde{c}_{2j-1}=\hat{a}_j+\hat{a}_j^\dagger\textrm{ and }\tilde{c}_{2j}=(-i)(\hat{a}_j-\hat{a}_j^\dagger)\ .\label{eq:Hamiltonianasmajorana}
\end{align}
The ground state energy $E_0^\uparrow=-\frac{1}{2}\sum_{j=1}^{N}\lambda_j$ can be found, e.g., by comparing~\eqref{eq:Hamiltonianasdirac} with~\eqref{eq:Hamiltonianasmajorana}.

\subsection{Stability analysis based on the Brillouin-Wigner method}\label{sec:thmgapproof}
Define anti-symmetric real matrices $H_0$ and $V$ such that
\be
\hat{H}_0=\frac{i}4 \sum_{p,q=1}^{2N} (H_0)_{p,q} \, c_p c_q
\quad
\mbox{and}
\quad
\hat{V} =\frac{i}4 \sum_{p,q=1}^{2N} V_{p,q} \, c_p c_q. \label{eq:antisymmetricmatrices}
\ee
Let $0\le \lambda_1(u)\le \ldots \le \lambda_N(u)$ be the Williamson eigenvalues of $H_0+uV$,
where $u\in [0,1]$.
Equivalently, $\lambda_j(u)$ are single-particle excitations energies of
the Hamiltonian $\hat{H}_0+u\hat{V}$.
Our choice of $\hat{H}_0$ implies that
\be
\label{u=0}
\lambda_1(0)=0 \quad \mbox{and} \quad  \lambda_j(0)=1 \quad \mbox{for all  $j=2,\ldots,N$}\ .
\ee
Since $\|V\|\le \epsilon$,
 Weyl's inequality~\cite{Weyl12} yields
\be
\label{Weyl}
\lambda_1(u)\le \epsilon \quad \mbox{and} \quad
\lambda_j(u)\ge 1-\epsilon \quad \mbox{for all $j=2,\ldots,N$}\ .
\ee
In the rest of the proof we focus on the (normalized) eigenvector $\psi_1\equiv \psi_1(u)$ such that
\be
\label{psi1}
(H_0+u V) \psi_1=i\lambda_1(u) \psi_1, \quad \la \psi_1|\psi_1\ra=1, \quad u\in [0,1].
\ee
Define
\be
\label{u^*}
u^*=\max_{u\in [0,1]}\, u \quad \quad  \mbox{subject to}
\quad
\lambda_1(v) \le \frac{1-\epsilon}2\quad\textrm{ for all }v\in [0,u]\ .
\ee
Continuity of $\lambda_1(u)$ implies that either $u^*=1$ or $u^*\in (0,1)$ in which case
$\lambda_1(u^*)=(1-\epsilon)/2$. We will show that the second case actually leads to a contradiction for sufficiently large $N$
(this trick is borrowed from~\cite{BravyiHastings01}).
From now on we will only consider $u\in [0,u^*]$, so with Eq.~(\ref{Weyl}), we have
\be
\label{Weyl+trick}
\lambda_1(u)\le \frac12(1-\epsilon) \quad \mbox{and} \quad
\lambda_j(u)\ge 1-\epsilon \quad \mbox{for all $j=2,\ldots,N$}\ .
\ee
We will now explicitly compute $\psi_1$  using  Brillouin-Wigner perturbation theory.
Let $\{ |p\ra \}$, $p=1,\ldots,2N$, be the standard orthonormal basis of $\CC^{2N}$
in which $H_0$ and $V$ are real and anti-symmetric.
Define operators
\begin{align*}
Q=I-|1\ra\la 1|  - |2N\ra\la 2N| \quad \mbox{and} \quad G(u)=(i\lambda_1(u) - H_0)^{-1} Q.
\end{align*}
Note that $G(u)$ is well-defined because of Eqs.~(\ref{u=0},\ref{Weyl+trick}).
Representing $\psi_1 =Q \psi_1 + \alpha_L |1\ra  + \alpha_R|2N\ra$ for some
$\alpha_L,\alpha_R \in \CC$ and rewriting Eq.~(\ref{psi1}) as  $Q(i\lambda_1(u) - H_0) \psi_1= uQV \psi_1$
we obtain
\be
Q\psi_1=uG(u) V \psi_1\label{BWfirst}
\ee
and
\be
\label{BW}
\psi_1=(I-u G(u) V)^{-1} (\alpha_L |1\ra  + \alpha_R |2N\ra).
\ee
Taking into account that $\|V\|\le \epsilon$ and using Eqs.~(\ref{u=0},\ref{Weyl+trick})  one gets
\be
\label{BWbound}
\| u G(u) V \| \le \| G(u) V\| \le \frac{\epsilon}{|\lambda_1(u)-1|} \le \frac{2\epsilon}{1+\epsilon} \equiv \eta.
\ee
Note that $\eta<1$ as long as $\epsilon<1$. With~\eqref{BWfirst}, this implies that
$\|\alpha_L\ket{1}+\alpha_R\ket{2N}\|=\|(I-Q)\psi_1\|\geq (1-\eta)\|\psi_1\|> 0$, hence the expression on the rhs.~of~\eqref{BW} is non-zero and well-defined because of~\eqref{BWbound}.
We can now use the fact that $G(u)$ and $V$ have support only near the main diagonal
to show that $\psi_1$ is a sum of two exponentially decaying states
localized near the two boundaries of the chain. We will need the following simple fact.
\begin{prop}
\label{prop:res}
Let $X$ be any matrix such that $\|X\| <\eta$ for some $0<\eta<1$
and $X_{p,q}=0$ unless $|p-q|\le r$. Then
\be
|(I-X)^{-1}_{p,q}| \le (1-\eta)^{-1} \eta^{\frac{|p-q|}{r}}\quad \mbox{for all $p,q$}\ .
\ee
\end{prop}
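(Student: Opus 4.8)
The plan is to expand the resolvent as a Neumann series and exploit the fact that powers of a banded matrix remain banded, so that only sufficiently high powers can contribute to a given off-diagonal entry. Since $\|X\|<\eta<1$, the series $(I-X)^{-1}=\sum_{k=0}^\infty X^k$ converges in operator norm, hence entrywise. I would then combine two ingredients: a \emph{support} estimate and a \emph{magnitude} estimate for the entries of $X^k$.

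For the support estimate, I would show by induction on $k$ that $(X^k)_{p,q}=0$ whenever $|p-q|>kr$. The point is that a single application of $X$ connects sites at distance at most $r$, so a $k$-fold product connects sites at distance at most $kr$. Formally, writing $(X^k)_{p,q}=\sum_{p_1,\ldots,p_{k-1}} X_{p,p_1}X_{p_1,p_2}\cdots X_{p_{k-1},q}$, any nonzero term forces each consecutive pair of indices to lie within distance $r$, and the triangle inequality gives $|p-q|\le kr$. Consequently, for a fixed pair $p,q$ the only powers contributing to $(I-X)^{-1}_{p,q}$ are those with $k\ge |p-q|/r$, i.e., $k\ge k_0\equiv \lceil |p-q|/r\rceil$.

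For the magnitude estimate, I would use the elementary fact that a matrix entry is bounded by the operator norm, $|(X^k)_{p,q}|=|\la p|X^k|q\ra|\le \|X^k\|\le \|X\|^k\le \eta^k$. Combining the support and magnitude estimates and summing the resulting geometric series starting from the first contributing power $k_0$ gives
\be
|(I-X)^{-1}_{p,q}|\le \sum_{k=k_0}^\infty \eta^k=\frac{\eta^{k_0}}{1-\eta}.
\ee
Finally, since $k_0\ge |p-q|/r$ and $0<\eta<1$, one has $\eta^{k_0}\le \eta^{|p-q|/r}$, which yields the claimed bound $(1-\eta)^{-1}\eta^{|p-q|/r}$. (The edge case $p=q$ is consistent, since then $k_0=0$ and the bound reduces to $(1-\eta)^{-1}$.)

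The argument is essentially routine; the only point requiring minor care is justifying that the entrywise sum of the bounds $|(X^k)_{p,q}|$ genuinely controls $|(I-X)^{-1}_{p,q}|$. This follows because norm convergence of $\sum_k X^k$ forces each matrix entry of the partial sums to converge to the corresponding entry of $(I-X)^{-1}$, and the termwise triangle inequality is valid by absolute convergence, which is guaranteed by $\eta<1$. No step presents a genuine obstacle.
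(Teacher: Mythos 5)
Your proof is correct and follows essentially the same route as the paper's: expand $(I-X)^{-1}$ as a Neumann series, note that bandedness forces $(X^k)_{p,q}=0$ until $kr\ge|p-q|$, bound each contributing entry by $\|X\|^k\le\eta^k$, and sum the geometric tail. Your use of $k_0=\lceil|p-q|/r\rceil$ is in fact marginally cleaner than the paper's choice of the largest integer $n_0$ with $rn_0\le|p-q|$, since $k_0\ge|p-q|/r$ makes the final step $\eta^{k_0}\le\eta^{|p-q|/r}$ immediate.
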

\begin{proof}
Indeed, let $n_0$ be the largest integer such that $r n_0\le |p-q|$. Then
$(X^n)_{p,q}=0$ unless $n\ge n_0$. It follows that
\[
|(I-X)^{-1}_{p,q}| \le  \sum_{n=n_0}^\infty |(X^n)_{p,q}| \le  \sum_{n=n_0}^\infty \|X\|^n \le
\frac{\eta^{n_0}}{1-\eta}\ .
\]
\end{proof}
Represent $\psi_1$ as
\be
\label{left&right}
\psi_1=\beta_L \psi^L + \beta_R \psi^R, \quad |\beta_L|^2 + |\beta_R|^2 =1,
\ee
where $\psi^L$ and $\psi^R$ are normalized states defined as
\be
\label{psiLR}
\psi^L\sim (I-uG(u) V)^{-1}\, |1\ra, \quad \psi^R\sim (I-uG(u) V)^{-1}\, |2N\ra,
\quad \la \psi^L|\psi^L\ra=\la \psi^R|\psi^R\ra=1.
\ee
(All  states and coefficients above depend on $u$.)
Proposition~\ref{prop:res} and Eq.~(\ref{BWbound}) imply that $\psi^L$ and $\psi^R$ decay exponentially away from  the left and  right boundary, respectively:
\be
\label{zero-mode-decay}
|\la p| \psi^L\ra | \le (1-\eta)^{-1} \eta^{\frac{p-1}{r}} \quad \mbox{and} \quad
|\la p| \psi^R\ra | \le  (1-\eta)^{-1} \eta^{\frac{2N-p}{r}}\ .
\ee
Here we used the fact that the normalizing factors in~\eqref{psiLR} are upper bounded by~$1$. This follows using $\la 1|G(u)=0$
which implies $\|(I-uG(u) V)^{-1}\, |1\ra\|^2\geq |\la 1|(I-uG(u) V)^{-1}|1\ra|^2=1$ and similarly for $(I-uG(u) V)^{-1}\,\ket{2N}$.

We would like to show that the states $\psi^L$, $\psi^R$ can be made
{\em real} by choosing their overall phase (up to exponentially small corrections).
Since many equations below include equalities that hold with exponentially small
corrections, let us set up a special notation. Given  numbers
$x,y \in \CC$, we will say that $x$ exp-equals $y$
and write $x \expeq y$
 iff $|x-y|\le \eta^{N/\gamma r}$
for some constant $1<\gamma<2$ and for all sufficiently large $N$.
The same notation $X \expeq Y$ will be applied to operators $X,Y$ acting on $\CC^{2N}$
meaning that $\|X-Y\|\le \eta^{N/\gamma r}$
for some constant $1<\gamma<2$ and for all sufficiently large $N$.

Define a projector $P_0\equiv P_0(u)$ as
\be
\label{P_0(u)}
P_0=|\psi_1\ra\la \psi_1| + |\bar{\psi}_1\ra\la \bar{\psi}_1|.
\ee
The spectral bounds Eq.~(\ref{Weyl+trick}) imply that $P_0(u)$ has rank $2$ and
depends smoothly on $u$. In addition, Eq.~(\ref{zero-mode-decay}) implies
that matrix elements of $P_0$ decay exponentially as one deviates
from the top-left and the bottom-right corner.
Let $\Pi^L$ and $\Pi^R$ be diagonal projectors
that project onto subspaces
spanned by the first $N$ and the last $N$ basis vectors  respectively
(the left and the right halves of the chain). By definition, $\Pi^L+\Pi^R=I$.
Define also $P_0^L=\Pi^L P_0 \Pi^L$ and $P_0^R=\Pi^R P_0 \Pi^R$
(both operators depend on $u$).
Using Eq.~(\ref{zero-mode-decay}) one easily gets
\be
\Pi^L P_0 \Pi^R \expeq 0.
\ee
This implies that
\be
P_0 \expeq P_0^L + P_0^R.\label{eq:pzeroblock}
\ee
Taking into account that $\|X^2-Y^2\| \le 2\|X-Y\|$ for any operators $X$, $Y$ whose
norm in bounded by~$1$, we conclude that
\be
(P_0^L)^2 \expeq P_0^L  \quad \mbox{and} \quad
(P_0^R)^2 \expeq P_0^R,
\ee
that is,  $P_0^L$ and $P_0^R$ are exponentially close to projectors.
Furthermore, $P_0^L(0)$ and $P_0^R(0)$ are rank-$1$ projectors onto
site~$1$ and~$2N$, respectively. Continuity then implies that
$P_0^L(u)$ and $P_0^R(u)$ are exponentially close to rank-$1$ projectors
for all $u\in [0,u^*]$. On the other hand,
using Eqs.~(\ref{left&right},\ref{P_0(u)})
and the exponential decay of $\psi^L$, $\psi^R$, see Eq.~(\ref{zero-mode-decay}) one gets
\be
P_0^L \expeq |\beta_L|^2 ( |\psi^L\ra\la \psi^L |  + |\bar{\psi}^L\ra\la \bar{\psi}^L |)
\quad
\mbox{and}
\quad
P_0^R \expeq |\beta_R|^2 ( |\psi^R\ra\la \psi^R |  + |\bar{\psi}^R\ra\la \bar{\psi}^R |). \label{eq:pzerolrdef}
\ee
This is possible only if
\be
\label{bm_aux}
|\la \bar{\psi}^L|\psi^L\ra|\expeq |\la \bar{\psi}^R|\psi^R\ra| \expeq 1,
\quad \mbox{and} \quad
|\beta_L|\expeq |\beta_R| \expeq \frac1{\sqrt{2}}\ .
\ee
Let us choose the overall phase of $\psi^L$ and $\psi^R$ to make them real
(up to exponentially small corrections), that is,
$\la\bar{\psi}^L|\psi^L\ra \expeq 1$ and $\la \bar{\psi}^R|\psi^R\ra \expeq 1$.

Computing $i\lambda_1(u)$ as an expectation value $\la \psi_1|H_0+uV|\psi_1\ra$
and taking into account that $\psi^L,\psi^R$ are (approximately) real, while $H_0+uV$
is a real anti-symmetric matrix, one arrives at
\be
\label{final-bound}
i\lambda_1 \expeq \bar{\beta}_L  \beta_R \la \psi^L|H_0+uV|\psi^R\ra + \beta_L \bar{\beta}_R
 \la \psi^R|H_0+uV|\psi^L\ra \expeq 0,
\ee
since the overlap between $\psi^L$ and $\psi^R$ is exponentially small, see Eq.~(\ref{zero-mode-decay}).
Recall that this bound holds for all $u\in [0,u^*]$.
Suppose now that $u^*<1$. Then
\be
\lambda_1(u^*)=\frac{1-\epsilon}2
\ee
see Eq.~(\ref{u^*}). But this would contradict to
Eq.~(\ref{final-bound}) for sufficiently large $N$.
Hence $u^*=1$, that is, Eq.~(\ref{final-bound}) holds for $u=1$.

The three lowest eigenvalues of $\hat{H}_0+\hat{V}$ can be expressed in terms of the
Williamson eigenvalues of $H+V$  as
\be
E_1^\uparrow - E_0^\uparrow = \lambda_1(1)\expeq 0  \quad \mbox{and} \quad
E_2^\uparrow - E_0^\uparrow = \lambda_2(1).
\ee
Weyl's inequality implies that $\lambda_2(1)\ge 1-\epsilon$, see Eq.~(\ref{Weyl}).
Hence
\[
E_2^\uparrow -E_1^\uparrow=\lambda_2(1)-\lambda_1(1) \ge 1-\epsilon-\eta^{N/2r}\ .
\]
This completes the proof of Theorem~\ref{thm:gap}.

Finally, let us remark that the real normalized vectors $\psi^L$ and $\psi^R$ determine
zero-energy  unpaired Majorana modes $c^L$ and $c^R$ localized on the left and the right boundaries by
\[
c^L\expeq \sum_{p=1}^{2N} \la p|\psi^L\ra \, c_p \quad
\mbox{and}
\quad
c^R\expeq \sum_{p=1}^{2N} \la p|\psi^R\ra \, c_p\ ,
\]
that is, $c^L c^R + c^R c^L \expeq 0$, $[c^L,\hat{H}_0 +\hat{V}]\expeq 0$,
and $[c^R,\hat{H}_0 +\hat{V}]\expeq 0$ (cf.~\eqref{eq:Hamiltonianasmajorana}).

\section{Quasi-adiabatic continuation}
\label{sec:QAC}

In this section we prove Theorem~\ref{thm:QAC}
and Corollaries~\ref{corol:boring1},\ref{corol:boring2}.
For any $u\in [0,1]$ let
$0\le \lambda_1(u)\le \ldots \le \lambda_N(u)$
be the Williamson eigenvalues of $H_0+u V$, see Fact~\ref{fact:1}.
The corresponding normalized eigenvectors satisfy $(H_0+u V)\psi_j=i\lambda_j(u) \psi_j$.
Theorem~\ref{thm:gap} implies that $\lambda_1(u)$ is exponentially small while
$\lambda_j(u)\gtrsim 1-\epsilon$ for $j\ge 2$ with exponentially small corrections.
Define projectors
\[
P_+=\sum_{j=2}^N |\psi_j \ra\la \psi_j|,
\quad
P_-=\bar{P}_+=\sum_{j=2}^N |\bar{\psi}_j \ra\la \bar{\psi}_j|
 \quad \mbox{and} \quad P_0=|\psi_1\ra\la \psi_1| + |\bar{\psi}_1\ra\la \bar{\psi}_1|.
\]
Here all states and operators are $u$-dependent.
Fact~\ref{fact:1} implies that these projectors form an orthogonal decomposition of the identity,
$P_0+P_+ + P_- =I$.
\begin{prop}
\label{prop:map}
Let~$|g_\sigma(u)\ra$ be the ground state of $\hat{H}_0+u\hat{V}$
in the sector with fermionic parity~$\sigma\in\{0,1\}$.
Suppose an orthogonal matrix $R\in SO(2N)$ satisfies
$R P_+(0) R^T = P_+(u)$
for some $u\in [0,1]$. Consider any unitary operator $\hat{U}$ such that
\[
\hat{U} c_p \hat{U}^\dag =\sum_{q=1}^{2N} R_{p,q} c_q.
\]
Then $\hat{U}\, |g_\sigma(u)\ra = |g_\sigma(0)\ra$ for $\sigma=0,1$
up to an overall phase.
\end{prop}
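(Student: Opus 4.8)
The plan is to characterize each ground state $|g_\sigma(u)\rangle$ by the set of annihilation operators that annihilate it, and then to track how these operators transform under $\hat U$. Recall from Section~\ref{sec:quadraticfermionhamiltonians} that the complex modes $\hat a_j(u)=\frac{1}{\sqrt2}\sum_p \langle p|\psi_j(u)\rangle\, c_p$ bring $\hat H_0+u\hat V$ into the diagonal form~\eqref{eq:Hamiltonianasdirac} with single-particle energies $\lambda_j(u)$. By Theorem~\ref{thm:gap} only $\lambda_1(u)$ is (exponentially) small while $\lambda_j(u)\gtrsim 1-\epsilon$ for $j\ge 2$, so $|g_\sigma(u)\rangle$ is the unique state of fermionic parity $\sigma$ annihilated by $\hat a_j(u)$ for every $j\ge 2$; the occupation of the zero mode $\hat a_1(u)$ merely selects the parity sector. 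The vectors $\psi_j(u)$ with $j\ge 2$ are precisely an orthonormal basis of $\mathrm{range}\,P_+(u)$.

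First I would compute the image of these modes under $\hat U$. Substituting the Bogoliubov rule gives $\hat U\hat a_j(u)\hat U^\dag=\frac{1}{\sqrt2}\sum_q (R^T\psi_j(u))_q\, c_q$, i.e.\ the transformed operator is the annihilation operator associated with the vector $R^T\psi_j(u)$. Hence $\hat U|g_\sigma(u)\rangle$ is annihilated by all annihilation operators whose vectors lie in $R^T\,\mathrm{range}\,P_+(u)$. The hypothesis $R P_+(0) R^T=P_+(u)$ is equivalent to $R^T P_+(u) R=P_+(0)$, which yields $R^T\,\mathrm{range}\,P_+(u)=\mathrm{range}\,P_+(0)$. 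Since $R$ is real, complex conjugation commutes with $R^T$, so $\mathrm{range}\,P_-(u)$ and $\mathrm{range}\,P_0(u)$ are likewise carried to $\mathrm{range}\,P_-(0)$ and $\mathrm{range}\,P_0(0)$; in particular no annihilation operator is turned into a creation operator. Because the map $v\mapsto\frac{1}{\sqrt2}\sum_p v_p c_p$ is linear, any two orthonormal bases of $\mathrm{range}\,P_+(0)$ give annihilation operators with the same common kernel, so $\hat U|g_\sigma(u)\rangle$ is killed by exactly the positive-frequency modes of $\hat H_0$ and is therefore proportional to a ground state of $\hat H_0$.

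It then remains to pin down the parity sector, and here I would use $R\in SO(2N)$: under any orthogonal Bogoliubov transformation the volume element obeys $\hat U\,(c_1 c_2\cdots c_{2N})\,\hat U^\dag=\det(R)\,c_1 c_2\cdots c_{2N}$, so with $\det R=1$ the parity operator $\hat P=(-i)^N c_1 c_2\cdots c_{2N}$ is invariant, $\hat U\hat P\hat U^\dag=\hat P$. Consequently $\hat U|g_\sigma(u)\rangle$ carries the same parity $\sigma$ as $|g_\sigma(u)\rangle$. Combining this with the previous paragraph and the uniqueness of the ground state of $\hat H_0$ in each parity sector gives $\hat U|g_\sigma(u)\rangle=|g_\sigma(0)\rangle$ up to a phase.

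The two substitutions are routine, so the only delicate points are (i) checking that replacing the natural basis $\{\psi_j(0)\}_{j\ge2}$ by the basis $\{R^T\psi_j(u)\}_{j\ge2}$ of the same subspace does not change the annihilated state, which follows from linearity of $v\mapsto\sum_p v_p c_p$, and, more importantly, (ii) the bookkeeping of the parity sector. The assumption $R\in SO(2N)$ (determinant $+1$) is exactly what is needed in (ii): an orientation-reversing $R\in O(2N)\setminus SO(2N)$ would flip the zero-mode occupation and send $|g_\sigma(u)\rangle$ to $|g_{1-\sigma}(0)\rangle$. The reality of $R$ is what guarantees that the decomposition $P_0\oplus P_+\oplus P_-$ is respected, so that annihilation operators remain annihilation operators throughout.
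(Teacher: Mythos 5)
Your proof is correct and follows essentially the same route as the paper's: characterize $|g_\sigma(u)\ra$ as the unique parity-$\sigma$ state annihilated by the modes $\hat a_j(u)$ with $j\ge 2$, observe that the hypothesis $RP_+(0)R^T=P_+(u)$ sends these modes (via the Bogoliubov rule) into the span of the $\hat a_k(0)$ with $k\ge 2$, and use $\det R=1$ to pin down the parity sector. The extra remarks on the reality of $R$ and the transformation of the volume element are fine but not needed beyond what the paper already uses.
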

\begin{proof}
Let $\{\hat{a}_j=\hat{a}_j(u)\}_{j=1}^N$ be the
complex fermionic annihilation operators associated with the eigenmodes of $\hat{H}=\hat{H}_0+u\hat{V}$, see~\eqref{eq:Hamiltonianasdirac}. It follows that $|g_\sigma(u)\ra$ can be defined as the state
annihilated by all $\hat{a}_j(u)$ with $j=2,\ldots,N$
and having fermionic parity $\sigma$.
The condition $R P_+(0) R^T = P_+(u)$ implies that $R^T\psi_j(u)$ is a linear
combination of $\psi_k(0)$ with $k=2,\ldots,N$ for any $j\ge 2$.
Equivalently, $\hat{U} \hat{a}_j(u) \hat{U}^\dag$ is a linear combination
of $\hat{a}_k(0)$ with $k=2,\ldots,N$.
Thus $\hat{U}^\dag|g_\sigma(0)\ra$ is annihilated by any operator $\hat{a}_j(u)$
with $j=2,\ldots,N$. Finally, the condition $\det{R}=1$ implies that $\hat{U}$
preserves fermionic parity, that is, $\hat{U}^\dag|g_\sigma(0)\ra =|g_\sigma(u)\ra$,
up to an overall phase factor.
\end{proof}
We  will construct a rotation $R=R(u)$ as above by setting up a differential equation on
$P_+(u)$. This can be done using standard first-order degenerate perturbation theory. Namely,
suppose $P_+(u+\delta u)=e^{D\delta u} P_+(u) e^{-D\delta u}$ for
some anti-hermitian generator $D\equiv D(u)$ and infinitesimally small $\delta u$.
Then $e^{-D\delta u}(H_0+uV + \delta u V) e^{D\delta u}$ must commute
with $P_+(u)$ in the first order in $\delta u$. Hence
\be
\label{PT}
\frac{\partial P_+(u)}{\partial u} =[D(u),P_+(u)]
\ee
whenever $D(u)$ is an anti-hermitian operator satisfying
\be
\label{SW}
P_+^\perp ([H_0+u V, D(u)] + V)P_+=0.
\ee
In addition, we would like to find a real solution $D(u)$,
since we need to generate an orthogonal rotation.
Let us choose $D(u)$ as
\be
\label{D(u)}
D(u)=\calE(\calO(V)),
\ee
where $\calO$ and $\calE$  are  $u$-dependent super-operators
discarding the block-diagonal while keeping the `block-off-diagonal' part and inserting the energy denominator.
More precisely,
\begin{align*}
\cO(X)&=X-\sum_{\alpha\in\{0,+,-\}}P_\alpha XP_\alpha.
\end{align*}
To define $\calE$ let $\{|\phi_\lambda\ra\}$ be the orthonormal basis
of eigenvectors of $H_0+u V$
such that $(H_0+u V) |\phi_\lambda\ra= \lambda |\phi_\lambda\ra$.
Note that all eigenvalues $\lambda$ are imaginary.
We define $\calE(X)$ by its non-zero matrix elements in this basis,
\be
\label{E(X)}
\la \phi_\lambda |\calE(X) |\phi_{\lambda'}\ra = -\frac{ \la \phi_\lambda |X |\phi_{\lambda'}\ra}{\lambda-\lambda'}
\quad \mbox{for $\lambda\ne \lambda'$}\ .
\ee
It is clear that $D(u)$ defined by Eq.~(\ref{D(u)}) satisfies Eq.~(\ref{SW}).
It remains to check that $D(u)$ is real and anti-symmetric in the standard basis.
\begin{prop}
\label{prop:aux}
Let $X$ be any operator on $\CC^{2N}$. Then
\be
\label{Oprop}
\calO(X)^\dag=\calO(X^\dag), \quad \overline{\calO(X)}=\calO(\overline{X}),
\ee
and
\be
\label{Eprop}
\calE(X)^\dag=\calE(X^\dag), \quad \overline{\calE(X)}=\calE(\overline{X}).
\ee
\end{prop}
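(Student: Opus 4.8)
The plan is to verify the four stated identities directly, handling $\calO$ by its defining expression and $\calE$ at the level of matrix elements in the eigenbasis of $H_0+uV$, using throughout that $H_0+uV$ is real and that its eigenvalues are purely imaginary. For the adjoint identity $\calO(X)^\dag=\calO(X^\dag)$, I would take the adjoint of $\calO(X)=X-\sum_\alpha P_\alpha X P_\alpha$; since each spectral projector is Hermitian, $P_\alpha^\dag=P_\alpha$, so the sum transforms term by term into $\sum_\alpha P_\alpha X^\dag P_\alpha$ and the claim is immediate. For the conjugation identity $\overline{\calO(X)}=\calO(\overline{X})$, the key observation is that entrywise complex conjugation merely permutes the three projectors: one has $P_-=\overline{P_+}$, $P_+=\overline{P_-}$, and $P_0=\overline{P_0}$ (the latter because $P_0=\proj{\psi_1}+\proj{\bar\psi_1}$ is manifestly conjugation-invariant). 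Conjugating $\sum_\alpha P_\alpha X P_\alpha$ therefore yields $\sum_\alpha \overline{P_\alpha}\,\overline{X}\,\overline{P_\alpha}$, which is the same sum with the index relabeled $+\leftrightarrow-$; as the sum already runs over all three sectors it is invariant under this relabeling, and the identity follows.

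For $\calE$ I would argue in the eigenbasis $\{\ket{\phi_\lambda}\}$, recalling that all $\lambda$ are purely imaginary so that $\overline{\lambda}=-\lambda$. The adjoint identity is a one-line matrix-element computation: $\bra{\phi_\lambda}\calE(X)^\dag\ket{\phi_{\lambda'}}=\overline{\bra{\phi_{\lambda'}}\calE(X)\ket{\phi_\lambda}}$, and substituting the definition Eq.~(\ref{E(X)}) together with $\overline{\lambda'-\lambda}=\lambda-\lambda'$ and $\overline{\bra{\phi_{\lambda'}}X\ket{\phi_\lambda}}=\bra{\phi_\lambda}X^\dag\ket{\phi_{\lambda'}}$ reproduces exactly the matrix element defining $\calE(X^\dag)$; the diagonal $\lambda=\lambda'$ entries vanish on both sides.

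The conjugation identity for $\calE$ is the only step requiring real care, and I expect it to be the main obstacle, since the eigenbasis is genuinely complex. The crucial input is that reality of $H_0+uV$ forces eigenvectors to come in conjugate pairs with opposite eigenvalues: from $(H_0+uV)\ket{\phi_\lambda}=\lambda\ket{\phi_\lambda}$ one gets $(H_0+uV)\overline{\ket{\phi_\lambda}}=\overline{\lambda}\,\overline{\ket{\phi_\lambda}}=-\lambda\,\overline{\ket{\phi_\lambda}}$, so that, choosing the basis compatibly with conjugation as permitted by Fact~\ref{fact:1}, $\overline{\ket{\phi_\lambda}}=\ket{\phi_{-\lambda}}$. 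I would then invoke the elementary identity $\bra{\phi_\mu}\overline{M}\ket{\phi_{\mu'}}=\overline{\bra{\phi_{-\mu}}M\ket{\phi_{-\mu'}}}$, valid for any $M$, which follows from $\overline{\bra{v}M\ket{w}}=\bra{\overline{v}}\,\overline{M}\,\ket{\overline{w}}$ applied with $\ket{v}=\ket{\phi_{-\mu}}$ and $\ket{w}=\ket{\phi_{-\mu'}}$. Setting $M=\calE(X)$ and inserting Eq.~(\ref{E(X)}) at the eigenvalues $-\mu,-\mu'$, the denominator $\overline{(-\mu)-(-\mu')}=\mu-\mu'$ (again using imaginary eigenvalues) and the numerator $\overline{\bra{\phi_{-\mu}}X\ket{\phi_{-\mu'}}}=\bra{\phi_\mu}\overline{X}\ket{\phi_{\mu'}}$ combine to give precisely $\bra{\phi_\mu}\calE(\overline{X})\ket{\phi_{\mu'}}$, establishing $\overline{\calE(X)}=\calE(\overline{X})$. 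The one point to state carefully is the compatibility of the chosen eigenbasis with complex conjugation in the presence of possible spectral degeneracies, which is exactly what Fact~\ref{fact:1} provides.
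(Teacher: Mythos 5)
Your proof is correct and follows essentially the same route as the paper's: Hermiticity of the $P_\alpha$ together with the conjugation rules $P_+=\overline{P}_-$, $\overline{P}_0=P_0$ for the $\calO$ identities, and the relation $\overline{\ket{\phi_\lambda}}=\ket{\phi_{-\lambda}}$ combined with $\overline{\lambda}=-\lambda$ for the $\calE$ identities. The only difference is that you spell out the matrix-element computations and the degeneracy caveat that the paper leaves implicit.
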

\begin{proof}
Identities Eq.~(\ref{Oprop})
follow from the fact that $P_\alpha$ are Hermitian and from the complex
conjugation rules $P_+=\overline{P}_-$, $\overline{P}_0=P_0$.
To prove Eq.~(\ref{Eprop}) we note that $|\overline{\phi}_\lambda\ra=|\phi_{-\lambda}\ra$.
Taking the complex conjugate of Eq.~(\ref{E(X)}), replacing $\lambda$ and $\lambda'$ by
$-\lambda$ and $-\lambda'$ respectively, and taking into account that
$\overline{\lambda}=-\lambda$, $\overline{\lambda'}=-\lambda'$, one gets
$\overline{\calE(X)}=\calE(\overline{X})$. The identity $\calE(X)^\dag=\calE(X^\dag)$
is obvious since $\{ |\phi_\lambda\ra\}$ is an orthonormal basis
and because the eigenvalues $\{\lambda\}$ are imaginary.
\end{proof}
Since $V$ is anti-hermitian and real, Proposition~\ref{prop:aux} implies that $D(u)$
 is also anti-hermitian and real.  Hence $D(u)$ is a real anti-symmetric matrix.
 Integrating Eq.~(\ref{PT}) over $u\in [0,1]$
we conclude that $P_+(1)=R P_+(0) R^T$, where
$R\in SO(2N)$ is the time-ordered exponential describing evolution under~$D(u)$,
\be
\label{Rdef}
R=\calT\cdot \exp{\left( \int_0^1 du D(u) \right)}\ .
\ee
 Define an adiabatic evolution Hamiltonian as
\begin{align*}
\hat{D}(u)=\frac{i}4 \sum_{p,q=1}^{2N} D_{p,q}(u)\, c_p c_q\ .
\end{align*}
Simple algebra shows that $[i\hat{D}(u),c_p]=\sum_{q=1}^{2N} D_{p,q}(u) c_q$.
Hence we can choose a unitary $\hat{U}$ satisfying the conditions of
Proposition~\ref{prop:map} as the time-ordered exponential describing the unitary evolution under~$\hat{D}(u)$, namely,
\begin{align*}
\hat{U}=\calT \cdot \exp{\left[ i\int_{0}^1 du \hat{D}(u)\right]}\ .
\end{align*}
Let us now bound the norm of $D(u)$. We note that
\[
P_+ D(u) P_0 =-i\int_0^\infty dt e^{i(H_0+u V)t} P_+ V P_0 e^{-i(H_0+u V)t}\ ,
\]
with a similar representation for other blocks $P_\alpha D(u) P_\beta$ with $\alpha\ne \beta$. This yields
\[
\| P_+ D(u) P_0\| \le \|V\| \int_0^\infty e^{-\lambda_2(u)t + \lambda_1(u) t} \le \frac{\epsilon}{\lambda_2(u)-\lambda_1(u)}
\le \frac{\epsilon}{1-2\epsilon}\ ,
\]
since $\lambda_2(u)\ge 1-\epsilon$ and $\lambda_1(u)\le \epsilon$.
(In fact, since $\lambda_1(u)$ is exponentially small in $N$,  we could get a stronger bound
with $1-\epsilon$ in the denominator which, however, would require $N$ to be
sufficiently large.)
We note that $P_- D(u) P_0$ is the complex conjugate of $P_+ D(u) P_0$ and thus
$\|P_-D(u) P_0\| \le \epsilon/(1-2\epsilon)$.
Similar arguments yield $\| P_+ D(u) P_-\| \le \epsilon/2(1-\epsilon)$.
Since $P_\alpha D(u) P_\alpha=0$ for all $\alpha$, we conclude  that
\[
\|D(u)\| \le \| P_+ D(u) P_0\| + \| P_- D(u) P_0\| + \| P_+ D(u) P_-\| \le \frac{3\epsilon}{1-\epsilon}\ .
\]
This proves Eq.~(\ref{Dnorm}).

It remains to show that off-diagonal matrix elements of $D(u)$ decay sufficiently fast.
We will represent $D(u)$ using contour integrals.
Let $C_{\pm}$ and $C_0$ be circles of radius
$3\epsilon/2$ centered at $z=\pm i$ and $z=0$ respectively.
Note that $C_{\pm}$ encircles all eigenvalues $\pm i\lambda_j(u)$ with $j=2,\ldots,N$,
while $C_0$ encircles $\pm i\lambda_1(u)$.
It follows that
\be
\label{cont}
P_{\alpha}(u)=\frac1{2\pi i} \int_{z\in C_\alpha} dz (zI-H_0-u V)^{-1}, \quad  \quad \alpha \in \{+,-,0\}\ .
\ee
Define $G_0(z)=(zI-H_0)^{-1}$. We note that $\|G_0(z)\|\le 2/(3\epsilon)$ for all $z\in C_\alpha$
since the spectrum of $H_0$ consists of eigenvalues $0,\pm i$.
Hence
\[
(zI-H_0-uV)^{-1}=(I-uG_0(z)V)^{-1} G_0(z),
\quad
\mbox{where} \quad
\|uG_0(z)V \| \le \frac23 \quad \mbox{for all $z\in C_\alpha$}\ .
\]
Furthermore, $(G_0(z) V)_{p,q}=0$ unless $|p-q|\ge r+1$
and $G_0(z)_{p,q}=0$ unless $|p-q|\le 1$.
Proposition~\ref{prop:res} from Section~\ref{sec:gap} now implies that
\[
|(zI-H_0-uV)^{-1}_{j,k} |\le 3 (2/3)^{\frac{|j-k|-1}{r+1}} \| G_0(z)\| \le \frac3{\epsilon} (2/3)^{\frac{|j-k|}{2r}}
\]
for all $z\in C_\alpha$. Block-off-diagonality of $D(u)$ implies that $D(u)=\sum_{\alpha\ne \beta} P_\alpha D(u) P_\beta$.
Using the contour integrals Eq.~\eqref{cont}
and taking into account Eq.~\eqref{E(X)}  we arrive at
\begin{align*}
P_\alpha D(u) P_\beta =
\frac1{(2\pi i)^2} \int_{z\in C_\alpha}   \int_{z'\in C_\beta } \; \; \frac{dz \, dz'}{z'-z}\;
(zI-H_0-u V)^{-1} V (z'I-H_0-u V)^{-1}\ .
\end{align*}
Taking into account that $|z-z'|\ge 1-3\epsilon \ge 1/4$ we get
\begin{align*}
|(P_\alpha D(u) P_\beta)_{j,k}|\le 4 \left( \frac{3\epsilon}2 \right)^2 \left(\frac{3}{\epsilon}\right)^2
\sum_{p,q=1}^{2N} |V_{p,q}| (2/3)^{\frac{|j-p| + |q-k|}{2r}}\ .
\end{align*}
The bound
$|j-p|+|q-k|\ge |j-k|-|p-q|\ge |j-k|-r$ yields
\[
\sum_{p,q=1}^{2N} |V_{p,q}| (2/3)^{\frac{|j-p| + |q-k|}{2r}}
\le
2\epsilon (2/3)^{|j-k|/4r} C^2, \quad C=2\sum_{p=0}^\infty (2/3)^{p/4r}\ .
\]
One can easily check that $(2/3)^{1/4r}\le 1-1/(16r)$ for all $r\ge 1$.
This implies $C\le 32 r$.
Combining the above bounds together
we arrive at
\[
|(P_\alpha D(u) P_\beta)_{j,k}|\le  81\cdot 2^{12} \cdot \epsilon r^2  (2/3)^{|j-k|/4r}\ .
\]
Combining the six terms $P_\alpha D(u) P_\beta$ together yields
the desired bound  Eq.~\eqref{Ddecay}, completing the proof of Theorem~\ref{thm:QAC}.

We end this section by proving Corollaries~\ref{corol:boring1} and~\ref{corol:boring2}.
Using Theorem~\ref{thm:QAC} we can express the perturbed ground state in each sector as $\ket{\tilde{g}_\sigma}=\hat{U}(1)^\dag\ket{g_\sigma}$ where $\hat{U}(u)$ satisfies
\begin{align*}
\frac{\partial \hat{U}(u)}{\partial u} &=i\hat{D}(u)\hat{U}(u)\
\end{align*}
and where $\hat{D}(u)$ is a Hermitian effective Hamiltonian
with strength roughly~$\epsilon$, see Eq.~\eqref{Dnorm}.
Let
\begin{align}
f(u)=\bra{g_\sigma}\hat{U}(u)\ket{g_\sigma}\ \label{eq:ftointegrate}
\end{align}
such that $|f(1)|=|\la \tilde{g}_\sigma|g_\sigma\ra|$
and let $Q=\hat{I}-|g_\sigma\ra\la g_\sigma|$.
Taking the derivative over~$u$ gives
\[
\frac{\partial f(u)}{\partial u}=i\bra{g_\sigma} \hat{D}(u)\ket{g_\sigma}f(u)+i\bra{g_\sigma}\hat{D}(u)Q\hat{U}(u)\ket{g_\sigma}\ .
\]
The first term in the derivative only contributes to the phase of~$f(u)$ and can be ignored.
The second term can be bounded using the Cauchy-Schwarz inequality which yields
\be
\label{derivative1}
-\frac{\partial |f(u)|}{\partial u} \le \sqrt{ \bra{g_\sigma}\hat{D}(u)Q\hat{D}(u)\ket{g_\sigma} \la g_\sigma| \hat{U}^\dag Q \hat{U} |g_\sigma\ra}=
\sqrt{1-|f(u)|^2} \sqrt{h(u)}\ ,
\ee
where $h(u)\equiv  \bra{g_\sigma}\hat{D}(u)Q\hat{D}(u)\ket{g_\sigma}$.
 We can bound $h(u)$ by observing that  the states $Qc_pc_q\ket{g_\sigma}$ and $Qc_{p'}c_{q'}\ket{g_\sigma}$  are orthogonal unless $(p,q)$ is a permutation of $(p',q')$. This yields
\begin{align*}
h(u) &\leq \frac{1}{16}\sum_{p,q=1}^{2N}\left(|D_{p,q}(u)|^2+|D_{p,q}(u)D_{q,p}(u)|\right)
=\frac18 \trace{D^\dag(u) D(u)}\ .
\end{align*}
The bound~(\ref{Dnorm}) and the assumption $\epsilon<1/8$ imply $\|D(u)\| \le 4\epsilon$
and thus $\trace{D^\dag(u) D(u)}\le 2N\|D(u)\|^2 \le 32 N \epsilon^2$.
We conclude that $h(u)\le 4N\epsilon^2$. Substituting this into Eq.~(\ref{derivative1})
yields
\begin{align}
\frac{\partial}{\partial u} \sqrt{1-|f(u)|^2} \le 2\epsilon \sqrt{N}\ .\label{eq:derivativebound}
\end{align}
Integrating~\eqref{eq:derivativebound} over $u\in [0,1]$ and taking into account that $f(0)=1$ gives
$\sqrt{1-|f(1)|^2}\le 2\epsilon \sqrt{N}$, that is, $|f(1)|^2 \ge 1-4N\epsilon^2$
which proves  Corollary~\ref{corol:boring1}.

It remains to prove Corollary~\ref{corol:boring2}.
By definition, the storage fidelity is
\[
F_{|g\ra}(t) = \la g|\Phi_{ec}(|g(t)\ra\la g(t)|)|g\ra,
\quad \mbox{where} \quad |g(t)\ra = e^{i(\hat{H}_0+\hat{V})t}\, |g\ra.
\]
Choose the perturbed ground states such that $\la \tilde{g}_\sigma |g_\sigma\ra\ge 0$
and let $|\tilde{g}\ra =(|\tilde{g}_0\ra\otimes |0_R\ra  +|\tilde{g}_1\ra \otimes |1_R\ra)/\sqrt{2}$.
Simple algebra shows that
\be
\label{g1}
\| \, |g\ra\la g| - |\tilde{g}\ra\la \tilde{g} | \, \|_1 \le  2\sqrt{1-\min_\sigma |\la \tilde{g}_\sigma |g_\sigma\ra|^2}
\le 4\epsilon \sqrt{N}\ .
\ee
Consider the time-evolved perturbed ground state
\[
|\tilde{g}(t)\ra\equiv e^{i(\hat{H}_0+\hat{V})t}\, |\tilde{g}\ra
\sim
(|\tilde{g}_0\ra\otimes |0_R\ra  +e^{i\delta t}\,  |\tilde{g}_1\ra \otimes |1_R\ra)\sqrt{2}\ .
\]
Define also an auxiliary state
\[
|h(t)\ra =(|{g}_0\ra\otimes |0_R\ra  +e^{i\delta t}\,  |{g}_1\ra \otimes |1_R\ra)\sqrt{2}\ .
\]
Hence $|h(t)\ra$ is a ground state of $\hat{H}_0$
whose time evolution amounts to the dephasing $e^{i\delta t}$ between $|g_0\ra$ and $|g_1\ra$.
Using Eq.~\eqref{g1} we get
\begin{align*}
\| \, |g(t)\ra \la g(t)| - |h(t)\ra \la h(t)| \, \|_1 & \le
\| \, |g(t)\ra \la g(t)| -  |\tilde{g}(t)\ra\la \tilde{g}(t)|  \, \|_1 + \| \,  |\tilde{g}(t)\ra\la \tilde{g}(t)| -
|h(t)\ra \la h(t)| \, \|_1 \nn \\
&= 2 \| \, |g\ra\la g| - |\tilde{g}\ra\la \tilde{g} | \, \|_1 \le 8\epsilon \sqrt{N}\ .
\end{align*}
Replacing $|g(t)\ra\la g(t)|$ by $|h(t)\ra\la h(t)|$ in the expression for $F_{|g\ra}(t)$
and taking into account that $|h(t)\ra\la h(t)|$ is invariant under $\Phi_{ec}$,
we arrive at
\[
|F_{|g\ra}(t) -|\la g |h(t)\ra|^2|  \le  8{\epsilon}\sqrt{N}\ .
\]
This proves Eq.~(\ref{boring2}) since $|\la g |h(t)\ra|^2=\cos^2{(\delta t/2)}$.

\section{Lower bound on the storage fidelity\label{sec:disorderenhanced}}
In this section, we prove Theorem~\ref{thm:storage}.
We first express the time-evolved state $|g(t)\ra=e^{i(\hat{H}_0+\hat{V})t}|g\rangle$ as the result of applying a certain unitary to a  related ground state~$|h(t)\ra$.
This unitary is composed of the quasi-adiabatic continuation operator~$\hat{U}$ of Theorem~\ref{thm:QAC} and a time-evolved version thereof. By expanding the time-ordered exponential defining~$\hat{U}$, we rewrite this unitary in terms of correctable and uncorrectable errors. Using the exponential decay~\eqref{Ddecay} and the  localization assumption~\eqref{DL}, we then show that the weight of uncorrectable errors is exponentially small in the system size. This implies the desired result.

\subsection{Time-evolution in the Anderson localization regime\label{subs:timeevolution} }
As we argued in Section~\ref{subs:EC}, we can formally compute the storage fidelity by
considering encoded states $|g\ra=\alpha_0 |g_0\ra + \alpha_1 |g_1\ra$ even though these states
are unphysical.
Let $\hat{U}$ be the quasi-adiabatic continuation operator satisfying the conditions of Theorem~\ref{thm:QAC},
see Eqs.~(\ref{qac1},\ref{qac2}),
that is, $|{g}_\sigma\ra=\hat{U}|\tilde{g}_\sigma\ra$,
where $|\tilde{g}_\sigma\ra$ is the  ground state of $\hat{H}_0+\hat{V}$
in the sector with fermionic parity $\sigma$.
We can rewrite the time-evolved state as
\bea
|g(t)\ra&=&e^{i(\hat{H}_0+\hat{V})t} \hat{U} e^{-i(\hat{H}_0+\hat{V})t} e^{i(\hat{H}_0+\hat{V})t}  \hat{U}^\dag |g\ra \nn \\
&=& e^{i(\hat{H}_0+\hat{V})t} \hat{U} e^{-i(\hat{H}_0+\hat{V})t} (\alpha_0 |\tilde{g}_0\ra + e^{\pm i\delta t} \alpha_1 |\tilde{g}_1\ra) \nn \\
&=& e^{i(\hat{H}_0+\hat{V})t} \hat{U} e^{-i(\hat{H}_0+\hat{V})t} \hat{U}^\dag |h(t)\ra,
\eea
where $\delta=E_1^\uparrow - E_0^\uparrow$ is the exponentially small
energy splitting, see Theorem~\ref{thm:gap}, and
\[
|h(t)\ra\equiv  \alpha_0 |{g}_0\ra + e^{\pm i\delta t} \alpha_1 |{g}_1\ra
\]
is a `dephased version' of the initial state $|g\ra$.
(Here and below we ignore the overall phase of quantum states).
The operator $\hat{U}$ describes a unitary evolution under
a time-dependent Hamiltonian
\begin{align*}
\hat{D}(u)&=\frac{i}{4}\sum_{p,q=1}^N D_{p,q}(u)c_pc_q\ ,\qquad 0\leq u\leq 1\ ,
\end{align*}
with exponentially decaying interactions,
\begin{align}
|D_{p,q}(u)|\leq c\epsilon r^2 \lambda^{|p-q|}\qquad\textrm{ where }\lambda=(2/3)^{\textfrac{1}{4r}}\ , \label{Ddecayp}
\end{align}
see Theorem~\ref{thm:QAC}. Defining
\begin{align*}
\hat{U}(t)&=e^{i(\hat{H}_0+\hat{V})t}\hat{U}e^{-i(\hat{H}_0+\hat{V})t}
\end{align*}
we have
\be
\label{Wop}
\ket{g(t)}=\hat{U}(t)\hat{U}^\dagger\, \ket{h(t)}\ .
\ee
We will use~\eqref{Ddecayp} to argue $\hat{U}$ (respectively $\hat{U}^\dagger$) has small weight on uncorrectable errors. We will subsequently use the localization condition~\eqref{DL} to argue that the same is true for~$\hat{U}(t)\hat{U}^\dagger$.

\subsection{Uncorrectable errors and their qubit weight}
\label{subs:JW}
It will be convenient to describe the error correction
algorithm of Section~\ref{subs:EC}
by mapping the fermionic modes to qubits via the  Jordan-Wigner transformation
\be
\label{JW}
c_{2j-1}=Z_1 \cdots Z_{j-1} X_{j} \quad \mbox{and}\quad
c_{2j} = Z_1 \cdots Z_{j-1} Y_{j}\ ,
\ee
where $j=1,\ldots,N$ and $X_j$ ($Z_j$) stands for the Pauli $\sigma^x$  ($\sigma^z$)
operator acting on a qubit $j$.  This establishes a one-to-one correspondence between Majorana monomials
and $N$-qubit Pauli operators.
For any Majorana monomial $E=c_{p_1} \cdots c_{p_k}$ we define
the {\em fermionic weight} $\fw(E)$ as the number of Majorana modes in $E$, that is,
$\fw(E)=k$. We also define the {\em qubit weight} $\qw(E)$ as the number of qubits acted on by the Pauli  version of $E$.
Applying the Jordan-Wigner transformation Eq.~(\ref{JW}) to the stabilizers
$\hat{S}_j=(-i) c_{2j} c_{2j+1}$ and elementary errors $E_j=(-1)^{a^\dag_j a_j}=(-i)c_{2j-1}c_{2j}$ one gets
\[
\hat{S}_j= X_j X_{j+1}, \quad j=1,\ldots,N-1 \quad \quad  \mbox{and} \quad \quad
E_j=Z_j, \quad j=1,\ldots,N\ .
\]
The stabilizer code specified by $\hat{S}_1,\ldots,\hat{S}_{N-1}$ is the repetition code
on $N$ qubits. It corrects Pauli errors of $Z$-type on any subset of  at most $N/2-1$ qubits.
Although Pauli errors of $X$-type cannot be corrected, fermionic superselection rules imply
that only an even number of $X$-type errors can occur since any $X$-type error changes the
parity of the total number of fermions. Hence admissible errors of $X$-type
are always products of stabilizers $\hat{S}_j$ and can be ignored.
A general Pauli error $E$ can be uniquely represented as $E=E^X E^Z$,
where $E^X$ and $E^Z$ are products of $X$-type and $Z$-type errors respectively.
Let us say that $E$ is an {\em uncorrectable error} iff $E^Z$ acts on at least $N/2$ qubits.
It follows that any uncorrectable error has qubit weight at least $N/2$.

\subsection{Errors caused by  the quasi-adiabatic evolution operator\label{sec:errorboundsec}}
Let us expand the unitary evolution operator $\hat{U}$ of Theorem~\ref{thm:QAC}
in the basis of Majorana monomials:
\be
\label{opbasis}
\hat{U}=\sum_{E} \omega_E \, E, \quad \quad \omega_E=\frac1{2^N} \trace{(E^\dag \hat{U})}\ .
\ee
Here the sum runs over all $4^N$ Majorana monomials $E$.
The goal of this section is to bound the amplitudes $\omega_E$ of uncorrectable errors $E$.
\begin{lemma}
\label{lemma:omega(P)}
Let $E$ be any Majorana monomial with fermionic weight $\fw$
and qubit weight $\qw$.   Then
\be
\label{PauliWeight}
|\omega_E| \le
\epsilon^{\fw/4} \lambda^{\qw} e^{yN}, \quad \mbox{where} \quad y\equiv \frac{3\sqrt{\epsilon} cr^2}{1-\sqrt{\lambda}}\ ,
\ee
and where the constants $c$ and $1/2\leq \lambda\leq 1$ are defined by~\eqref{Ddecayp}.
\end{lemma}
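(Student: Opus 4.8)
The plan is to expand the time-ordered exponential defining~$\hat U$ as a Dyson series and read off~$\omega_E$ as a sum over \emph{bond configurations}, then control that sum using the decay~\eqref{Ddecayp} together with two elementary counting facts. Writing $\hat D(u)=\frac i4\sum_{p,q}D_{p,q}(u)\,c_pc_q$, I would expand
\[
\hat U=\calT\cdot\exp\Big[i\int_0^1\hat D(u)\,du\Big]=\sum_{k\ge0} i^k\int_{0\le u_1\le\cdots\le u_k\le1}\hat D(u_k)\cdots\hat D(u_1)\,du_1\cdots du_k
\]
and substitute the quadratic form of each $\hat D(u_a)$. Each resulting term is a product $c_{p_k}c_{q_k}\cdots c_{p_1}c_{q_1}$ which reduces to $\pm E$ precisely when the multiset of endpoints $\{p_a,q_a\}$ has odd multiplicity exactly on the fermionic support $S(E)$ of~$E$ (the ordering only affects an overall phase). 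Since the Majorana monomials are orthonormal under $\frac1{2^N}\trace(\cdot^\dagger\cdot)$, only such terms survive in $\omega_E=\frac1{2^N}\trace(E^\dagger\hat U)$. Bounding the simplex integral by its volume~$1/k!$ and applying~\eqref{Ddecayp} uniformly in~$u$, I obtain
\[
|\omega_E|\le\sum_{k}\frac{(cr^2)^k\epsilon^k}{4^k\,k!}\sum_{\text{configs with odd-mult set }=S(E)}\lambda^{\sum_a|p_a-q_a|}.
\]

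Two observations then do the work. First, the odd-multiplicity set has exactly $\fw$ elements drawn from the $2k$ endpoints, so $2k\ge\fw$; splitting $\epsilon^k=\epsilon^{k/2}\epsilon^{k/2}$ lets me extract $\epsilon^{k/2}\le\epsilon^{\fw/4}$. Second---and this is the geometric heart---the qubit weight is subadditive under operator multiplication (the Pauli support of a product lies in the union of the factors' supports), while a direct Jordan--Wigner computation (the $Z$-strings telescope, so $c_pc_q$ with $p<q$ acts on qubits $\lceil p/2\rceil,\ldots,\lceil q/2\rceil$) gives $\qw(c_pc_q)\le\tfrac12|p-q|+O(1)$. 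Hence for any configuration reducing to~$E$,
\[
\qw(E)\le\sum_{a=1}^k\qw(c_{p_a}c_{q_a})\le\tfrac12\sum_a|p_a-q_a|+O(k).
\]
I would use this to pull out the factor~$\lambda^{\qw}$ by splitting the decay in half: since $\tfrac12\sum_a|p_a-q_a|\ge\qw-O(k)$ and $\lambda\ge1/2$, one factor gives $\lambda^{\frac12\sum_a|p_a-q_a|}\le\lambda^{\qw}\,2^{O(k)}$, while the residual half $\prod_a\sqrt\lambda^{|p_a-q_a|}$ is retained for the summation.

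With the constants $\epsilon^{\fw/4}$ and $\lambda^{\qw}$ extracted, I would bound the remaining sum by dropping the odd-multiplicity constraint---enlarging the domain of a nonnegative sum only increases it---and summing each bond independently. Using $\sum_{p,q=1}^{2N}\sqrt\lambda^{|p-q|}\le\frac{4N}{1-\sqrt\lambda}$, the $k$-sum exponentiates,
\[
\sum_k\frac1{k!}\Big(\text{const}\cdot\sqrt\epsilon\,cr^2\sum_{p,q}\sqrt\lambda^{|p-q|}\Big)^k\le\exp\!\Big(\frac{\text{const}\cdot\sqrt\epsilon\,cr^2 N}{1-\sqrt\lambda}\Big),
\]
which is exactly the $e^{yN}$ factor once the constants are tracked to yield $y=\frac{3\sqrt\epsilon\,cr^2}{1-\sqrt\lambda}$. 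Assembling the three factors gives~\eqref{PauliWeight}.

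The main obstacle I anticipate is the geometric lemma $\qw(E)\le\tfrac12\sum_a|p_a-q_a|+O(k)$: one must verify the single-bond estimate through the Jordan--Wigner strings and then confirm that the $O(k)$ correction is absorbed cleanly into the free summation via $\lambda\ge1/2$ without spoiling the constant in~$y$. The bookkeeping of which half of the decay pays for~$\lambda^{\qw}$ versus the configuration sum is precisely where the $1-\sqrt\lambda$ denominator originates; matching the numerical constant requires care but no new idea.
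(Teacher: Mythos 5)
Your proposal is correct and follows essentially the same route as the paper's proof: expand the time-ordered exponential, restrict to configurations reproducing $E$ (giving $2k\ge \fw$), use the Jordan--Wigner bound $\qw(c_pc_q)\le(3+|p-q|)/2$ together with subadditivity of qubit weight to extract $\lambda^{\qw}$ from half of the decay, and sum the residual half freely to exponentiate into $e^{yN}$. The only cosmetic difference is that the paper writes the $O(k)$ correction explicitly as $\lambda^{-3n/2}\le 3^n$ rather than $2^{O(k)}$, which is exactly how the constant $3$ in $y$ arises.
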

\begin{proof}
Indeed, expanding the time-ordered exponential defining~$\hat{U}$ we get
\[
|\omega_E|\le \sum_{n=0}^\infty \frac{1}{4^n n!}
\sum_{\vec{p}} \sum_{\vec{q}} \, \max_{\vec{u}}\;  \Gamma^{(n)}_{\vec{p},\vec{q},\vec{u}} \, \, ,
\]
where
\[
\Gamma^{(n)}_{\vec{p},\vec{q},\vec{u}}=
|D_{p_1,q_1}(u_1) \cdots D_{p_n,q_n}(u_n)| \cdot
\frac1{2^N} |\trace{(E^\dag c_{p_1} c_{q_1} \cdots c_{p_n} c_{q_n})}|.
\]
In the above equations the sums over
$\vec{p}$ and $\vec{q}$ run over all  $n$-tuples of integers in the interval $[1,2N]$,
while the maximum is taken over all $n$-tuples $\vec{u}$ of real numbers in the interval~$[0,1]$.
We claim that $\Gamma^{(n)}_{\vec{p},\vec{q},\vec{u}}=0$ unless
\be
\label{nrange}
2n\ge \fw \quad \mbox{and} \quad \sum_{a=1}^n (3+|p_a-q_a|) \ge 2\qw\ .
\ee
The first condition is obvious: the fermionic weight of $c_{p_1} c_{q_1} \cdots c_{p_n} c_{q_n}$
is at most $2n$, while $E^\dag$ has fermionic weight $\fw$.
To derive the second condition we compute the trace by applying the Jordan-Wigner
transformation to all operators and noting that  $c_p c_q$ has a qubit weight at most $(3+|p-q|)/2$.
Combining the second condition in Eq.~(\ref{nrange}) with the bound  Eq.~(\ref{Ddecayp}) results in
\[
\Gamma^{(n)}_{\vec{p},\vec{q},\vec{u}} \le (c\epsilon r^2)^n \lambda^{\sum_{a=1}^n |p_a-q_a|}
\le (c\epsilon r^2\lambda^{-3})^n \lambda^{2\qw}\ .
\]
This implies
\[
\Gamma^{(n)}_{\vec{p},\vec{q},\vec{u}} \le (c\epsilon r^2\lambda^{-3/2})^n \lambda^{\qw} \lambda^{\sum_{a=1}^n |p_a-q_a|/2}\ .
\]
Taking into account that
\[
\sum_{\vec{p}} \sum_{\vec{q}} \, \lambda^{\sum_{a=1}^n |p_a-q_a|/2} \le
(2N)^n \left( \sum_{p\in \ZZ} \lambda^{|p|/2} \right)^n \le \frac{(4N)^n}{(1-\lambda^{1/2})^n}
\]
we arrive at
\[
|\omega_E|\le \lambda^{\qw} \sum_{n=\fw/2}^\infty \frac{x^n}{n!}, \quad x\equiv \frac{c\epsilon r^2 N\lambda^{-3/2}}{1-\lambda^{1/2}}\ .
\]
Since we assumed $1/2\le \lambda<1$, we have $\lambda^{-3/2}\le 3$.
Using $\epsilon^n\le \epsilon^{\fw/4} \epsilon^{n/2}$ and extending the sum to all non-negative $n$ we easily get Eq.~(\ref{PauliWeight}).
\end{proof}

We next consider the coefficients~$\nu_E$ in the expansion
\begin{align*}
\hat{U}(t) &=\sum_E \nu_E E\ ,\qquad\nu_E=\frac{1}{2^N}\trace(E^\dagger \hat{U}(t))\ .
\end{align*}
\begin{lemma}\label{lem:timeevolvedunitaryexpansion}
Let $E$ be any Majorana monomial with fermionic weight~$\fw$ and qubit weight~$\qw$. Assume that the localization condition~\eqref{DL} holds for some constants $C,\xi\geq 0$, where $C\geq 1$. Then
\begin{align*}
\ExpE{|\nu_E|}\leq
(1+\epsilon^{1/8})^{2N}e^{yN}\cdot
\begin{cases}
\epsilon^{N/(16\cdot 24)}\epsilon^{\fw/16}\qquad &\textrm{ if } \qw\geq N/4\ \textrm{ and }\ \fw> N/24\\
\max\{\lambda^{1/8},e^{-1/\xi}\}^NC^{\fw}\epsilon^{\fw/8}\qquad &\textrm{ if } \qw\geq N/4\ \textrm{ and }\ \fw\leq N/24\\
\epsilon^{\fw/8} &\textrm{ if }\qw <N/4\
\end{cases}
\end{align*}
where $y$ is defined in Eq.~\eqref{PauliWeight}.
\end{lemma}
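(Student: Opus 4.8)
The plan is to reduce the coefficients $\nu_E$ of the time-evolved unitary $\hat{U}(t)$ to those of $\hat{U}$ controlled in Lemma~\ref{lemma:omega(P)}, and then to feed in the localization assumption~\eqref{DL}. The starting point is that conjugation by $e^{i(\hat{H}_0+\hat{V})t}$ acts on the Majorana modes as the orthogonal matrix $R(t)$, see Eq.~\eqref{eq:cpheisenberg}, and hence maps a monomial of fermionic weight $m$ to a linear combination of monomials of the \emph{same} fermionic weight. Writing $\hat{U}=\sum_F\omega_F F$ and expanding $F(t)=e^{i(\hat{H}_0+\hat{V})t}Fe^{-i(\hat{H}_0+\hat{V})t}$ in the monomial basis, a direct computation using the anticommutation relations shows that the coefficient of a monomial $E$ (with sorted support $q_E$) in the image of $F$ (with sorted support $p_F$) is exactly the minor $\det R[p_F,q_E](t)$. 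This gives
\be
\nu_E=\sum_{F\,:\,\fw(F)=m}\omega_F\,\det R[p_F,q_E](t),\qquad m\equiv \fw(E),
\ee
which is precisely why the multi-point localization condition was phrased in terms of expected determinants.

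Next I would take absolute values and expectations. Since the bound of Lemma~\ref{lemma:omega(P)} is \emph{deterministic} --- it holds for every disorder realization, being a consequence of $\|V\|\le\epsilon$ and the decay~\eqref{Ddecayp} --- the factor $|\omega_F|\le \epsilon^{m/4}\lambda^{\qw(F)}e^{yN}$ may be pulled out of the expectation, leaving
\be
\ExpE{|\nu_E|}\le e^{yN}\epsilon^{m/4}\sum_{F\,:\,\fw(F)=m}\lambda^{\qw(F)}\,\ExpE{\left|\det R[p_F,q_E](t)\right|}.
\ee
Two estimates for the expected minor are available: the trivial bound $|\det R[p_F,q_E](t)|\le 1$, valid always since any square submatrix of an orthogonal matrix has singular values at most one; and the localization bound $\ExpE{|\det R[p_F,q_E](t)|}\le C^m e^{-N/\xi}$ from~\eqref{DL}, valid whenever $|p_F-q_E|_1\ge N/8$. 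To control the number of monomials I would use only the elementary counting inequality $\binom{2N}{m}\epsilon^{m/8}\le (1+\epsilon^{1/8})^{2N}$, a single term of a binomial expansion; this is the origin of the stated prefactor. In every case one copy of $\epsilon^{m/8}$ (out of $\epsilon^{m/4}$) is spent here together with $\lambda^{\qw(F)}\le 1$, and the remaining $\epsilon^{m/8}$ is distributed according to the case.

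The cases then split as follows. When $\qw<N/4$ (third case), or $\qw\ge N/4$ with $\fw>N/24$ (first case), I use only the trivial minor bound; the first case additionally converts one factor $\epsilon^{m/16}$ into $\epsilon^{N/384}$ using $m>N/24$, yielding $\epsilon^{N/(16\cdot 24)}\epsilon^{\fw/16}$. The interesting case is $\qw\ge N/4$ with $\fw\le N/24$, where I split the sum over $F$ at $|p_F-q_E|_1=N/8$. For the far part ($|p_F-q_E|_1\ge N/8$) the localization bound~\eqref{DL} applies and produces $C^m e^{-N/\xi}$; for the close part the hypothesis of~\eqref{DL} is unavailable, so instead I exploit that $E$ has large qubit weight: if $\qw(E)\ge N/4$ and $|p_F-q_E|_1<N/8$, then $F$ must itself satisfy $\qw(F)>N/8$, so that $\lambda^{\qw(F)}\le \lambda^{N/8}$ supplies the missing suppression. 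Replacing both contributions by the larger of $\lambda^{1/8}$ and $e^{-1/\xi}$ and using $C\ge 1$ combines them into the claimed bound.

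The key remaining ingredient, and the step I expect to be the main obstacle, is the geometric Lipschitz estimate relating Majorana displacement to qubit weight, namely $|\qw(F)-\qw(E)|\le \tfrac12|p_F-q_E|_1+O(m)$. I would prove it by interpolating from $E$ to $F$ one mode at a time --- matching the $a$-th smallest element of $q_E$ to the $a$-th smallest element of $p_F$ and keeping all positions distinct --- and observing that under the Jordan--Wigner transformation~\eqref{JW} relocating a single mode from $c_q$ to $c_p$ alters the Pauli operator only on the qubits between $\lceil q/2\rceil$ and $\lceil p/2\rceil$, hence changes the qubit weight by at most $|p-q|/2+O(1)$. Summing over the $m$ modes gives the estimate, and the restriction $\fw\le N/24$ in the second case is exactly what keeps the $O(m)$ correction small enough to deduce $\qw(F)>N/8$ from $\qw(E)\ge N/4$ and $|p_F-q_E|_1<N/8$. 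The delicate points here are handling the Jordan--Wigner string correctly and verifying that collisions can be avoided during the interpolation.
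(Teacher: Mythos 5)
Your proposal is correct and follows essentially the same route as the paper: the same determinant expansion $\nu_E=\sum_F\omega_F\det R[p_F,q_E](t)$ restricted to monomials of equal fermionic weight, the same deterministic use of Lemma~\ref{lemma:omega(P)}, the same counting inequality $\binom{2N}{\fw}\epsilon^{\fw/8}\le(1+\epsilon^{1/8})^{2N}$, the same three-way case split, and the same geometric estimate (the paper derives $\sum_a|p_a-q_a|\ge 2(\qw(c_{\vec q})-\qw(c_{\vec p}))-3\fw$ from the factorization $c_{\vec q}=c_{\vec p}\prod_a(c_{p_a}c_{q_a})$ and $\qw(c_pc_q)\le(3+|p-q|)/2$, which is exactly your one-mode-at-a-time relocation bound). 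The step you flagged as the main obstacle is handled in the paper in a few lines and poses no difficulty.
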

The above lemma is the only step in the proof that uses the dynamical localization condition of Definition~\ref{def:DL}.
In fact, the lemma can be proved using a slightly weaker version of Eq.~\eqref{DL}. Namely, we only need a bound
\[
\frac1{2^N}\ExpE{|\trace{( E_1(t) E)}|}\le C^m e^{-\frac{N}{\xi}}
\]
for any Majorana monomials $E,E_1$ with fermionic weight $m$ and
qubit weights $\qw(E)\ge N/4$, $\qw(E_1)<N/8$. This bound can be easily derived from Eq.~\eqref{DL} as explained below.
\begin{proof}
We have
\begin{align}
\nu_E=\sum_{E_1}\omega_{E_1}\frac{1}{2^N}\trace(E_1(t)E^\dagger)\label{eq:nuedef}
\end{align}
 where
 $\hat{A}(t)=e^{i(\hat{H}_0+\hat{V})t}\hat{A}e^{-i(\hat{H}_0+\hat{V})t}$ is the time-evolved operator~$\hat{A}$. If $E_1$ has fermionic weight $m$, i.e.,
\begin{align}
E_1&=c_{p_1}\cdots c_{p_{m}}\equiv c_{\vec{p}}\quad \textrm{for some }\quad p_1<p_2<\cdots <p_{m}\ , \label{eq:fermionicoperatoreone}
\end{align}
 then
\begin{align*}
E_1(t)&=c_{p_1}(t)\cdots c_{p_m}(t)=\frac{1}{m!}\sum_{\pi\in S_m}(-1)^\pi c_{p_{\pi(1)}}(t)\cdots c_{p_{\pi(m)}}(t)\
\end{align*}
since the time-evolved operators $c_p(t)$ anticommute.
With Eq.~\eqref{eq:cpheisenberg}, we get
\begin{align}
E_1(t)&=\frac{1}{m!}\sum_{\vec{q}}\tilde{R}_{\vec{p},\vec{q}} c_{\vec{q}}\ ,\qquad \tilde{R}_{\vec{p},\vec{q}}=\sum_{\pi\in S_m}(-1)^\pi \prod_{a=1}^m R_{p_{\pi(a)},q_a}=\det R[\vec{p},\vec{q}|\label{eq:submatrixdetdef}
\end{align}
where the sum is over all $m$-tuples $\vec{q}=(q_1,\ldots,q_m)$ of integers in the interval~$[1,2N]$ and $R_{p,q}=R_{p,q}(t)$.  The latter expression shows that $\tilde{R}_{\vec{p},\vec{q}}$ is antisymmetric in the components of $\vec{q}$. In particular, we can restrict the sum  to $m$-tuples with distinct entries, $q_a\neq q_b$ for $a\neq b$. We conclude that $E_1(t)$ is a linear combination of monomials with the same fermionic weight~$m$ as~$E_1$. Using this fact, we can restrict the sum~\eqref{eq:nuedef} to monomials~$E_1$ with the same fermionic weight~$\fw$ as~$E$. Applying Lemma~\ref{lemma:omega(P)} then  gives
\begin{align*}
\ExpE{|\nu_E|} &\leq \sum_{E_1:\fw(E_1)=\fw} \epsilon^{\fw/4}\lambda^{\qw(E_1)}e^{yN}\frac{1}{2^N}\ExpE{|\trace(E_1(t)E^\dagger)|}\\
&\leq \binom{2N}{\fw}\epsilon^{\fw/4}e^{yN}\max_{E_1:\fw(E_1)=\fw}\ExpE{\lambda^{\qw(E_1)}\frac{1}{2^N}|\trace(E_1(t)E^\dagger)|}\\
&\leq \epsilon^{\fw/8}(1+\epsilon^{1/8})^{2N}e^{yN}\max_{E_1:\fw(E_1)=\fw}\ExpE{\lambda^{\qw(E_1)}\frac{1}{2^N}|\trace(E_1(t)E^\dagger)|}\ .
\end{align*}
The third inequality of the Lemma follows immediately. Similarly, the first inequality follows because $\epsilon^{\fw/8}\leq \epsilon^{\fw/16}\epsilon^{N/(16\cdot 24)}$ for $\fw>N/24$. It remains to show that for all $E_1$ with $\fw(E_1)=\fw\leq N/24$, we have
\begin{align}
\ExpE{\lambda^{\qw(E_1)}\frac{1}{2^N}|\trace(E_1(t)E^\dagger)|}\leq \max\{\lambda^{1/8},e^{-1/\xi}\}^N C^\fw\qquad\textrm{if }\qw\geq N/4\ .\label{eq:toshoweq}
\end{align}
As this statement is trivial if $\qw(E_1)\geq N/8$,  it suffices to consider the case where $\qw(E_1)< N/8$.
If $E_1=c_{\vec{p}}$ for $\vec{p}=(p_1<\cdots <p_\fw)$, we have
\begin{align*}
E_1(t)&=\frac{1}{\fw!}\sum_{\vec{q}} \det R[\vec{p},\vec{q}] c_{\vec{q}}
\end{align*}
according to~\eqref{eq:submatrixdetdef} and hence
\begin{align}
\max_{E_1:\substack{\fw(E_1)=\fw,\\ \qw(E_1)< N/8}}\ExpE{\lambda^{\qw(E_1)}\frac{1}{2^N}|\trace(E_1(t)E^\dagger)|}&\leq \max_{\vec{p}: \qw(c_{\vec{p}})<N/8}\ExpE{|\det R[\vec{p},\vec{q}]|}\label{eq:e1bound}
\end{align}
assuming that~$E=c_{\vec{q}}$. Consider two Majorana monomials $c_{\vec{p}}$ and $c_{\vec{q}}$ with fermionic weight~$\fw$.  We claim that
\begin{align}
\sum_{a=1}^\fw |p_a-q_a|\geq N/8\qquad\textrm{ if } \qw(c_{\vec{q}})\geq N/4,\ \qw(c_{\vec{p}})<N/8\ \textrm{ and }\fw\leq N/24\ .\label{eq:distqubitweight}
\end{align}
Indeed, since $c_{\vec{q}}=\left(\prod_{a=1}^\fw c_{p_a}\right)\prod_{a=1}^\fw (c_{p_a}c_{q_a})$ (up to a sign), we  have
\begin{align*}
\qw(c_{\vec{q}})\leq \qw(c_{\vec{p}})+\sum_{a=1}^\fw \qw(c_{p_a}c_{q_a})\ .
\end{align*}
Using $\qw(c_pc_q)\leq (3+|p-q|)/2 $, we conclude that
\begin{align}
\sum_{a=1}^\fw |p_a-q_a|\geq 2(\qw(c_{\vec{q}})-\qw(c_{\vec{p}}))-3\fw\ ,
\end{align}
and the claim~\eqref{eq:distqubitweight} follows.
Combining~\eqref{eq:e1bound} and~\eqref{eq:distqubitweight} with our localization assumption~\eqref{DL} gives~\eqref{eq:toshoweq}, as desired.
\end{proof}

In summary, Lemma~\ref{lemma:omega(P)} and
Lemma~\ref{lem:timeevolvedunitaryexpansion}
show that
\begin{align*}
|\omega_E| &\leq \beta_1^{\fw}\Delta_1^N\lambda^\qw\ ,\qquad
\ExpE{|\nu_E|}\leq \beta_2^{\fw}\Delta_2^N\cdot \begin{cases}
\tilde{\lambda}^N\qquad&\textrm{ if }\qw\geq N/4\\
1 &\textrm{otherwise}
\end{cases}
\end{align*}
where
\begin{align*}
\begin{matrix}
\beta_1 &=&\epsilon^{1/4}\qquad &\lambda&=&(2/3)^{1/(4r)}\qquad &\Delta_1&=&e^y\\
\beta_2&=&C\epsilon^{1/16}\qquad &\tilde{\lambda}&=&\max\{\lambda^{1/8},e^{-1/\xi},\epsilon^{1/(16\cdot 24)}\} \qquad &\Delta_2&=&(1+\epsilon^{1/8})^2e^y\ .
\end{matrix}
\end{align*}
Consider two Majorana monomials  $E_1,E_2$ with fermionic weights $\fw_1$ and $\fw_2$ such that $\qw(E_1E_2)\geq N/2$. We then have
\begin{align}
\ExpE{|\omega_{E_1}\nu_{E_2}|}\leq \beta_1^{\fw_1}\beta_2^{\fw_2}\cdot (\Delta_1\Delta_2)^N\cdot \max\{\lambda,\tilde{\lambda}\}^N \label{eq:omeganubound}
\end{align}
because $\qw(E_1)+\qw(E_2)\geq \qw(E_1E_2)$.

\subsection{Lower bound on the storage fidelity}
We can now easily prove Theorem~\ref{thm:storage}.

\begin{proof}
Represent the time-evolved state $|g(t)\ra$ in Eq.~(\ref{Wop}) as
\[
|g(t)\ra = \hat{U}(t)\hat{U}^\dagger \, |h(t)\ra =  \psi_{\good}+ \psi_{\bad},
\]
where $\psi_{\good}$ and $\psi_{\bad}$  represent the overall contribution of correctable (qubit weight $<N/2$) and uncorrectable
(qubit weight $\ge N/2$) errors, respectively:
\[
\psi_{\good}=\sum_{E\, : \, \qw(E)<N/2} \; \theta_E\,  E\, |h(t)\ra
\quad
\mbox{and}
\quad
\psi_{\bad}=\sum_{E\, : \, \qw(E)\ge N/2} \; \theta_E\,  E\, |h(t)\ra.
\]
Here
\begin{align*}\theta_E=\frac{1}{2^N}\trace(E^\dagger\hat{U}(t)\hat{U}^\dagger)=\sum_{\substack{E_1,E_2:\\ E_2E_1=E}}\nu_{E_2}\omega_{E_1}\
\end{align*}
is the coefficient corresponding to the monomial~$E$ in the expansion of the operator~$\hat{U}(t)\hat{U}^\dagger$.
We obtain the following bound on the norm of $\psi_{\bad}$:
\begin{align*}
\| \psi_{bad}\| \le \sum_{E\, : \, \qw(E)\ge N/2} \; \; |\theta_E|&\leq \sum_{\substack{E_1,E_2:\\ \qw(E_2E_1)\ge N/2}} |\nu_{E_2}\omega_{E_1}|\ .
\end{align*}Taking the expectation over disorder realizations, we get
with~\eqref{eq:omeganubound},
\begin{align*}
\ExpE{\|\psi_{bad}\|}&\leq \sum_{\fw_1,\fw_2}\binom{2N}{\fw_1}\binom{2N}{\fw_2}\beta^{\fw_1}\beta_2^{\fw_2} (\Delta_1\Delta_2)^N\eta^N\\
&=\left((1+\beta_1)^{2}(1+\beta_2)^{2}(\Delta_1\Delta_2)\right)^N\max\{\lambda,\tilde{\lambda}\}^N\ .
\end{align*}
 Defining
\begin{align}
f(\epsilon,C,\xi,r)&\equiv e^{2y}(1+\epsilon^{1/8})^2(1+\epsilon^{1/4})^2(1+C\epsilon^{1/16})^2\max\{(2/3)^{1/(32r)},e^{-1/\xi},\epsilon^{1/(16\cdot 24)}\}\ ,\label{eq:fbaddef}
\end{align}
we conclude that $\ExpE{\| \psi_{bad}\|}\leq f(\epsilon,C,\xi,r)^N$. We will assume that $f(\epsilon,C,\xi,r)<1$. Let $\psi_{\good}'=\| \psi_{\good} \|^{-1}\cdot  \psi_{\good}$,  such that $\psi_{\good}'$ is a normalized state.
Simple algebra shows that
\[
\ExpE{\| \, |\psi_{\good}'\ra - |g(t)\ra \, \|}  \le 2\ExpE{\|\psi_{\bad}\|} \le 2f(\epsilon,C,\xi,r)^N\ .
\]
Furthermore, using the definition of the error correction map, see Eq.~(\ref{ECmap}), we get
\[
\Phi_{ec}(|\psi_{\good}'\ra\la \psi_{\good}'|)=|h(t)\ra\la h(t)|
\]
since $C(s) \hat{Q}_s E \, |h(t)\ra \sim |h(t)\ra$ for any correctable error $E$.
This allows us to bound the expected storage fidelity as
\begin{align*}
\ExpE{F(t)}\ge |\la  h(t)|g\ra|^2 -2\sqrt{2} f(\epsilon,C,\xi,r)^N=
\bigl||\alpha_0|^2+|\alpha_1|^2 e^{\pm i\delta t}\bigr|^2 - 2\sqrt{2} f(\epsilon,C,\xi,r)^N\ ,
\end{align*}
where $\delta=E_1^\uparrow - E_0^\uparrow$ is the exponentially small energy splitting
of the ground state, see Theorem~\ref{thm:gap}. Since~$\ket{g}$ is a normalized state, the first term on the rhs.~is lower bounded by $\cos^2(\delta t/2)$.

Collecting all the constants involved in the definition of $f(\epsilon,C,\xi,r)$
and taking into account that $\epsilon<1/4$ one easily gets the bound
\[
f(\epsilon,C,\xi,r)\le \exp{\left[ \alpha r^3 \sqrt{\epsilon} + \beta C \epsilon^{\frac1{16}} \right]}
\cdot
\max{\left[
e^{-\frac1{16r}}, e^{-\frac1{\xi}},e^{-\frac1{280}}\right]},
\]
where $\alpha,\beta=O(1)$ are some constants.
Furthermore, Theorem~\ref{thm:gap} implies
\[
\delta \le \left(\frac{2\epsilon}{1+\epsilon}\right)^{\frac{N}{2r}} \le 2^{-\frac{N}{2r}}.
\]
Therefore $\cos^2(\delta t/2)\ge 1- t^2 2^{-\frac{N}{r}}$.
This completes the proof of Theorem~\ref{thm:storage}.
\end{proof}

\section{Localization and the one-particle problem\label{sec:dynamiclocalization}}
In this section we explain how to map the one-particle Hamiltonian $H_0+V$ corresponding to the Majorana chain model
to the one-dimensional Anderson model with disorder in the hopping amplitudes.
We give evidence for the exponential localization of eigenvectors in the latter model by computing
Lyapunov exponents for the
one-particle eigenfunctions. Finally we show how to replace the expectation value of the determinant in the
multi-point dynamical localization condition  Eq.~(\ref{DL}) by more standard  multi-point correlation functions,
see Section~\ref{sec:momentlocalization}.

\subsection{Relation to  one-particle Anderson models\label{sec:oneparticleproblem}}

Consider a fixed configuration of the chemical potentials $\mu_1,\ldots,\mu_N$ and define
a triple of one-particle Hamiltonians acting on $\CC^{2N}$, $\CC^{2N}$, and $\CC^N$ respectively:
\begin{align}
H_0+V&=\left[ \ba{cccccccc}
 0 &       -\mu_1&  & & & &  \\
 \mu_1   &  0   & 1     &       & & & \\
          & -1   & 0     & -\mu_2 & & &  \\
          &      & \mu_2& 0      & \ddots \hphantom{\mu_4} & &  \\
          &&& \ddots  & \ddots\hphantom{\mu_4^2}&1 &\\
          &&&&-1 & 0  & -\mu_{N}  \\
& & & & &  \mu_{N} & 0 \\
\ea \right]\ .\label{eq:hslocalization}
\end{align}
\begin{align}
H_s&=\left[ \ba{cccccccc}
 0 &       \mu_1&  & & & &  \\
 \mu_1   &  0   & -1     &       & & & \\
          & -1   & 0     & \mu_2 & & &  \\
          &      & \mu_2& 0      & \ddots \hphantom{\mu_4} & &  \\
          &&& \ddots  & \ddots\hphantom{\mu_4^2}&-1 &\\
          &&&&-1 & 0  & \mu_{N}  \\
& & & & &  \mu_{N} & 0 \\
\ea \right]\ .\label{eq:hslocalizationhermitian}
\end{align}
\begin{align}
H_+&=\left[ \ba{cccccccc}
 \mu_1^2 & \mu_2 &  & & & &  \\
 \mu_2   &  \mu_2^2 & \mu_3 & & & & \\
 & \mu_3 & \mu_3^2 & \mu_4 & & &  \\
& & \mu_4 & \mu_4^2 & \ddots \hphantom{\mu_5} & &  \\
&&& \ddots  & \ddots\hphantom{\mu_4^2}&  \mu_{N-1} &\\
&&&& \mu_{N-1} & \mu_{N-1}^2  & \mu_{N}  \\
& & & & &  \mu_{N} & -1+\mu_N^2  \\
\ea \right]\ .\label{eq:oneparticleandersonmodel}
\end{align}
Note that $H_0+V$ is the one-particle Hamiltonian describing the Majorana chain model
with $J=1$ and  $|\Delta|=w$, see Eqs.~(\ref{H0},\ref{V}).
The Hamiltonian Eq.~\eqref{eq:hslocalizationhermitian} corresponds to an Anderson model with off-diagonal disorder.
It describes a particle that can hop between adjacent sites,
where every second hopping amplitude is random and the remaining hopping amplitudes are homogeneous.
Hamiltonian~\eqref{eq:oneparticleandersonmodel} involves  both off-diagonal as well as (correlated) on-site disorder. Because~\eqref{eq:oneparticleandersonmodel} avoids the bipartite structure inherent in~\eqref{eq:hslocalizationhermitian}, this form will be more useful for our purposes. Furthermore, if one neglects the diagonal matrix elements
in~\eqref{eq:oneparticleandersonmodel}, one obtains an  Anderson model with uncorrelated random
hopping amplitudes which has been very well studied. In particular, dynamical localization of this model
has been established\footnote{Strictly speaking, the dynamical localization bound of Ref.~\cite{DelyonKunzSouillard83}
applies to any energy interval that does not contain the point $E=0$.}
in~\cite{DelyonKunzSouillard83}.

Let us now describe relationships between the three Hamiltonians defined above.
We begin by noting that  $H_s$ and $i(H_0+V)$ can be mapped to each other by
a diagonal unitary operator:
\begin{align}
H_s&\equiv i\, U(H_0+V)U^\dagger, \quad
\mbox{where} \quad U=\sum_{p=1}^{2N} i^{p-1} \, |p\ra\la p|.
\label{eq:hsexpression}
\end{align}
Hence these two Hamiltonians are equivalent as far as their localization properties are concerned.

To describe the map from $H_0+V$ to $H_+$
 let us identify the $2N$~Majorana operators $c_1,\ldots,c_{2N}$ with standard basis vectors of a Hilbert space~$\mathcal{H}\cong\mathbb{C}^N\otimes\mathbb{C}^2$ such that
\begin{align}
c_{2p-1} \leftrightarrow \ket{p}\otimes\ket{0}\qquad\textrm{ and }\qquad c_{2p}\leftrightarrow \ket{p}\otimes\ket{1}\ ,\qquad p=1,\ldots,N\ .\label{eq:cpencoding}
\end{align}
We then have
\begin{align*}
H_0&=T\otimes \ket{1}\bra{0}-T^\dagger\otimes\ket{0}\bra{1}\qquad T=\sum_{k=1}^{N-1}\ket{k}\bra{k+1}\\
V&=S\otimes\left(\ket{1}\bra{0}-\ket{0}\bra{1}\right)\qquad S=\sum_{k=1}^N\mu_k \proj{k}\ .
\end{align*}
A straightforward calculation  gives
\begin{align*}
-(H_0+V)^2&=(T^\dagger+S)(T+S)\otimes\proj{0}+(T+S)(T^\dagger+S)\otimes\proj{1}\ .
\end{align*}
In terms of these operators, $H_+$ is given by
\begin{align}
H_+&=(T+S)(T^\dagger+S)-I\ .\label{eq:oneparticleandersonmodelp}
\end{align}
The spectrum of $H_+$ is therefore formed by $N$ eigenvalues $E_j=\lambda_j^2-1$,
$j=1,\ldots,N$. Localization properties of eigenvectors of $H_0+V$ can directly  be translated to
those of $H_+$ and vice versa.
Indeed, projecting Williamson eigenvectors~$\psi_j$ onto the even sublattice we obtain a complete family of eigenvectors for~$H_+$. Conversely, let
$\phi\in \CC^N$ be a (real) eigenvector of $H_+$ with an eigenvalue $E\ge -1$.
Using the identity $E+1=\|(T^\dagger+S)\ket{\phi}\|^2$ it is easy to check that
\begin{align}
\ket{\psi}&=\frac{i}{\|(T^\dagger+S)\ket{\phi}\|}(T^\dagger+S)\ket{\phi}\otimes\ket{0}+\ket{\phi}\otimes\ket{1}\ .\label{eq:oneparticlemultiparticle}
\end{align}
and $\bar{\psi}$ are orthogonal eigenvectors of~$H_0+V$ with eigenvalues~$\pm i\sqrt{E+1}$ (cf.~Fact~\ref{fact:1}). Furthermore, since~$T^\dagger-S$ is local and the encoding~\eqref{eq:cpencoding} respects locality (modulo a factor of $2$), localization properties of $\ket{\phi}$ translate directly into those of $\ket{\psi}$ as claimed.

\subsection{Lyapunov exponents \label{sec:lyapunov}}
Since there is a local correspondence between eigenvectors of $H_0+V$ and
eigenvectors of $H_+$, see Eq.~(\ref{eq:oneparticlemultiparticle}), it suffices
to establish localization properties for the latter.
Consider an eigenvector
\begin{align}
H_+ \psi =E\psi\ ,\qquad \psi=\sum_{n=1}^N (-1)^n\,  \psi_n \ket{n}. \label{eq:Beigenvector}
\end{align}
Using the explicit form of $H_+$, see Eq.~(\ref{eq:oneparticleandersonmodel}), one gets
\bea
(\mu_1^2-E)\psi_1&=&\mu_2 \psi_2, \nn \\
(\mu_n^2-E)\psi_n &=& \mu_{n+1} \psi_{n+1} + \mu_n \psi_{n-1}, \quad \mbox{$n=2,\ldots,N-1$}, \nn \\
(\mu_N^2-E-1)\psi_N  &=& \mu_N \psi_{N-1}\ . \nn
\eea
This can be rewritten using the transfer matrix representation as
\[
\left[ \ba{c} \psi_{n+1} \\ \psi_n \\ \ea \right]
=M_n \left[ \ba{c} \psi_{n} \\ \psi_{n-1} \\ \ea \right],
\quad \quad
\mbox{where} \quad
M_n=\left[ \ba{cc}
\frac{\mu_n^2 - E}{\mu_{n+1}} & -\frac{\mu_n}{\mu_{n+1}} \\
1 & 0 \\ \ea \right].
\]
It follows that
\[
\left[ \ba{c} \psi_{n+1} \\ \psi_n \\ \ea \right] = M_n  \cdots M_2 M_1 \left[ \ba{c} \psi_{1} \\ 0 \\ \ea \right].
\]
It will be convenient to introduce an auxiliary transfer matrix
\[
T_n=\left[ \ba{cc}
\mu_{n+1} & 0 \\
0 & 1 \\ \ea \right]\cdot M_n \cdot \left[ \ba{cc}
\mu_{n}^{-1} & 0 \\
0 & 1 \\ \ea \right]=
\left[ \ba{cc}
\frac{\mu_n^2 - E}{\mu_{n}} & -\mu_n \\
\mu_n^{-1} & 0 \\ \ea \right] \in SL(2,\RR).
\]
Then
\be
\label{lyap1}
\frac{\psi_{n+1}}{\psi_1} = \left(\frac{\mu_1}{\mu_{n+1}}\right) (T_n \cdots T_2 T_1)_{1,1}.
\ee
It is important that the matrix $T_n$ depends only on $\mu_n$. Since $\{\mu_n\}$ are i.i.d. random
variables, we can regard the product $T_n\cdots T_2 T_1$ as a random walk on $SL(2,\RR)$.
For later use let us denote
\[
T(\zeta)=\left[ \ba{cc}
\frac{\zeta^2 - E}{\zeta} & -\zeta \\
\zeta^{-1} & 0 \\ \ea \right].
\]
Let $\mathbb{P}^1$ be the real projective plain, i.e., the set of two-dimensional non-zero real vectors
where pairs of vectors lying on the same line are identified. Any matrix $T\in SL(2,\mathbb{R})$ defines
a one-to-one map $T\, : \, \mathbb{P}^1 \to \mathbb{P}^1$.
We will use the following statement (see e.g.,~\cite{damanik11}), which is a variant of F\"urstenberg's theorem:
\begin{theorem}[\cite{Fuerstenberg63}]\label{thm:fuerstenberg}
Let $\rho(\mu_j)$ be the probability density of $\mu_j$.
Suppose $T(\zeta)\in SL(2,\mathbb{R})$ satisfies
\begin{align}
\int \log \|T(\zeta)\| d\rho(\zeta)<\infty\ .\label{eq:integrablelog}
\end{align}
Let $G_\rho$ be the smallest closed subgroup of $SL(2,\mathbb{R})$ containing
all matrices $T(\zeta)$ with $\rho(\zeta)>0$.   Assume that
\begin{enumerate}[(i)]
\item
$G_\rho$ is not compact and
\item
there is no subset $L=\{v_1,v_2\}\subseteq \mathbb{P}^1$  with one or two elements such that $T(v_1)\in L$
 and $T(v_2)\in L$ for every $T\in G_\rho$.
\end{enumerate}
Then there exists $\ell>0$ that does not depend on disorder such that
\begin{align}
\lim_{n\to \infty} \frac1n \log{\|T_n \cdots T_2 T_1\|}=\ell\label{eq:normlimit}
\end{align}
with probability $1$ over disorder realizations.
\end{theorem}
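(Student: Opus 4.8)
The plan is to separate the claim into two parts: the \emph{existence} of the limit in~\eqref{eq:normlimit}, which is soft, and its strict \emph{positivity}, which is the heart of the matter. For existence I would invoke Kingman's subadditive ergodic theorem. Writing $a_n=\log\|T_n\cdots T_2 T_1\|$, submultiplicativity of the operator norm gives $a_{n+m}\le \log\|T_{n+m}\cdots T_{m+1}\|+a_m$, where the first term is independent of $a_m$ and equidistributed with $a_n$ because the $T_j=T(\mu_j)$ are i.i.d. Since every $T\in SL(2,\RR)$ has $\|T\|\ge 1$ (its two singular values multiply to $1$), we have $a_1=\log\|T_1\|\ge 0$, and the integrability hypothesis~\eqref{eq:integrablelog} is precisely $\ExpE{a_1}=\int\log\|T(\zeta)\|\,d\rho(\zeta)<\infty$. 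Kingman's theorem then yields $a_n/n\to \ell$ almost surely and in $L^1$, with $\ell=\inf_n \ExpE{a_n}/n\ge 0$; ergodicity of the shift on the i.i.d.\ sequence $\{\mu_j\}$ forces the limit to be the deterministic constant $\ell$, independent of the disorder realization. This establishes~\eqref{eq:normlimit} with some $\ell\ge 0$.

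The remaining and substantial task is to rule out $\ell=0$, for which I would follow Furstenberg's original strategy via stationary measures on the projective line. Let $\kappa$ denote the common law of $T(\mu_j)$ on $SL(2,\RR)$ and consider the induced action on the compact space $\mathbb{P}^1$. A Krylov--Bogolyubov averaging argument produces at least one $\kappa$-stationary probability measure $\nu$ on $\mathbb{P}^1$, i.e.\ one satisfying $\int_{SL(2,\RR)} T_*\nu\,d\kappa(T)=\nu$. The key analytic input is Furstenberg's integral formula for the top exponent,
\[
\ell=\int_{SL(2,\RR)}\int_{\mathbb{P}^1}\log\frac{\|Tv\|}{\|v\|}\,d\nu([v])\,d\kappa(T),
\]
valid for a stationary $\nu$, so that proving $\ell>0$ amounts to showing that this integral cannot vanish.

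Positivity is where hypotheses~(i) and~(ii) enter essentially. Condition~(ii) says exactly that $G_\rho$ acts strongly irreducibly on $\RR^2$ (no union of at most two lines is preserved), and condition~(i) is non-compactness; together these are Furstenberg's hypotheses. Under them one first shows that any stationary $\nu$ is non-atomic: an atom of maximal mass would generate, by stationarity, a finite $G_\rho$-invariant set of points in $\mathbb{P}^1$, contradicting~(ii). Positivity then follows from the contraction dichotomy: if $\ell=0$, both Lyapunov exponents vanish (their sum is $0$ since $\det T_j=1$), and a martingale/symmetrization argument shows that the normalized products $T_n\cdots T_1$ fail to contract $\mathbb{P}^1$ toward a random direction; this rigidity forces $G_\rho$ to preserve a probability measure of a special form, leading either to relative compactness of $G_\rho$ (contradicting~(i)) or to a finite invariant set of directions (contradicting~(ii)). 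Hence $\ell>0$. The main obstacle is precisely this positivity step: existence of $\ell$ via Kingman is routine, whereas excluding $\ell=0$ requires the full contraction theory of the projective random walk together with the careful simultaneous use of irreducibility and non-compactness. For the concrete application it then suffices to check~(i) and~(ii) for the explicit one-parameter family $T(\zeta)$.
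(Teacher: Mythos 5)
This statement is not proven in the paper at all: it is Furstenberg's theorem, imported verbatim from the literature (the citation \cite{Fuerstenberg63}, in the form given in standard surveys), and the paper's own contribution in Section~5.2 is limited to verifying the integrability condition~\eqref{eq:integrablelog} and hypotheses~(i),(ii) for the concrete family $T(\zeta)$. Your proposal therefore does something the paper deliberately avoids, namely sketching a proof of the cited theorem itself. As an outline it is the correct classical argument: Kingman's subadditive ergodic theorem applied to $a_n=\log\|T_n\cdots T_1\|$ (with $a_n\ge 0$ because every $T\in SL(2,\RR)$ has $\|T\|\ge 1$, and with $\ExpE{a_1}<\infty$ supplied by~\eqref{eq:integrablelog}) gives almost-sure convergence to a deterministic $\ell\ge 0$, and the strict positivity is Furstenberg's stationary-measure argument on $\mathbb{P}^1$. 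Two caveats on the positivity half, which you correctly identify as the hard part but treat only at the level of assertion: first, the integral formula $\ell=\int\int\log\left(\|Tv\|/\|v\|\right)d\nu\,d\kappa$ identifies the \emph{top} exponent only once one knows the stationary measure is proper (non-atomic), which is exactly where strong irreducibility~(ii) enters — your ordering of the steps is right, but the formula cannot be invoked before non-atomicity is established; second, the ``contraction dichotomy'' ruling out $\ell=0$ is the genuinely technical core (Furstenberg's martingale convergence of $(T_1\cdots T_n)_*\nu$ and the classification of measures invariant under a non-compact, strongly irreducible group), and your sketch compresses it to a sentence. For the purpose of this paper none of that is needed: the theorem is a black box, and the only work required is the hypothesis check for $T(\zeta)$, which you mention at the end but do not carry out and which is precisely what the paper does do.
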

To verify that this theorem can be applied, observe that the integrability condition~\eqref{eq:integrablelog} is satisfied since we assume that $\rho$ has compact support $[\mu-\eta,\mu+\eta]$ and
$\|T(\zeta)\| \le 2\max{(\zeta,\zeta^{-1},(\zeta^2-E)\zeta^{-1})}$, so the integral in Eq.~(\ref{eq:integrablelog})
converges absolutely.
Let $x,y$ be  in the support of $\rho$ and define
\begin{align}
T\equiv T(x)^{-1}T(y)&=\left[\begin{matrix}
a & 0\\
a-a^{-1} & a^{-1}
\end{matrix}\right]\qquad\textrm{ where }a\equiv x/y\ .\label{eq:Txydef}
\end{align}
Because
\begin{align}
T^{p}&=\left[\begin{matrix}
a^p & 0\\
a^p-a^{-p} & a^{-p}
\end{matrix}\right]\in G_\rho\qquad\textrm{ for every }p\ge 1\ ,\label{eq:tpexponent}
\end{align}
and we can assume without loss of generality that $a>1$, we conclude that  $G_\rho$ is not compact.
Similarly, we obtain from~\eqref{eq:tpexponent}
\begin{align*}
\lim_{p\rightarrow\infty} T^{p}(v) = v_1, \qquad\textrm{ where } v_1 = \left(\begin{matrix} 1\\ 1\end{matrix}\right)\
\end{align*}
for any $v\in \mathbb{P}^1$ assuming $a>1$ (that is, $x>y$). Since all powers of~$T$ belong to $G_\rho$, we conclude that if a subset $L\subset \mathbb{P}^1$ satisfying condition~(ii) exists then
$v_1\in L$.
On the other hand,  interchanging the role of $x$ and $y$ in~\eqref{eq:Txydef} gives  $a<1$ and leads to
\begin{align*}
\lim_{p\rightarrow\infty} T^{p}(v) =v_2 \qquad\textrm{ where }v_2 = \left(\begin{matrix} 0\\ 1\end{matrix}\right)
\end{align*}
for any $v\in \mathbb{P}^1$. Hence if a subset $L\subset \mathbb{P}^1$ satisfying condition (ii) exists then
$L=\{v_1,v_2\}$.  Take $x\neq 0$ in the support of~$\rho$. Then $T(x)v_2=-x\left(\begin{matrix}
1\\
0\end{matrix}\right)\notin L$, hence there is no  subset~$L$ satisfying condition~(ii).

A theorem by Oseledec~\cite{Oseledec68} and Ruelle~\cite{Ruelle79} states that if a sequence of matrices $\{T_n\in SL(2,\mathbb{R})\}_{n\geq 1}$ satisfies
\begin{align*}
\lim_{n\rightarrow \infty}\frac{1}{n}\|T_n\|=0
\end{align*}
and~\eqref{eq:normlimit} for some constant $\ell>0$, then there is a basis $\{v_+,v_-\}$ of $\mathbb{R}^2$ such that
\begin{align}
\lim_{n\rightarrow \infty}\frac{1}{n}\log \|T_n\cdots T_1v_{\pm}\|=\pm \ell\ .\label{eq:oseledec}
\end{align}
The number  $\ell$ will be called the Lyapunov exponent of the eigenvector $\psi$.  From Eq.~\eqref{lyap1} and~\eqref{eq:oseledec} we infer that $\ell$ controls the rate of the exponential
growth (decay) of the chosen eigenvector, that is,
\be
\ell= \lim_{n\to \infty} \frac1n \log |\psi_n/\psi_1|.
\ee
Let $\ell(E)$ be the Lyapunov exponent considered as a function of the energy~$E$.
The main goal of this section is to compute~$\ell(E)$ in the limit of weak perturbations,
that is, $\epsilon=\mu+\eta\ll 1$.
We will argue  that
\begin{align}
\min_E\ell(E)\sim \frac{1}{\log{(1/\epsilon)}}\label{eq:lyapunovexponentscaling}
\end{align}
in the limit $\epsilon \to 0$, if  taking the limit one keeps the ratio $\mu/\eta$
between the strengths of the homogeneous and disordered part of the perturbation  fixed. Assuming
that the localization length~$\xi$ scales roughly as the inverse of the Lyapunov exponent, this suggests that
the localization condition  required for Theorem~\ref{thm:storage}  holds  for sufficiently weak perturbations (cf.~Section~\ref{sec:momentlocalization}).

Introduce  variables
\[
z_n=\frac{\psi_{n-1}}{\psi_n} \quad \mbox{for} \quad  n\ge 2 \quad \mbox{and} \quad z_1=0.
\]
Rewriting the eigenvalue equations for $\psi_n$ in terms of $z_n$ we obtain
\begin{align}
z_1 &= 0, \nn \\
&& \nn \\
z_{n+1} &= \frac{\mu_{n+1}}{\mu_n^2 - E - \mu_n z_n}, \quad n=1,\ldots,N-1, \label{eq:eigenvalueeq} \\
&& \nn \\
z_N &= \frac{\mu_N^2 - E -1}{\mu_N}\ . \nn
\end{align}
The Lyapunov exponent is
\begin{align}
\ell(E)\approx \frac{1}{N}\sum_{j=2}^N\log|z_j|\qquad\textrm{ for large }N\ .\label{eq:lyapunovdefinitionz}
\end{align}
We will now use Eqs.~(\ref{eq:eigenvalueeq},\ref{eq:lyapunovdefinitionz}) to argue that the minimal Lyapunov exponent obeys the scaling~\eqref{eq:lyapunovexponentscaling}.

Let us first discuss some numerical  evidence for~\eqref{eq:lyapunovexponentscaling}: we compute $\ell(E)$  numerically for different energies~$E$ using the recursion relation~\eqref{eq:eigenvalueeq}.
We choose the chemical potentials $\{\mu_j\}_j$ identically and independently distributed according to~\eqref{mu}, keeping the ratio~$\mu/\eta$ fixed while varying the average potential~$\mu$ (this corresponds to the perturbation strength~$\epsilon$). Throughout, we set~$N=10^{7}$. Figure~\ref{figlyap} shows that $\ell(E)$ is minimal around~$E\approx\mu^2$; furthermore, this minimum becomes more pronounced as the overall perturbation strength  $\mu$ is decreased.
\begin{figure}[ht]
\centering
\begin{center}
\subfigure[Lyapunov exponent~$\ell(E)$ as a function of~$E$, for $\mu=2^{-3}$ and $\mu/\eta\in\{1.5,2,2.5\}$. The inset shows the behavior around the minimum.]{
\epsfig{file=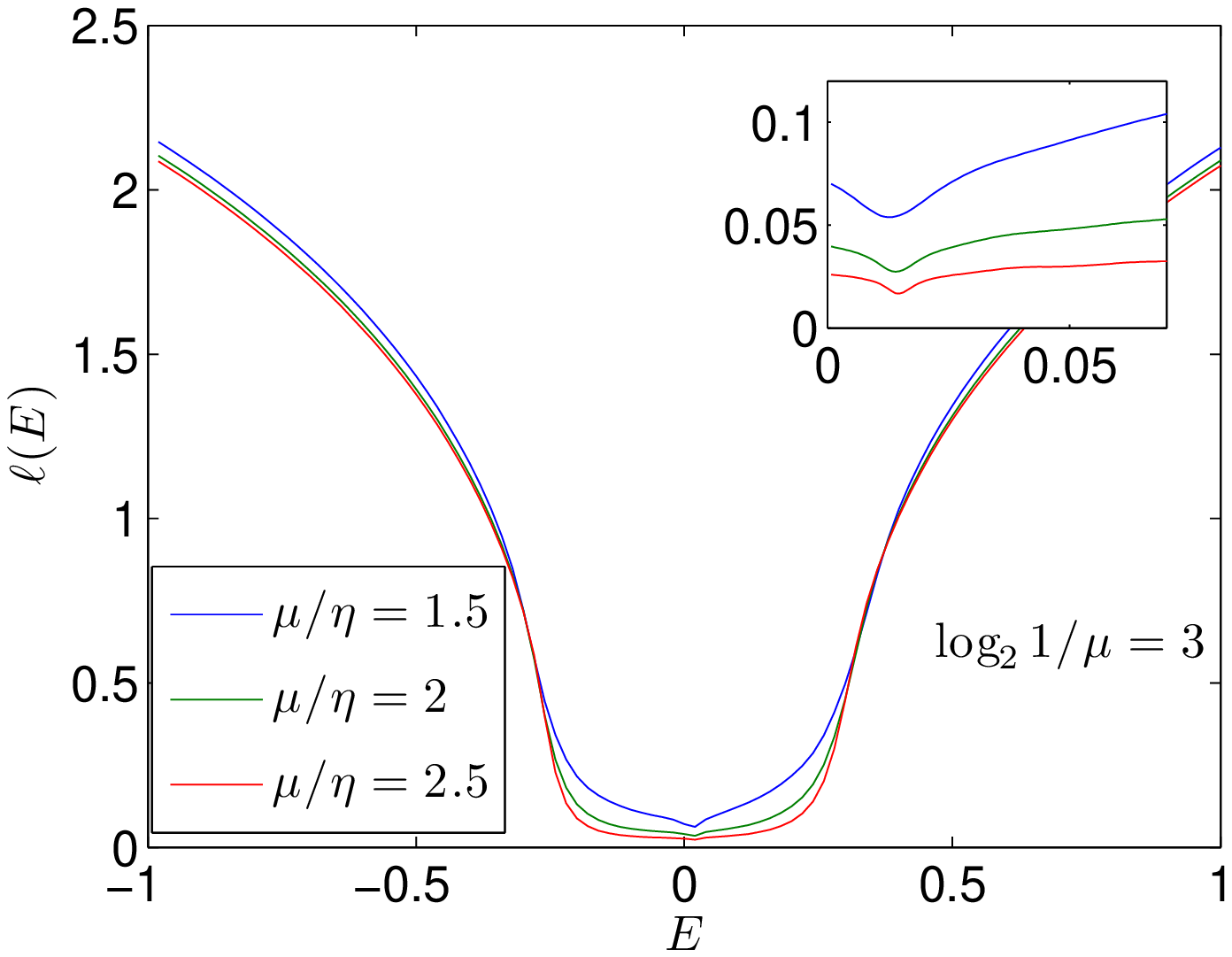,width=8.0cm}
  \label{figlyap:subfig1}
 }\ \ \subfigure[Lyapunov exponent~$\ell(E)$ for $\mu=2^{-6}$.]{
\epsfig{file=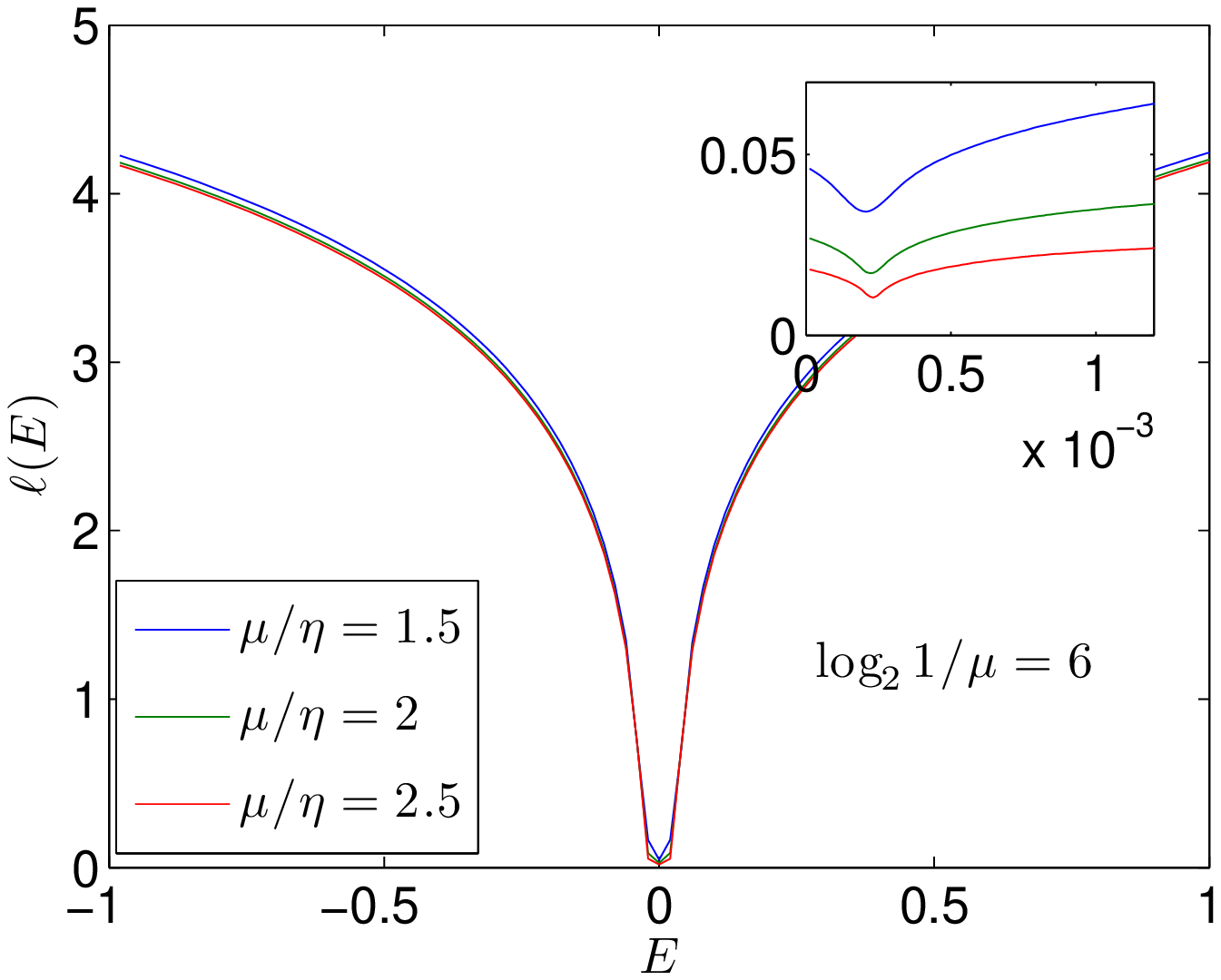,width=7.75cm}
   \label{figlyap:subfig2}}
\end{center}
\label{figlyap}
\caption{The Lyapunov exponent~$\ell(E)$ is positive throughout the interval~$[-1,1]$, and achieves its minimum around~$E\approx \mu^2$.}
\end{figure}
Figure~\ref{figlyapscaling} shows the value of the minimal Lyapunov exponent~$\ell_{\min}=\min_E \ell(E)$  as a function of the perturbation strength. We indeed observe a linear scaling $\ell_{\min}\sim\frac{1}{\log\textfrac{1}{\mu}}$ as~$\mu\rightarrow 0$, confirming~\eqref{eq:lyapunovexponentscaling}.
\begin{figure}[ht]
\centering
\begin{center}
\epsfig{file=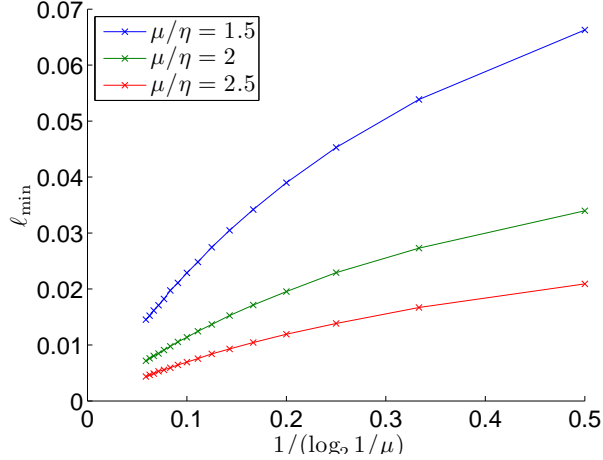,width=8.0cm}
\end{center}
\caption{The minimal Lyapunov exponent $\ell_{\min}=\min_E \ell(E)$ as a function of $\left(\log\textfrac{1}{\mu}\right)^{-1}$, where $\mu$ is the average chemical potential. In the limit $\mu\rightarrow 0$, this becomes linear, following~\eqref{eq:lyapunovexponentscaling}. \label{figlyapscaling}}
\end{figure}

Figure~\ref{figlyap} suggests studying the behavior of $\ell(E)$ for small values of~$E$ (i.e., $E\approx\mu^2$). In the following, we analyze the behavior of the exponent $\ell(E=0)$ at zero energy as a function of the perturbation strength. Unfortunately, these arguments do not seem to extend to the whole range of energies of interest.

\subsubsection*{Lyapunov exponent at $E=0$}
Here we give some heuristic arguments showing that with overwhelming probability over disorder realizations,  the Lyapunov exponent at energy $E=0$ scales as
\begin{align*}
\ell(E=0)\sim \frac{1}{\log\textfrac{1}{\epsilon}}\qquad\textrm{ as }\epsilon\rightarrow 0\ .
\end{align*}
 This is based on the techniques developed by Eggarter et al~\cite{eggarterried78} which relate the Anderson model to certain random walks.

For simplicity we will assume that
\begin{align}
\epsilon/2 \le \mu_n \le \epsilon \quad \quad \mbox{for all $n$}\label{eq:muassumption}
\end{align}
where $0<\epsilon\ll 1$ controls the perturbation strength. Specializing~\eqref{eq:eigenvalueeq} to $E=0$ gives the recursion relation
\be
\label{recursion}
z_{n+1}=\frac{\mu_{n+1}}{\mu_n^2 - \mu_n z_n}, \quad \quad z_2=\mu_2 \mu_1^{-2}\ .
\ee
A few starting elements in this sequence are
\be
\label{starting}
z_2=\mu_2 \mu_1^{-2}, \quad z_3=-\mu_3 \mu_2^{-2} \mu_1^{2} \cdot \frac1{1-\mu_1^2}, \quad
z_4=\mu_4 \mu_3^{-2} \mu_2^2 \mu_1^{-2} \cdot \frac{1-\mu_1^2}{1-\mu_2^2(1-\mu_1^{-2})}\ .
\ee

We will prove that for all sufficiently small $n$ the sequence $z_n$ has alternating signs, namely,
$z_n>0$ for even  $n$ and $z_n<0$ for odd $n$. Furthermore, we will show that the behavior of $x_n\equiv \ln{(z_n)}$
for {\em even} $n$ can be modeled by an unbiased random walk on a line with a   potential barrier
on the interval $(a,\infty)$, where~$a\sim |\log\epsilon|$. The potential is negligible outside of the interval
$(a,\infty)$ and grows roughly as $|z-a|^2$ for $z\in (a,\infty)$.
The walk starts from $x_2\approx  |\log\epsilon|$ near the left boundary of the potential barrier.
As soon as the random walk reaches values $x_n\approx -|\log\epsilon|$,
it jumps back to its starting point reaching $x_{n+1}\approx |\log\epsilon|$ in a single step.
This jump also breaks the alternating sign pattern of $z_n$ such that both
$z_n$ and $z_{n+1}$ are positive. The typical number of steps it takes for the walk to diffuse from
$|\log\epsilon|$ to $-|\log\epsilon|$ is  $n^*\sim |\log\epsilon|^2$. The random walk picture allows
one to partition the sequence $\{z_n\}$  into {\em cycles} of length roughly
$|\log\epsilon|^2$. Within each cycle the sequence $\{z_n\}$ follows the alternating sign pattern.
A cycle ends whenever two adjacent $z_n$'s have the same sign.
We will prove that each cycle increases $\log{|\psi_n|}$ by  roughly $|\log\epsilon|$.
Hence the average increase of $\log{|\psi_n|}$ per step is roughly
$|\log\epsilon|^{-1}$ providing the estimate of the Lyapunov exponent.

Iterating Eq.~(\ref{recursion}) twice one gets
\be
\label{walk1}
z_{n+2}=\left( \frac{\mu_{n+2} \mu_n}{\mu_{n+1}^2}\right) \cdot z_n \cdot \beta_n(z_n),
\ee
where
\be
\label{walk2}
\beta_n(z_n)=\frac{1-\mu_n/z_n}{1+\mu_n z_n - \mu_n^2}\ .
\ee
It is important that
\be
\label{beta}
\beta_n(z)\approx 1 \quad \mbox{whenever} \quad \epsilon \ll |z| \ll \epsilon^{-1}\ .
\ee
Furthermore,
\be
\label{beta1}
0\le \beta_n(z)\le 1 \quad \mbox{for all $z\ge \mu_n$}\ .
\ee
Below we consider only even values of $n$.
Introducing variables $x_n\equiv \ln{|z_n|}$ we obtain
\[
x_{n+2} = x_n + e_{n+2} + e_n - 2e_{n+1} +\varphi_n(x_n),
\]
where
\[
e_n= \ln|\mu_n/\epsilon| \quad \mbox{and} \quad \varphi_n(x)=\log\beta_n(e^x).
\]
By definition we have  $|e_n|=O(1)$.
Simple algebra shows that
\[
x_n=x_2 -e_n -e_2+2\sum_{j=2}^{n} (-1)^j \, e_{j} + \varphi_{n-2}(x_{n-2}) +  \varphi_{n-4}(x_{n-4}) + \ldots + \varphi_2(x_2)
\]
for any even $n$.
In order to interpret these equations in terms of a random walk, define auxiliary variables
$y_n$ by
\be
\label{ywalk}
y_{n+1} = y_n + \xi_n + \varphi_{2n}(y_n), \quad \mbox{where} \quad
y_1=x_2 \quad \mbox{and} \quad \xi_n =2(e_{2n+2}-e_{2n+1}).
\ee
One can easily check that
\[
|x_{2n} - y_n|\le O(1) \quad \mbox{for all $n$}\ .
\]
In addition, $\xi_n$ are i.i.d. random variables with $\EE(\xi_n)=0$ and variance $O(1)$
that can be easily computed given the distribution of $\mu_n$.
Hence $y_1,y_2,\ldots,y_n$ is an unbiased random walk on a line
where each step has length $O(1)$.
The term $\varphi_{2n}(y_n)$ can be regarded as an external force, or, position-dependent bias.
This force turns on only then $y_n$ becomes close to $\pm |\log\epsilon|$, see Eq.~(\ref{beta}). In both cases the force pushes the walker towards negative values of $y_n$, see Eq.~(\ref{beta1}).
In addition, the magnitude of the force grows approximately linearly with $y_n$
in the interval $y_n\in (a,\infty)$ with $a\approx |\log\epsilon|$. Indeed, in this interval
one has
\be
\label{beta2}
\varphi_{2n}(y)\approx -\log(1+\mu_n e^y) \approx -|y-|\log\mu_n||.
\ee
Such a force is roughly equivalent to
an energy barrier that reflects the walk
each time it enters the interval $(a,\infty)$.
From Eq.~(\ref{starting}) we infer that the starting point of the walk
obeys
\be
\label{ybound1}
y_1\approx |\log\epsilon|.
\ee
Consider now what happens when $y_n$
reaches values of order $- |\log\epsilon|$. Equivalently, $z_n$ reaches values of order~$\epsilon$.
Assuming that $z_n$ is positive and satisfies $z_n\le \mu_n$ we get
\be
z_{n+1}\ge \frac{\mu_{n+1}}{\mu_n^2} \sim \epsilon^{-1}\ .
\ee
Equivalently, $y=\log z_n$ jumps from $- |\log\epsilon|$ to $+|\log\epsilon|$
in a single step. Note also that both $z_n$ and $z_{n+1}$ are positive which
breaks the alternating sign pattern. We can now start the next walk described by Eq.~(\ref{ywalk}),
now using $z_n$ with odd $n$.
The number of steps needed for each of such walks to diffuse from $+|\log\epsilon|$
to  $- |\log\epsilon|$ is
\[
n^*\sim |\log\epsilon|^2.
\]
Since the walk spends most of its time in the region with negligible external force,
we can compute the increase of $\log|\psi_n|$ per cycle by setting $\beta_n(z_n)=1$.
Then we have
\[
\frac{|\psi_1|}{|\psi_{n}|} =\exp{\left[ \sum_{j=2}^{n} x_j \right]} \equiv \exp{[X_{n}]}\ .
\]
Simple algebra shows that for any even $n$ one has
\[
X_n=\sum_{j=2}^n x_j = \frac12 x_n + \frac12 e_n + e_1= \frac12 x_n +O(1).
\]
Hence we have
\[
X_{n}\approx (1/2) y_{n/2} \approx (-1/2)|\log\epsilon|.
\]
We conclude that
the logarithm of $|\psi_n|$ increases by roughly
\[
-X_{n} \approx (1/2)|\log\epsilon|.
\]
per cycle, where cycle consists of
$n^*\sim |\log\epsilon|^2$ steps. This means that the Lyapunov exponent is roughly
\[
\ell(E=0) \sim \frac1{n^*} \log{\frac{|\psi_{n^*}|}{|\psi_{1}|}} \sim
-\frac{X_{n^*}}{n^*} =
 \frac{\log\epsilon}{ |\log\epsilon|^2} \sim \frac1{|\log\epsilon|}\ .
\]

\subsection{Localization of multipoint correlation functions}
\label{sec:momentlocalization}
Here we show how to derive the multi-point dynamical localization condition~\eqref{DL}
 that involves the determinant
$\det R[p,q]$ from the analogous localization condition that involves more standard multi-point correlation
functions.   Recall that $|p-q|_1=\sum_{a=1}^m |p_a-q_a|$, where $p$ and $q$  are $m$-tuples of integers.
\begin{lemma}\label{lem:DLversussingle}
Let  $p,q$ be $m$-tuples of distinct integers in the interval $[1,2N]$.
Suppose there exists constants $\cwave,\xiwave$ such that
\be
\label{DL2}
\ExpE{\prod_{a=1}^m \left|R_{p_a,q_a}\right|} \le \cwave^m \exp{\left(-\frac{|p-q|_1}{\xiwave}\right)}
\ee
for all $N$ and for all $m$-tuples $p,q$ such that $|p-q|_1\ge N/8$.
Then the multi-point dynamical localization condition~\eqref{DL}
holds for some constants $C=O(\cwave\xiwave^2)$ and $\xi=O(\xiwave)$.
\end{lemma}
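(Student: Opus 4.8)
The plan is to expand $\det R[{p},{q}]$ over permutations via the Leibniz formula, apply the hypothesis~\eqref{DL2} to each term, and then control the resulting sum over the symmetric group by a \emph{permanent} estimate rather than by the prohibitive bound $m!$. First I would write, using the triangle inequality and linearity of expectation,
\[
\ExpE{\left|\det R[{p},{q}]\right|}\le \sum_{\pi\in S_m}\ExpE{\prod_{a=1}^m\left|R_{p_a,q_{\pi(a)}}\right|}\ .
\]
For each fixed $\pi$ the product runs over the pairing $a\mapsto (p_a,q_{\pi(a)})$, which is exactly of the form in~\eqref{DL2} with the ordered tuple $p$ and the reordered (still distinct) tuple $q\circ\pi$.

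The crucial observation is that the $L^1$-distance of this pairing is never smaller than that of the identity pairing. Since both $p$ and $q$ are increasing, the rearrangement inequality for the $L^1$ transportation cost on the line gives
\[
\sum_{a=1}^m|p_a-q_{\pi(a)}|\ge \sum_{a=1}^m|p_a-q_a|=|{p}-{q}|_1\ge \frac{N}{8}\qquad\textrm{for every }\pi\in S_m\ .
\]
Thus the distance hypothesis of~\eqref{DL2} is met \emph{simultaneously} for all permutation terms, and applying~\eqref{DL2} yields $\ExpE{\prod_a|R_{p_a,q_{\pi(a)}}|}\le \cwave^m\exp\left(-\tfrac{1}{\xiwave}\sum_a|p_a-q_{\pi(a)}|\right)$.

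The remaining obstacle --- and the only delicate point --- is that a crude sum over $S_m$ would produce a factor $m!$, which is far too large since $m$ may be of order $N$. I would instead split the exponent, reserving half of the decay to extract the global factor $e^{-\Omega(N)}$ and spending the other half to tame the permutation sum:
\[
\sum_{\pi\in S_m}\exp\left(-\frac{1}{\xiwave}\sum_a|p_a-q_{\pi(a)}|\right)\le e^{-N/(16\xiwave)}\sum_{\pi\in S_m}\prod_{a=1}^m M_{a,\pi(a)}\ ,\qquad M_{a,b}=e^{-|p_a-q_b|/(2\xiwave)}\ .
\]
The last sum is the permanent of the nonnegative matrix $M$, and for a nonnegative matrix the permanent is bounded by the product of its row sums (permutations form a subset of all index functions, and every summand is nonnegative). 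Since the $q_b$ are distinct integers, each row sum is at most $\sum_{k\in\mathbb{Z}}e^{-|p_a-k|/(2\xiwave)}=\coth\left(1/(4\xiwave)\right)=O(\xiwave)$, so the permanent is at most $\left(O(\xiwave)\right)^m$.

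Collecting the estimates gives $\ExpE{\left|\det R[{p},{q}]\right|}\le \left(\cwave\cdot O(\xiwave)\right)^m e^{-N/(16\xiwave)}$, which is precisely the multi-point dynamical localization bound~\eqref{DL} with $\xi=16\xiwave=O(\xiwave)$ and $C=O(\cwave\xiwave)$, well within the claimed $O(\cwave\xiwave^2)$. The substance of the argument lies entirely in the second and third paragraphs: the rearrangement inequality is what allows the single hypothesis~\eqref{DL2} to be invoked for every permutation at once, and the permanent/row-sum estimate is what replaces the fatal $m!$ by the clean geometric constant $C^m$.
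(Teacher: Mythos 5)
Your proof is correct, and the key estimate is handled by a genuinely different (and slicker) argument than the paper's. Both proofs start identically: expand the determinant over $S_m$, note via the rearrangement inequality that the sorted matching minimizes the transport cost so that~\eqref{DL2} applies to every permuted pairing, and thereby reduce the problem to bounding the ``partition function'' $\sum_{\pi\in S_m}e^{-\beta\sum_a|p_a-q_{\pi(a)}|}$ with $\beta=1/\xiwave$. The paper attacks this sum with a fairly heavy combinatorial device: each permutation is encoded as a lattice step function $\{x_i\}$ counting how many pairings cover the interval $[i,i+1]$, the number of permutations sharing a given profile is bounded by $\prod_i\max(h_i,h_{i-1})$, and a generating-function trick with an auxiliary parameter $\lambda$ yields $Z_\beta\le O(1/\beta)^{2m}e^{-N\Omega(\beta)}$. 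You instead split the Boltzmann weight in half, use the uniform lower bound $\sum_a|p_a-q_{\pi(a)}|\ge N/8$ to pull out $e^{-N/(16\xiwave)}$, and bound the remaining sum as a permanent of a nonnegative matrix by the product of its row sums, each row sum being $O(\xiwave)$ because the $q_b$ are distinct integers. Every step checks out (the permanent-vs-product-of-row-sums bound is valid precisely because permutations form a subset of all index maps and the entries are nonnegative), the argument avoids the step-function machinery entirely, and it even yields the slightly sharper constant $C=O(\cwave\xiwave)$ in place of the paper's $O(\cwave\xiwave^2)$.
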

{\em Remark:} The arguments used in the proof of the lemma can also be applied to any fixed
disorder realization. In particular, if {\em all} matrix elements of $R\equiv R(t)$
obey a bound $|R_{p,q}|\le   \cwave e^{-|p-q|/\xiwave}$ for some disorder realization, then
Eq.~(\ref{DL2}) holds without the expectation value. The lemma then
implies that the multi-point dynamical localization holds for the chosen disorder realization
with some constants  $C=O(\cwave\xiwave^2)$, $\xi=O(\xiwave)$.
Repeating all the steps used in the proof of Theorem~\ref{thm:storage} for a fixed
disorder realization one would get the same bound Eq.~(\ref{lower_bound}) on the storage fidelity.
\begin{proof}
For any $\beta>0$ define a partition function of pairings at inverse temperature~$\beta$ as
\begin{align*}
Z_\beta(p,q) &=\sum_{\sigma\in S_{m}} e^{-\beta E_\sigma(p,q)}\qquad\textrm{ where }\qquad E_{\sigma}(p,q)=\sum_{a=1}^m |p_a-q_{\sigma(a)}|,
\end{align*}
where  $S_m$ is the group of permutations of $m$ objects. The bound Eq.~(\ref{DL2}) yields
\begin{align}
\ExpE{ |\det R[p,q]|}\le \sum_{\sigma\in S_m} \ExpE{ \prod_{a=1}^m \left|R_{p_a,q_{\sigma(a)}}\right|}\leq \cwave^m \sum_{\sigma\in S_m} e^{-|p-\sigma(q)|_1/\xiwave}=\cwave^m Z_{1/\xiwave}(p,q)\ .\label{eq:Rtopartitionfunction}
\end{align}
The statement of the lemma now follows from the following fact.
\begin{prop}\label{lem:partitionfunctionbound}
Let $0<\beta\le 1$ be a constant independent of $m$ and $N$.
Let $p$, $q$ be $m$-tuples of integers in the interval $[1,N]$ such that
$p_1<p_2<\cdots<p_m$ and $q_1<q_2<\cdots<q_m$.
Suppose that $|p-q|_1=\Omega(N)$. Then
\be
\label{Zbound}
Z_\beta (p,q)\le O(1/\beta)^{2m} e^{-N\Omega(\beta)}\ .
\ee
\end{prop}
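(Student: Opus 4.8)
The plan is to exploit the fact that, because both tuples $p$ and $q$ are sorted, the identity permutation already minimizes the pairing cost. By the rearrangement inequality $|x-y|+|x'-y'|\le |x-y'|+|x'-y|$ for $x\le x'$ and $y\le y'$, any inversion of a matching can be undone without increasing $\sum_a|p_a-q_{\sigma(a)}|$, so $E_\sigma(p,q)\ge E_{\mathrm{id}}(p,q)=|p-q|_1$ for every $\sigma\in S_m$. The naive consequence $Z_\beta(p,q)\le m!\,e^{-\beta|p-q|_1}$ is useless here, since $m$ may be of order $N$ and then $m!$ overwhelms the energy gain. First I would therefore split the Boltzmann weight symmetrically, writing $e^{-\beta E_\sigma}=e^{-\frac{\beta}{2}E_\sigma}\,e^{-\frac{\beta}{2}E_\sigma}$, applying $E_\sigma\ge |p-q|_1$ to the first factor and keeping the second:
\[
Z_\beta(p,q)\le e^{-\frac{\beta}{2}|p-q|_1}\,Z_{\beta/2}(p,q)\ .
\]

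The remaining step is to bound the residual partition function $Z_{\beta/2}(p,q)$ by a quantity that is only exponential in $m$. The key observation is that $Z_{\beta/2}(p,q)=\mathrm{perm}(A)$ is the permanent of the nonnegative matrix $A_{a,b}=e^{-\frac{\beta}{2}|p_a-q_b|}$, and for a nonnegative matrix the permanent is bounded by the product of its row sums, since the sum over permutations of $[m]$ is dominated by the sum over all maps $[m]\to[m]$. As the $q_b$ are distinct integers, each row sum is controlled by a translation-invariant geometric series,
\[
\sum_{b=1}^m e^{-\frac{\beta}{2}|p_a-q_b|}\le \sum_{k\in\ZZ}e^{-\frac{\beta}{2}|k|}=\frac{1+e^{-\beta/2}}{1-e^{-\beta/2}}=\coth(\beta/4)\ ,
\]
which is $O(1/\beta)$ for $0<\beta\le 1$. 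Hence $Z_{\beta/2}(p,q)\le \coth(\beta/4)^m=O(1/\beta)^m$.

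Combining the two bounds and invoking the hypothesis $|p-q|_1=\Omega(N)$ then gives
\[
Z_\beta(p,q)\le O(1/\beta)^m\, e^{-\frac{\beta}{2}|p-q|_1}=O(1/\beta)^m\, e^{-N\,\Omega(\beta)}\ ,
\]
which is in fact stronger than the claimed $O(1/\beta)^{2m}\,e^{-N\Omega(\beta)}$, since $O(1/\beta)^m\le O(1/\beta)^{2m}$ for $\beta\le 1$. The one genuinely nontrivial ingredient is the recognition that one must \emph{not} bound every pairing by its minimal energy (which costs a factor $m!$), but instead interpolate: reserve half of the inverse temperature to extract the guaranteed energy suppression $e^{-\frac{\beta}{2}|p-q|_1}$, and spend the other half on a purely entropic, permanent-by-row-sums estimate. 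The sorted structure of $p$ and $q$ enters only through the rearrangement inequality certifying $E_\sigma\ge|p-q|_1$; everything else reduces to a one-line geometric-series bound.
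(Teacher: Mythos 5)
Your proof is correct, and it takes a genuinely different and noticeably simpler route than the paper's. The paper encodes each permutation $\sigma$ as a height profile $x_i$ (the number of pairings covering the interval $[i,i+1]$), bounds the number of permutations sharing a profile by $\prod_i\max(h_i,h_{i-1})$, controls that product via an AM--GM estimate and a generating-function parameter $\lambda$, and only then sums the Boltzmann weights; this machinery is what produces the exponent $2m$ in $O(1/\beta)^{2m}$. You instead isolate the two genuinely needed ingredients: (i) the rearrangement inequality guaranteeing $E_\sigma(p,q)\ge E_{\mathrm{id}}(p,q)=|p-q|_1$ for sorted tuples (a fact the paper also uses, but only to identify the ground-state energy $E_0$), which lets you peel off the deterministic suppression $e^{-\beta|p-q|_1/2}$ at half the inverse temperature; and (ii) the observation that the residual sum $Z_{\beta/2}(p,q)$ is a permanent of a nonnegative matrix, hence bounded by the product of its row sums, each of which is a geometric series $\le\coth(\beta/4)=O(1/\beta)$ because the $q_b$ are distinct integers. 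The split of $\beta$ into two halves is exactly the right way to avoid the useless $m!\,e^{-\beta|p-q|_1}$ bound, and your final estimate $O(1/\beta)^m e^{-\beta|p-q|_1/2}$ is in fact sharper than the stated $O(1/\beta)^{2m}e^{-N\Omega(\beta)}$ and plugs directly into the proof of Lemma~\ref{lem:DLversussingle}, improving the constant there from $C=O(\cwave\xiwave^2)$ to $C=O(\cwave\xiwave)$. The only thing the paper's more laborious profile argument buys is some structural insight into which permutations dominate the partition function, which is not needed for the application.
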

\begin{proof}
We claim that it suffices to prove Eq.~(\ref{Zbound}) for the special case when
all integers in $p$ are {\em even} and all integers in $q$ are {\em odd}. Indeed,
$Z_\beta(p,q)=Z_{\beta/2}(p',q')$, where $p_a'=2p_a$ and $q_a'=2q_a$
(the latter partition function is defined for a system size $N'=2N$).
Hence, without loss of generality,  all integers in $p$ and $q$
are even. Furthermore, shifting each integer in $q$ by $\pm 1$ we change the energy $E_{\sigma}(p,q)$
at most by $m$. This changes the partition function at most by a factor $e^{\beta m}$ which can be
absorbed into the factor $O(1/\beta)^m$.

In the following, we prove Eq.~(\ref{Zbound}) assuming that all $p_a$ are even and all $q_a$ are odd.
Let us fix some permutation $\sigma \in S_m$.
We will say that an interval $[n,n']$ is a {\em pairing}
iff $n=p_a$, $n'=q_{\sigma(a)}$ or $n'=p_a$, $n=q_{\sigma(a)}$ for
some $1\le a\le m$. We can consider $\sigma$ as a collection of $m$ pairings.
For any $i\in [0,N]$ let $x_i$ be the number of pairings that contain the interval $[i,i+1]$.
Obviously, $0\le x_i\le m$ and $x_0=x_N=0$.
\begin{prop}
Let $E\equiv E_\sigma(p,q)$.
The sequence $x_0,x_1,\ldots,x_{N}$ has the following properties:
\bea
x_{i}-x_{i-1} &\in & \{-1,0,+1\}, \label{xstep} \\
\sum_{i=0}^{N} x_i &=& E, \label{xenergy} \\
\sum_{i=1}^{N} |x_{i}- x_{i-1}| &=& 2m. \label{xsum}
\eea
\label{prop:1}
\end{prop}
\begin{proof}
Let us prove Eq.~(\ref{xstep}).  Suppose $i$ is even, so that $i\notin q$.  If $i\notin p$ then
no pairing can terminate at $i$. Hence a pairing contains the interval $[i,i+1]$ iff it
contains the interval $[i-1,i]$, that is, $x_i=x_{i-1}$. Suppose now that $i\in p$,
that is there is one pairing that terminates at $i$. If this pairing lies on the left of $i$, then
$x_i=x_{i-1}-1$. If this pairing lies on the right of $i$, then $x_i=x_{i-1}+1$. The case of odd~$i$
is analogous. Property~\eqref{xenergy} is obvious since $E_{\sigma}(p,q)$
is nothing but the total length of all pairings in~$\sigma$. Property~\eqref{xsum} follows
from the fact that $x_{i}\ne x_{i-1}$ only if some pairing terminates at $i$ and the total number
of points where some pairing terminates is $2m$.
\end{proof}
We can think of $x=\{x_i\}$ as a step function with $2m+1$ steps.
Let $h_j$ and $w_j$ be the height and the width of the $j$-th step.
From Proposition~\ref{prop:1} we infer that
\be
\label{hwrule}
0\le h_j \le m, \quad h_0=h_{2m}=0, \quad h_i=h_{i-1}\pm 1, \quad \sum_{i=0}^{2m} w_i h_i =E,
\quad \sum_{i=0}^{2m} w_i =N.
\ee
Note that the locations of the steps are fully determined by $p$ and $q$
since $x_{i-1}\ne x_i$ iff $i\in p$ or $i\in q$.
Furthermore, the number of permutations $\sigma \in S_m$ that give rise
to the same sequence $x=\{x_i\}$ is at most
\be
\omega(x)=\prod_{i=1}^{2m} \max{(h_i,h_{i-1})}\ .
\ee
Indeed, each time we switch from one step of $x$  to another we either terminate or create some pairing
and we have a freedom to choose which pairing to terminate or create without changing the sequence~$x$.
The number of possible choices when we switch from the $i$-th step to the $(i+1)$-th step is at most $\max{(h_i,h_{i-1})}$.
Hence we can upper bound  the partition function as
\[
Z_\beta(p,q) \le  \sum_{n\ge E_0} e^{-\beta n} \sum_{\substack{0\le h_0,\ldots,h_{2m} \le m \\ \\
h_0=h_{2m}=0, \\ \\
h_i=h_{i-1}\pm 1\\ \\
\sum_i w_i h_i =n\\}}
\; \; \;
\prod_{i=1}^{2m} \max{(h_i,h_{i-1})}\ .
\]
where $E_0=|p-q|_1=\Omega(N)$ is the ground state energy
(one can easily check that for ordered $p$ and $q$ the minimum of $E_\sigma(p,q)$
is achieved for the trivial permutation $\sigma$).
Taking into account that $w_i\ge 1$ for all $i$ and using the inequality
$\max{(h_i,h_{i-1})} \le h_i+1$
we get
\[
\prod_{i=1}^{2m} \max{(h_i,h_{i-1})} \le \left(\frac1{2m}\sum_{i=1}^{2m}  \max{(h_i,h_{i-1})} \right)^{2m}
\le
\left(\frac1{2m}\sum_{i=1}^{2m}  w_i+w_ih_i \right)^{2m}\ .
\]
Using the constraints $\sum_{i=1}^{2m} w_i=N$, $\sum_{i=1}^{2m} w_i h_i=n$,
and the inequality $(2m)^{2m}\ge (2m)!$ we arrive at
\[
\prod_{i=1}^{2m} \max{(h_i,h_{i-1})} \le \frac{(N+n)^{2m}}{(2m)!}\ .
\]
Let $\lambda>0$ be a parameter to be chosen later. Then
\[
\prod_{i=1}^{2m} \max{(h_i,h_{i-1})} \le \lambda^{-2m} \sum_{k=0}^\infty \frac{\lambda^k (N+n)^{k}}{k!}=
\lambda^{-2m} e^{\lambda(N+n)}\ .
\]
Since the number of sequences $h_0,\ldots,h_{2m}$ satisfying conditions Eq.~(\ref{hwrule}) is
at most $2^{2m}$, we obtain
\[
Z_\beta(p,q) \le  (2/\lambda)^{2m} e^{\lambda N} \sum_{n\ge E_0} e^{-(\beta-\lambda)n}
= \frac{(2/\lambda)^{2m}}{1-e^{-(\beta-\lambda)}}\, e^{-(\beta-\lambda)E_0 + \lambda N}\ .
\]
Since we are promised that $E_0\ge \alpha N$ for some constant $\alpha=O(1)$, we can choose
$\lambda=\alpha \beta/(2+\alpha)$ which yields
\[
Z(p,q)\le  \frac{(2/\lambda)^{2m}}{1-e^{-2\beta(2+\alpha)^{-1}}}\,e^{-\alpha \beta(2+\alpha)^{-1} N}.
\]
This proves Eq.~(\ref{Zbound}).
\end{proof}
\end{proof}

\section{Numerical simulations\label{sec:numerics}}
In Section~\ref{sec:measurementsim}, we explain how to efficiently sample from the distribution of syndromes. In Section~\ref{sec:montecarlo}, we show how to compute fidelities efficiently. In combination, this gives a Monte Carlo method for estimating storage fidelities. In Section~\ref{sec:numericalresults}, we discuss numerical results obtained using this method.

\subsection{Simulation of the syndrome measurement\label{sec:measurementsim}}

Let $t$ be some fixed time and let $\pi(s)$ be the probability of measuring a syndrome
$s=(s_1,\ldots,s_{N-1})$ on the time-evolved state $|g(t)\ra$. Our first goal is to describe an efficient
algorithm that allows one to sample $s$ from the distribution $\pi(s)$.
An important fact is that both
time evolution and the stabilizer measurements belong to a class of operations known as {\em fermionic linear
optics} for which efficient simulation algorithms have been described by Knill~\cite{Knill01}
as well as Terhal and DiVincenzo~\cite{TerhalDiVincenzo02}.
For example, applying the algorithm of~\cite{TerhalDiVincenzo02} to our settings, we can reduce the problem of sampling $s$
from $\pi(s)$ to a series of simpler tasks: sample a bit $s_j$ from the conditional distribution
of $s_j$ given $s_1,\ldots,s_{j-1}$, where $j=1,\ldots,N-1$.
Using the techniques of~\cite{TerhalDiVincenzo02}, the conditional probability of, say, $s_j=0$ can be computed as a ratio of two determinants representing probabilities of  outcomes $s_1,\ldots,s_{j-1},0$ and $s_1,\ldots,s_{j-1}$.
Once this conditional probability is known, the bit $s_j$ can be set by tossing a coin with an appropriate bias.
Setting the bits of $s$ one by one starting from $s_1$, the computational cost of generating one full syndrome sample $s$ in this fashion is~$O(N^4)$ since this involves the computation of~$O(N)$ determinants of matrices of size $O(N)$,
see~\cite{TerhalDiVincenzo02} for details.

Here we propose a simplified version of this algorithm in which the computational cost of generating
one full syndrome sample is only $O(N^3)$. Our algorithm might also be more stable computationally
since it avoids computing the ratio of probabilities for exponentially unlikely events.
\begin{lemma}
\label{lemma:sampling}
Let $\rho$ be a fermionic Gaussian state of $2N$ Majorana modes and
$\pi(s)=\trace{(\rho\,  \hat{Q}_s)}$ be the probability of measuring a syndrome $s\in \{0,1\}^{N-1}$ on $\rho$, see Section~\ref{subs:EC}.
There exists an efficient algorithm that
takes as input the covariance matrix of $\rho$ and returns a sample $s$ drawn from the distribution $\pi(s)$.
The algorithm requires roughly $N^3$ arithmetic operations on real numbers
and the generation of $N-1$ random bits.
\end{lemma}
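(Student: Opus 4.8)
The plan is to represent the Gaussian state $\rho$ by its real antisymmetric covariance matrix $M$ of size $2N\times 2N$, defined by $M_{p,q}=-\frac{i}{2}\trace{(\rho\,[c_p,c_q])}$ so that $\langle \hat{S}_j\rangle_\rho = M_{2j,2j+1}$, and to generate the syndrome bit by bit using the chain rule
\[
\pi(s)=\pi(s_1)\,\pi(s_2\mid s_1)\cdots \pi(s_{N-1}\mid s_1,\ldots,s_{N-2})\ .
\]
The crucial structural observation is that the stabilizers $\hat{S}_j=(-i)c_{2j}c_{2j+1}$ act on the mutually disjoint mode pairs $\{2j,2j+1\}$, hence they commute and $\hat{Q}_s=\prod_j \frac12(\hat{I}+(-1)^{s_j}\hat{S}_j)$ describes a collection of commuting measurements. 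Consequently, measuring $\hat{S}_1,\hat{S}_2,\ldots$ one at a time and conditioning on the outcomes reproduces exactly the joint distribution $\pi(s)=\trace{(\rho\,\hat{Q}_s)}$. Since each $\hat{S}_j$ is a quadratic Majorana operator, every measurement is a fermionic-linear-optics operation under which Gaussianity is preserved~\cite{Knill01,TerhalDiVincenzo02}; thus the conditional state after the first $j-1$ measurements is again Gaussian and is tracked by an updated covariance matrix.

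First I would read off the conditional probability. Because $\hat{S}_j$ has eigenvalues $\pm1$ and expectation value $M_{2j,2j+1}$ in the current conditional state, the conditional probability of the outcome $s_j$ is
\[
\pi(s_j\mid s_1,\ldots,s_{j-1})=\frac12\bigl(1+(-1)^{s_j}M_{2j,2j+1}\bigr)\ ,
\]
a single entry of the covariance matrix, after which $s_j$ is fixed by tossing an appropriately biased coin. The post-measurement state is obtained by projecting the measured pair $A=\{2j,2j+1\}$ onto the pure Gaussian state with $2\times2$ covariance block $\gamma=(-1)^{s_j}\left(\begin{smallmatrix}0&1\\-1&0\end{smallmatrix}\right)$ and updating the remaining modes $R$ by the standard Gaussian conditioning (Schur-complement) rule
\[
M'_{RR}=M_{RR}+M_{RA}\,(\gamma-M_{AA})^{-1}\,M_{AR}\ ,
\]
where $M_{AA},M_{RA},M_{AR}$ are the corresponding blocks of $M$ and $\gamma-M_{AA}$ is an invertible $2\times2$ matrix. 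Since the measured pair decouples after the outcome is recorded, its rows and columns may be dropped from the active matrix, and one iterates for $j=1,\ldots,N-1$.

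The cost bookkeeping is then immediate: the $j$-th step requires only reading one entry and forming the rank-two correction $M_{RA}(\gamma-M_{AA})^{-1}M_{AR}$, costing $O((N-j)^2)$ arithmetic operations on a covariance matrix of the remaining $\sim 2(N-j)$ modes. Summing gives $\sum_{j} O((N-j)^2)=O(N^3)$ arithmetic operations, together with the $N-1$ random bits used to set the $s_j$, exactly as claimed. This is a factor of $N$ cheaper than the determinant-ratio method of~\cite{TerhalDiVincenzo02}, and more stable since no ratios of exponentially small probabilities are formed.

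The step requiring the most care is the justification of the covariance-matrix update formula: one must verify that projecting a Gaussian state onto an eigenspace of the quadratic operator $\hat{S}_j$ again yields a Gaussian state whose covariance matrix is given by the displayed Schur complement, and that $\gamma-M_{AA}$ is invertible whenever the chosen outcome $s_j$ has nonzero probability. The degenerate case $|M_{2j,2j+1}|=1$, in which the pair is already pure and the outcome deterministic, must be handled separately by simply assigning the forced value of $s_j$ and deleting the pair. This update is the fermionic analogue of the Gaussian conditioning rule and can be derived either from the action of the Gaussian projector on the characteristic function or directly within the formalism of~\cite{TerhalDiVincenzo02}; the remaining estimates are routine linear algebra.
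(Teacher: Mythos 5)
Your overall strategy coincides with the paper's: sample the syndrome bit by bit via the chain rule, read each conditional probability off the entry $M_{2j,2j+1}$ of the current covariance matrix (your formula $\pi(s_j\mid s_1,\ldots,s_{j-1})=\frac12(1+(-1)^{s_j}M_{2j,2j+1})$ is exactly the paper's Eq.~(\ref{newpj})), and update the covariance matrix of the conditional Gaussian state by a rank-two correction costing $O((N-j)^2)$ per step, giving $O(N^3)$ total with $N-1$ random bits. The only difference is that the paper derives the rank-two update directly from Wick's theorem, whereas you quote a Gaussian-conditioning (Schur-complement) rule.

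However, the one formula you explicitly defer verifying --- the covariance update --- is wrong as written, and this is a genuine error rather than a convention mismatch. Projecting the pair $A=\{2j,2j+1\}$ onto the pure Gaussian state with covariance $\gamma=(-1)^{s_j}\left(\begin{smallmatrix}0&1\\-1&0\end{smallmatrix}\right)$ gives
\[
M'_{RR}=M_{RR}+M_{RA}\bigl(\gamma^{-1}-M_{AA}\bigr)^{-1}M_{AR}=M_{RR}-M_{RA}\bigl(\gamma+M_{AA}\bigr)^{-1}M_{AR}
\]
(using $\gamma^{-1}=-\gamma$ for a pure pair), not $M_{RR}+M_{RA}(\gamma-M_{AA})^{-1}M_{AR}$. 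Writing $m=M_{2j,2j+1}$ and $J=\left(\begin{smallmatrix}0&1\\-1&0\end{smallmatrix}\right)$, the correct kernel is $(\gamma^{-1}-M_{AA})^{-1}=\bigl((-1)^{s_j}+m\bigr)^{-1}J$, which reproduces the Wick-theorem update of the paper's Eq.~(\ref{newMj}) with its prefactor $(-1)^{s_j}/(2p_j(s_j))=\bigl((-1)^{s_j}+m\bigr)^{-1}$; your kernel is $(\gamma-M_{AA})^{-1}=-\bigl((-1)^{s_j}-m\bigr)^{-1}J$. The two disagree already in the simplest case $m=0$, where they have opposite signs, so iterating your update would produce an incorrect distribution over syndromes. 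The fix is local --- replace $\gamma$ by $\gamma^{-1}$ in the Schur complement, noting that invertibility of $\gamma^{-1}-M_{AA}$ is precisely the condition $p_j(s_j)\neq 0$ that you already require --- after which the rest of your argument goes through and reproduces the paper's proof.
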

\begin{proof}
Define a random sequence of fermionic Gaussian states $\rho_0,\rho_1,\ldots,\rho_{N-1}$
such that $\rho_0=\rho$ and $\rho_j$ is obtained from $\rho_{j-1}$
by a non-destructive measurement of a syndrome bit $s_{j}$. More formally,
let $\hat{\Pi}_j(s_j) =(1/2) (I - i(-1)^{s_j} c_{2j} c_{2j+1})$ be the projector onto the eigen-subspace
with a fixed syndrome bit $s_j$ and $p_j(s_j)=\trace(\rho_{j-1} \hat{\Pi}_j(s_j))$ be the corresponding
probability.  Then we define
\be
\label{rho_j}
\rho_j = \frac1{p_j(s_j)}\,  \hat{\Pi}_j(s_j)  \rho_{j-1} \hat{\Pi}_j(s_j)
\ee
where $s_j$ is a random bit with a probability distribution $(p_j(0),p_j(1))$.
Let $M^{(j)}$ be the covariance matrix of $\rho_j$,
that is, $M^{(j)}_{p,q}=(-i/2)\trace(\rho_j [c_p,c_q])$.
Note that the matrix $M^{(0)}$ is the input of the algorithm.
By definition, we have
\be
\label{newpj}
2p_j(s_j) =1+(-1)^{s_j} \, M^{(j-1)}_{2j,2j+1}\ .
\ee
Choose any $2j+1<p<q\le 2N$. Using Eq.~(\ref{rho_j}) we arrive at
\[
M^{(j)}_{p,q} =  \frac1{p_j(s_j)} \trace{(-i)c_p c_q (I -i(-1)^{s_j} c_{2j} c_{2j+1}) \rho_{j-1}}\ .
\]
Since $\rho_{j-1}$ is a Gaussian state, we can employ Wick's theorem to compute the
trace $\trace(c_{2j} c_{2j+1} c_p c_q \rho_{j-1})$. After simple algebra one gets
\be
\label{newMj}
M^{(j)}_{p,q} = M^{(j-1)}_{p,q} - \frac{(-1)^{s_j}}{2p_j(s_j)} \, M^{(j-1)}_{2j,p}  M^{(j-1)}_{2j+1,q}
+  \frac{(-1)^{s_j}}{2p_j(s_j)} \, M^{(j-1)}_{2j,q}  M^{(j-1)}_{2j+1,p}\ .
\ee
Here it suffices to only compute matrix elements $M^{(j)}_{p,q}$  with
\[
p=1, \; \; 2j+2\le q\le 2N \quad \mbox{and} \quad 2j+2\le p<q\le 2N.
\]
Combining Eqs.~(\ref{newpj},\ref{newMj}) we obtain inductive rules
for computing conditional probabilities $p_j(s_j)$.
The overall computation requires roughly $N^3/6$ steps,
where each step amounts to computing the righthand side of Eq.~(\ref{newMj})
for some fixed $p,q,j$. This requires roughly $N^3$ additions, multiplications, and divisions
on complex numbers.
\end{proof}

Let us discuss how Lemma~\ref{lemma:sampling} can be used in our setting.
Let $|g(t)\ra$ be the final state of the memory before error correction.
Taking into account superselection rules we get
\be
\label{pi(s)}
\pi(s) \equiv \la g(t)|\hat{Q}_s |g(t)\ra = \sum_{\sigma =0,1} \, |\alpha_\sigma |^2 \la g_\sigma (t)|\hat{Q}_s|g_\sigma(t)\ra.
\ee
Recall that $|g_\sigma\ra$ is the ground state of $\hat{H}_0$ with fermionic parity $\sigma$,
$\alpha_\sigma$ are amplitudes of the initial encoded state,
and $|g_\sigma(t)\ra =e^{i(\hat{H}_0+\hat{V})t}\, |g_\sigma\ra$.
It follows that $\pi(s)$ is a probabilistic mixture of two distributions $\pi_0(s)$ and $\pi_1(s)$,
where
$\pi_\sigma(s)= \la g_\sigma (t)|\hat{Q}_s|g_\sigma(t)\ra$. The time-evolved
states $|g_\sigma(t)\ra$ are Gaussian and we can efficiently compute their
covariance matrices, namely, $M^\sigma(t)=R(t) M^\sigma(0) R(t)^T$,
where $R(t)=e^{(H_0+V)t}\in SO(2N)$. Here $H_0$ and $V$
are one-particle Hamiltonians corresponding to $\hat{H}_0$ and $\hat{V}$,
see~\eqref{eq:antisymmetricmatrices},
and $M^\sigma(0)\equiv M^\sigma$ are anti-symmetric matrices with non-zero entries (above the diagonal)
\begin{align}
M^\sigma_{1,2N}&=(-1)^\sigma\quad\textrm{ and }\quad M^\sigma_{2j,2j+1}=1\quad\textrm{ for }j=1,\ldots N-1\ \label{eq:covariancematrixt0}
\end{align}
The matrix exponential $R(t)$ can be computed in time $O(N^3)$ by
finding Williamson eigenvectors of  $H_0+V$,
see Section~\ref{sec:quadraticfermionhamiltonians}.
Applying Lemma~\ref{lemma:sampling}, we can sample
$s$ from $\pi_\sigma(s)$ and thus we can sample $s$ from $\pi(s)$.

It is worth mentioning that an efficient simulation of the syndrome measurement alone
can be used to assess the storage time of the memory by estimating the probability
of `bad syndromes' as defined in~\cite{CLBT09} (in our context, a syndrome $s$
is bad iff it can only be caused by an error of weight roughly~$N/2$).
In particular, if the overall probability of bad syndromes is exponentially small (in $N$)
for all $t\in \RR$, then the storage time of the memory grows at least exponentially
with  $N$, see~\cite{CLBT09} for details.

\subsection{Monte Carlo computation of the storage fidelity\label{sec:montecarlo}}
Represent the storage fidelity as an expectation value
\begin{align}
F_{\ket{g}}(t)&=\EE(f_s)\ ,\label{eq:montecarloexpression}
\end{align}
where $s$ is drawn from the distribution $\pi(s)=\la g(t)|\hat{Q}_s|g(t)\ra$ and $f_s\in [0,1]$ is
the fidelity between the initial encoded state and a normalized error corrected final state
for given a syndrome $s$. Here the expectation value is taken only over the distribution of $s$
(fixed disorder configuration).
More formally,
\be
\label{f_s}
f_s = \frac{|\bra{g}C(s)\hat{Q}_s\ket{g(t)}|^2}{\pi(s)}=\frac1{\pi(s)} \left|
\sum_{\sigma=0,1} \alpha_\sigma \la g_\sigma|C(s) e^{i(\hat{H}_0+\hat{V})t} |g_\sigma\ra \right|^2.
\ee
Here we omitted $\hat{Q}_s$ since $\hat{Q}_s C(s)|g_\sigma\ra= C(s)|g_\sigma\ra$.
The expectation value  $\EE(f_s)$ can be estimated with precision $\delta$ by
the standard
Monte Carlo method. It requires  $O(1/\delta^2)$ independent samples of $s$
and computation of $f_s$ for each of the samples.
The remaining step is to show that
for any given syndrome~$s$ we can compute~$f_s$ in time $O(N^3)$.

Recall that the Pfaffian of a complex anti-symmetric matrix $A$ of size $2m$ is defined
as
\[
\pf{(A)}=\frac1{2^m m!} \sum_{\tau\in S_{2m}} \mathrm{sgn}(\tau) \, A_{\tau(1),\tau(2)} \cdots A_{\tau(2m-1),\tau(2m)}\ .
\]
The Pfaffian can be easily computed up to an overall sign using the identity $\pf{(A)}^2 =\det{(A)}$.
It is also well known that $\pf{(A)}$ can be computed directly using the analogue of Gaussian elimination
in time $O(m^3)$.
We will exploit the following version of Wick's theorem.
\begin{prop}[\bf Wick's theorem]
\label{prop:wick}
Consider an ordered list of  operators
$L_1,\ldots,L_{2m}$ such that every operator $L_j$ is a linear combination of the
Majorana operators $c_1,\ldots,c_{2N}$ with complex coefficients.
Let $\psi$ be any fermionic Gaussian state.
Define an antisymmetric $2m\times 2m$ complex matrix $A$ such that
$A_{j,k}= \la \psi| L_j L_k |\psi \ra$ for $j<k$.
Then
\be
\label{wick}
\la \psi|L_{1} L_2  \cdots L_{2m}|\psi\ra =\pf{(A)}\ .
\ee
\end{prop}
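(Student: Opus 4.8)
The plan is to prove~\eqref{wick} by first reducing to the case of Majorana monomials and then running an induction on~$m$ that matches the cofactor expansion of the Pfaffian term by term against a Wick recursion for the correlators of~$\psi$. The first observation is that both sides are multilinear, of degree one, in each of the operators $L_1,\ldots,L_{2m}$. This is clear for the left-hand side. On the right-hand side, writing $\pf(A)=\frac{1}{2^m m!}\sum_{\tau\in S_{2m}}\mathrm{sgn}(\tau)\,A_{\tau(1),\tau(2)}\cdots A_{\tau(2m-1),\tau(2m)}$, every monomial is a product of $m$ entries of $A$ whose index pairs form a perfect matching of $\{1,\ldots,2m\}$, so each index $j$ appears in exactly one factor $A_{j,k}=\pm\langle\psi|L_jL_k|\psi\rangle$, which is bilinear in the pair $(L_j,L_k)$. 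Hence $\pf(A)$ is also multilinear of degree one in each $L_j$, and it suffices to verify~\eqref{wick} when every $L_j$ is a single Majorana generator $c_{p_j}$, with arbitrary and possibly repeated indices $p_j$.

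In that case I would induct on~$m$. The base case $m=1$ is immediate, since the $2\times2$ antisymmetric matrix with upper entry $A_{1,2}=\langle\psi|c_{p_1}c_{p_2}|\psi\rangle$ has Pfaffian $A_{1,2}$. For the inductive step I would use the first-row cofactor expansion
\[
\pf(A)=\sum_{k=2}^{2m}(-1)^{k}\,A_{1,k}\,\pf\!\left(A_{\widehat 1\widehat k}\right),
\]
where $A_{\widehat1\widehat k}$ is $A$ with rows and columns $1$ and $k$ deleted. By the induction hypothesis $\pf(A_{\widehat1\widehat k})$ equals the correlator of the remaining $2m-2$ operators in their given order, so the whole argument reduces to establishing the matching recursion
\[
\langle\psi|c_{p_1}\cdots c_{p_{2m}}|\psi\rangle=\sum_{k=2}^{2m}(-1)^{k}\,\langle\psi|c_{p_1}c_{p_k}|\psi\rangle\,\langle\psi|c_{p_2}\cdots\widehat{c_{p_k}}\cdots c_{p_{2m}}|\psi\rangle,
\]
the hat denoting omission. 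This single identity is the entire content of Wick's theorem and is the one place where the Gaussian nature of~$\psi$ enters.

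To prove the recursion I would use a concrete presentation $\rho=Z^{-1}\exp(-\tfrac{i}{4}\sum_{p,q}G_{p,q}c_pc_q)$ of a Gaussian state, with $G$ real antisymmetric (the pure ground states relevant here are obtained as a limiting case). Because the generator is quadratic, conjugation acts linearly on the Majorana operators, $\rho\,c_a\,\rho^{-1}=\sum_b\Lambda_{a,b}c_b$ for an invertible matrix $\Lambda=\Lambda(G)$. Moving $c_{p_1}$ through $\rho$ in $\trace(\rho\,c_{p_1}\cdots c_{p_{2m}})$, then cyclically to the far right, and finally anticommuting it back past the remaining $2m-1$ generators via $c_bc_p+c_pc_b=2\delta_{b,p}$, yields a linear relation between $\langle\psi|c_{p_1}\cdots|\psi\rangle$, the two-point functions $\langle\psi|c_{p_1}c_{p_k}|\psi\rangle$, and the shorter correlators; its $2m=2$ instance fixes $\Lambda$ in terms of the covariance matrix, and the alternating signs come out exactly as in the cofactor expansion. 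Alternatively, since both sides of~\eqref{wick} are invariant under an orthogonal rotation of the $c_p$, I could first apply a Bogoliubov transformation (Fact~\ref{fact:1}) to bring $\rho$ into a product of single-mode states, in which frame only within-mode two-point functions survive and the recursion reduces to transparent sign bookkeeping.

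The main obstacle is exactly this recursion step. The reduction by multilinearity, the induction, and the Pfaffian expansion are all purely formal, whereas deriving the contraction identity requires the genuine quasi-free structure of~$\psi$ together with careful tracking of the anticommutation signs, so that they line up with the $(-1)^{k}$ appearing in the Pfaffian cofactor formula.
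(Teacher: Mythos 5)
Your argument is correct, but it takes a genuinely different route from the paper. The paper's proof is a two-line reduction: since $\ket{\psi}$ is Gaussian it is the Fock vacuum of suitably defined complex modes $a_p,a_p^\dagger$, so after rewriting each $L_j$ in that basis one simply invokes the standard (vacuum) Wick theorem. You instead give a self-contained derivation: reduce by multilinearity to Majorana monomials, induct on $m$ via the first-row cofactor expansion of the Pfaffian, and establish the contraction recursion by the Gaudin-style trick of writing $\rho=Z^{-1}e^{-\frac{i}{4}\sum G_{p,q}c_pc_q}$, using $\rho c_a\rho^{-1}=\sum_b\Lambda_{a,b}c_b$ together with cyclicity of the trace and the anticommutation relations. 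This is sound: the linear relation you describe reads $(I+\Lambda)\tilde F=2\Lambda\,(\text{shorter correlators})$, the $2m=2$ case identifies $2(I+\Lambda)^{-1}\Lambda$ with the covariance matrix, and the signs $(-1)^k$ do match the cofactor expansion. What your approach buys is independence from the textbook vacuum Wick theorem; what it costs is the extra care you correctly flag, namely that the crucial recursion is only sketched, that one needs $I+\Lambda$ (equivalently $\rho$) invertible so that pure states must be reached by a continuity/limiting argument, and the sign bookkeeping. Your parenthetical alternative --- Bogoliubov-rotating to a normal form where only within-mode two-point functions survive --- is essentially the paper's argument, so if you want the shortest complete write-up that is the branch to take; otherwise the Gaudin recursion should be written out explicitly, as it is the only non-formal step.
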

\begin{proof}
Indeed, since $|\psi\ra$ is a Gaussian state, we can regard it is as the fermionic vacuum
for properly defined complex fermionic modes $a_p$, $a^\dag_p$, $p=1,\ldots,N$.
Rewriting $L_j$ as a linear combination of~$a_p$, $a^\dag_p$, we can now employ Wick's theorem in its standard form.
\end{proof}
Let  $\tilde{c}_1,\ldots,\tilde{c}_{2N}$ be the canonical modes of the
perturbed Hamiltonian $\hat{H}=\hat{H}_0+\hat{V}$, see~\eqref{eq:Hamiltonianasmajorana}.
As was mentioned above, finding the canonical modes amounts to finding Williamson
eigenvectors of the one-particle Hamiltonian $H_0+V$. This can be done in time $O(N^3)$.
Then
\begin{align}
e^{i\hat{H}t}&=\prod_{j=1}^N \left(\cos (\lambda_jt/2)-\tilde{c}_{2j-1}\tilde{c}_{2j}\sin(\lambda_jt/2)\right)=\prod_{j=1}^N \tilde{c}_{2j-1}\left(\cos (\lambda_jt/2)\tilde{c}_{2j-1}-\sin(\lambda_jt/2)\tilde{c}_{2j}\right)\nonumber\\
&\equiv {L}'_1\cdots {L}'_{2N}\ \qquad\textrm{ where }\quad
{L}'_{2j-1}=\tilde{c}_{2j-1},\  {L}'_{2j}=\cos (\lambda_jt/2)\tilde{c}_{2j-1}-\sin(\lambda_jt/2)\tilde{c}_{2j}\ .\label{eq:evolutionfactorization}
\end{align}
Suppose the minimum-weight error consistent with the syndrome~$s$ is a product of elementary errors at locations $j_1,\ldots,j_q$. Then
\begin{align}
C(s)&=E_1\cdots E_{j_q}= \prod_{m=1}^q(-i)c_{2j_m-1}c_{2j_m}\nonumber\\
&\equiv {L}''_1\cdots\hat{L}''_{2q}\quad\textrm{where }\quad {L}''_{2\mu-1}=(-i)c_{2j_\mu-1}\ , {L}''_{2\mu}=c_{2j_\mu}\ .\label{eq:erroropfactorization}
\end{align}
With~\eqref{eq:evolutionfactorization} and \eqref{eq:erroropfactorization} we conclude that there are operators ${L}_1,\ldots,{L}_{2m}$, $m=N+q$ such that
\begin{align*}
\bra{g_\sigma}C(s)\ket{g_\sigma(t)}&=\bra{g_\sigma}{L}_1\cdots {L}_{2m}\ket{g_\sigma}\ ,
\end{align*}
where each operator is a known linear combination of the Majorana operators $c_j$.
Expectation values $A_{j,k}=\la g_\sigma|L_j L_k|g_\sigma\ra$ can be easily computed
using the covariance matrices $M^\sigma$, see Eq.~(\ref{eq:covariancematrixt0}).
Hence we can compute $\bra{g_\sigma}C(s)\ket{g_\sigma(t)}$
using Proposition~\ref{prop:wick} in time $O(N^3)$.

The missing ingredient to compute $f_s$ is the probability $\pi(s)$, see Eq.~(\ref{f_s}).
Taking into account Eq.~(\ref{pi(s)}), it suffices to compute the overlaps
$\la g_\sigma(t)|\hat{Q}_s|g_\sigma(t)\ra$.
Since $|g_\sigma(t)\ra$ has fermionic parity $\sigma$, we conclude that
$\hat{Q}_s|g_\sigma(t)\ra$ is an eigenvector of $(-i)c_1 c_{2N}$
with an eigenvalue $(-1)^{\sigma+\sigma(s)}$, where
$\sigma(s)$ is number of non-zero syndrome bits modulo two.
Hence  $\la g_\sigma(t)|\hat{Q}_s|g_\sigma(t)\ra= |\la g_\sigma(t)|\psi_s\ra|^2$,
where $|\psi_s\ra$ is a Gaussian state with a covariance matrix
$M_s^\sigma$ such that the only non-zero elements of $M_s^\sigma$ above the diagonal are
\[
(M_s^\sigma)_{2j,2j+1} =(-1)^{s_j} \quad \mbox{and} \quad (M_s^\sigma)_{1,2N}=(-1)^{\sigma+\sigma(s)}\ .
\]
Using the standard formula for the overlap between two Gaussian states one arrives at
\[
\la g_\sigma(t)|\hat{Q}_s|g_\sigma(t)\ra= 2^{-N} \sqrt{\det{(M^\sigma(t)+M_s^\sigma)}}\ .
\]
To summarize, we can compute $f_s$ using Eq.~(\ref{f_s}) in time $O(N^3)$.

\subsection{Numerical results\label{sec:numericalresults}}
We have numerically simulated chains of  various sizes~$N$ in the range $4\leq N\leq 256$ with and without disorder.  Throughout, we consider the Hamiltonian~\eqref{KitaevModel} with  $w=|\Delta|$ and $J=1$. We compute the storage fidelity~$F_{\ket{g}}(t)$ for the maximally entangled state~$\ket{g}=\frac{1}{\sqrt{2}}\left(\ket{g_0}\otimes\ket{0_R}+\ket{g_1}\otimes\ket{1_R}\right)$ and corresponding storage times~$T_{\textrm{storage}}(F_0)$ for different thresholds~$F_0$ close to~$1$.

For system sizes up to $N\leq 12$, we compute~$F_{\ket{g}}(t)$ exactly by summing over syndromes, while for larger system sizes, we sample syndromes $10^5$~times for each time~$t$ to estimate the fidelity. Due to the $N^3$-scaling of this algorithm, this takes on the order of a day of a single desktop computer for~$N=256$.

\subsubsection*{Clean case}
We first consider the clean case with perturbation given by a uniform chemical potential~$\mu\in\{0.1,0.13,0.16\}$ and~$N=6$.  As the perturbation strength is $\|V\|=\epsilon=\mu$,
 we have $N\ll 1/\epsilon^2$ and we expect the storage fidelity to be dictated by
 dephasing due to the exponentially small energy splitting~$\delta$, see  Corollary~\ref{corol:boring2} in Section~\ref{subs:results}. Indeed, we recover the behavior $F_{\ket{g}}(t)\approx \cos^2(\delta t/2)$  as shown in Fig.~\ref{figosc:subfig1}. Fig.~\ref{figosc:subfig2} illustrates how this simple oscillatory behavior disappears for  larger~$\epsilon$.
\begin{figure}[H]
\centering
\begin{center}
\subfigure[In the regime $\epsilon^2N\ll 1$, the behavior of the storage fidelity~$F_{\ket{g}}(t)$ is dictated by dephasing due to the exponentially small splitting of the ground state degeneracy. This figure shows the error-corrected fidelity superimposed with the function~$\cos^2(\delta t/2)$, where~$\delta$ is the gap of~$H$. The system size is~$N=6$. ]{
\epsfig{file=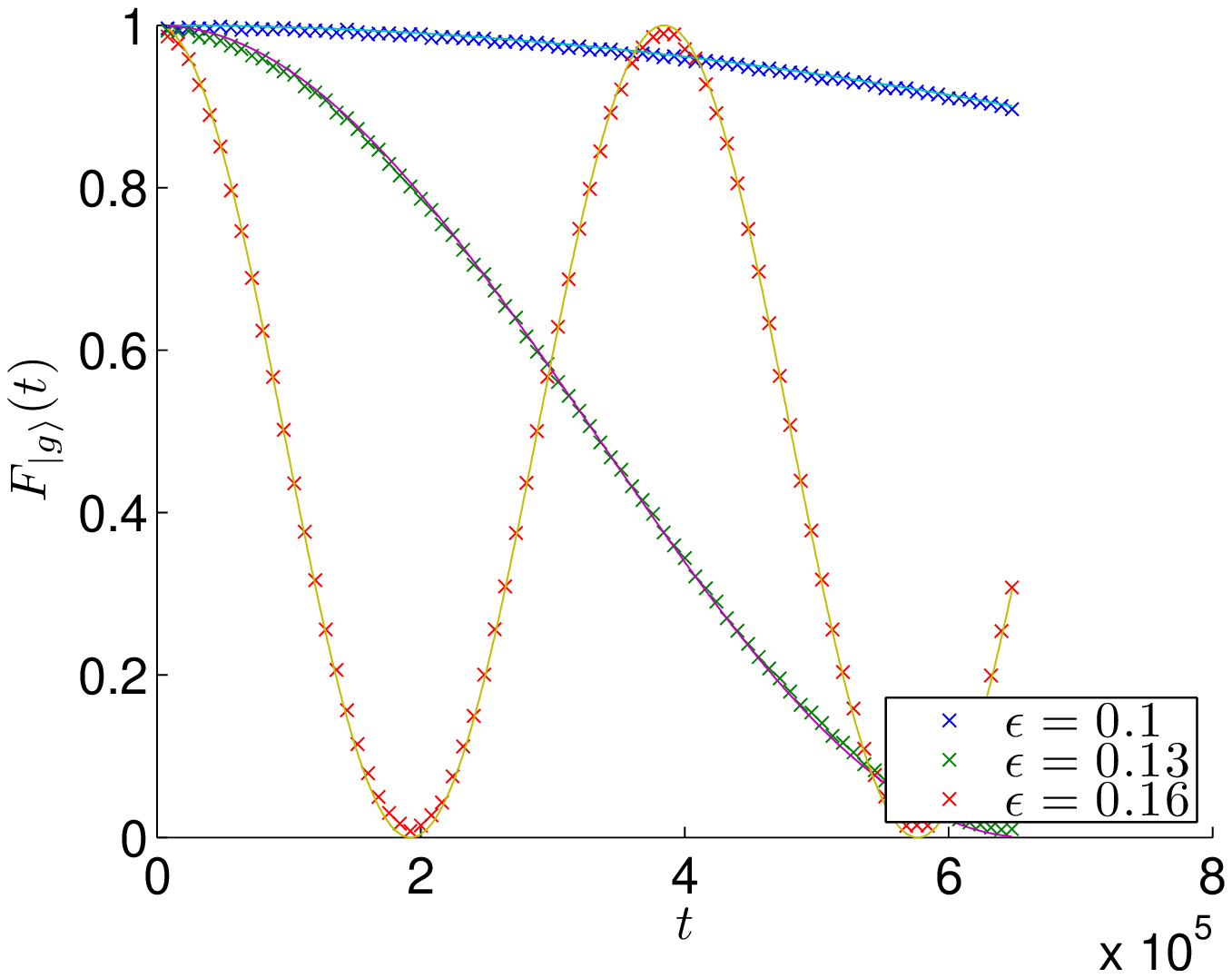,width=8.0cm}
   \label{figosc:subfig1}
 }\ \ \subfigure[When $\epsilon^2N>1$, the storage fidelity~$F_{\ket{g}}(t)$ has a more complex behavior. Here the system size is~$N=12$. In this and the following figures, we use lines as a guide to the eye.]{
\epsfig{file=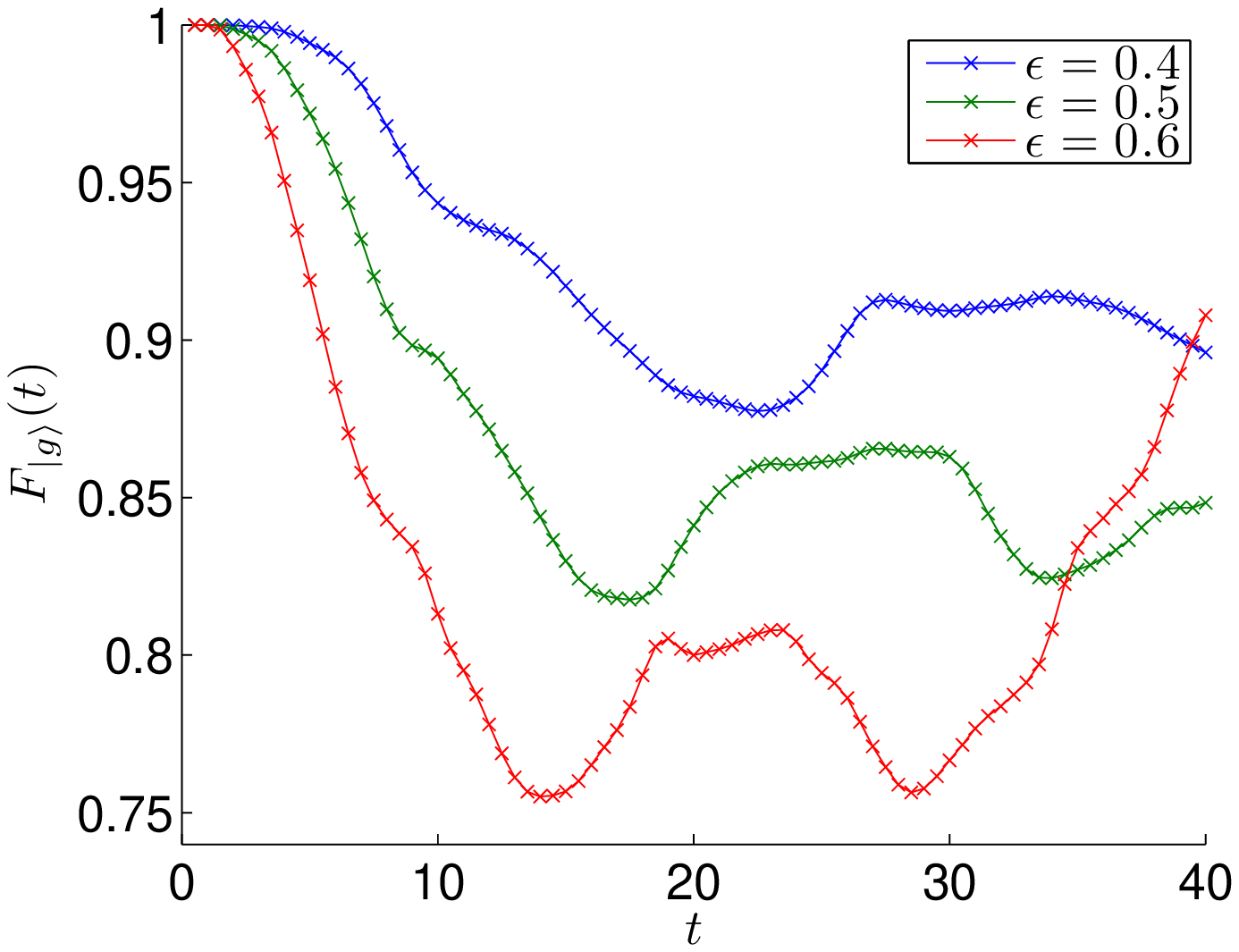,width=8.0cm}
   \label{figosc:subfig2}}
\end{center}
\label{figosc}
\caption{(No disorder) The storage fidelity~$F_{\ket{g}}(t)$ for different  perturbation strengths~$\epsilon$ in the regimes $\epsilon^2 N\ll 1$ and $\epsilon^2N>1$.}
\end{figure}
Fig.~\ref{figstortime:subfig1} shows how  the storage fidelity~$F_{\ket{g}}(t)$ depends on the system size~$N$ for a fixed perturbation strength~($\epsilon=0.7$). Figure~\ref{figstortime:subfig2} is derived from the data of~\ref{figstortime:subfig1} and shows the storage time~$T_{\textrm{storage}}(F_0)$ in the clean case  as a function of~$\log_2 N$ (for different fidelity thresholds~$F_0$). We observe a logarithmic scaling (cf.~\eqref{eq:logscaling}) in the clean case with a sufficiently strong perturbation.
\begin{figure}[H]
\centering
\begin{center}
\subfigure[The storage fidelity~$F_{\ket{g}}(t)$ as a function of time, for different system sizes but fixed perturbation~$\epsilon=0.7$.]{
\epsfig{file=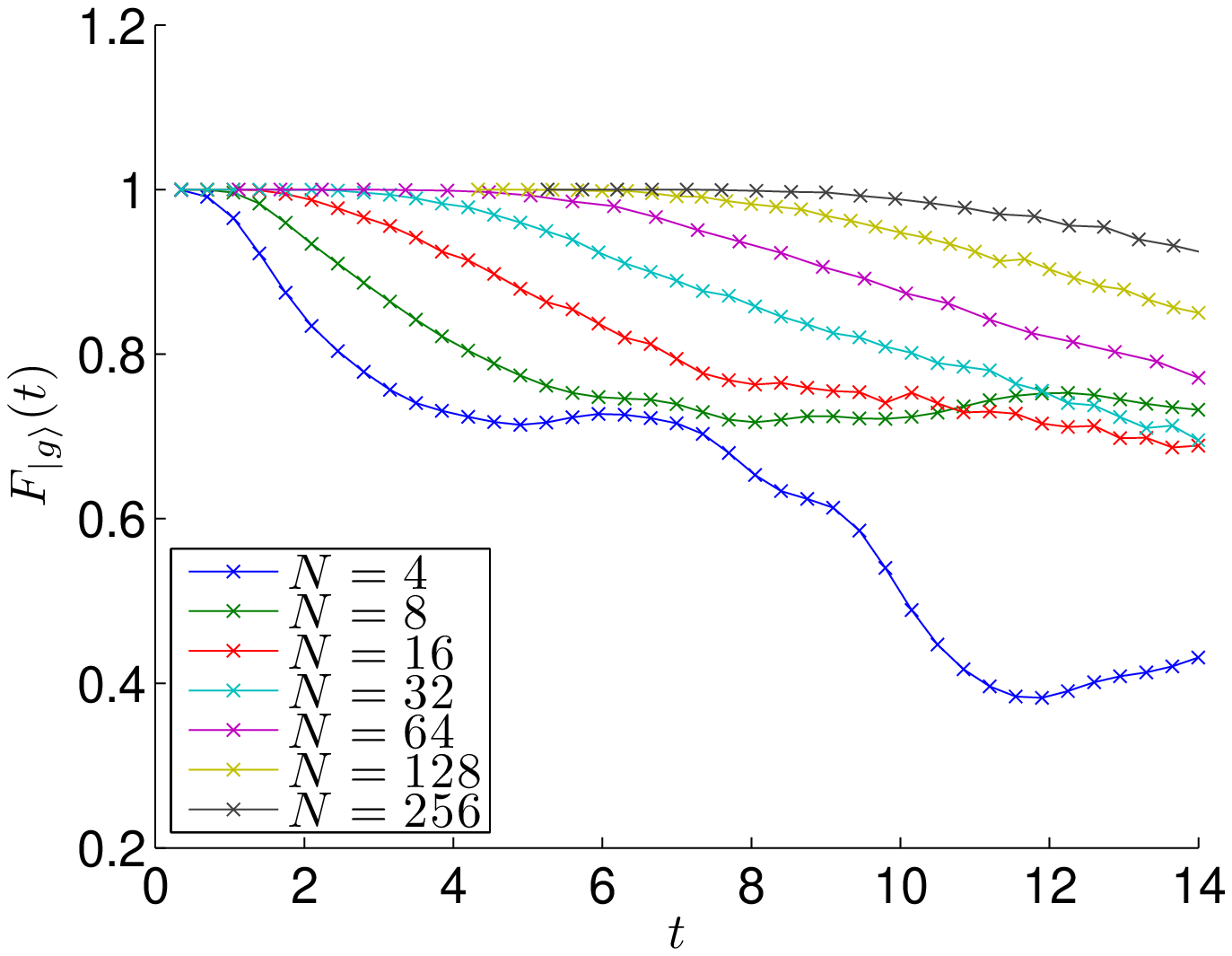,width=8.0cm}
   \label{figstortime:subfig1}
 }\ \ \subfigure[Storage times~$T_{\textrm{storage}}(F_0)$  for different fidelity thresholds~$F_0$ as a function of~$\log_2 N$.  Also shown are linear fits to the values at~$N=16,32,64,128$. For smaller system sizes, the storage time deviates from its asymptotic logarithmic scaling, presumably due to boundary effects.]{
\epsfig{file=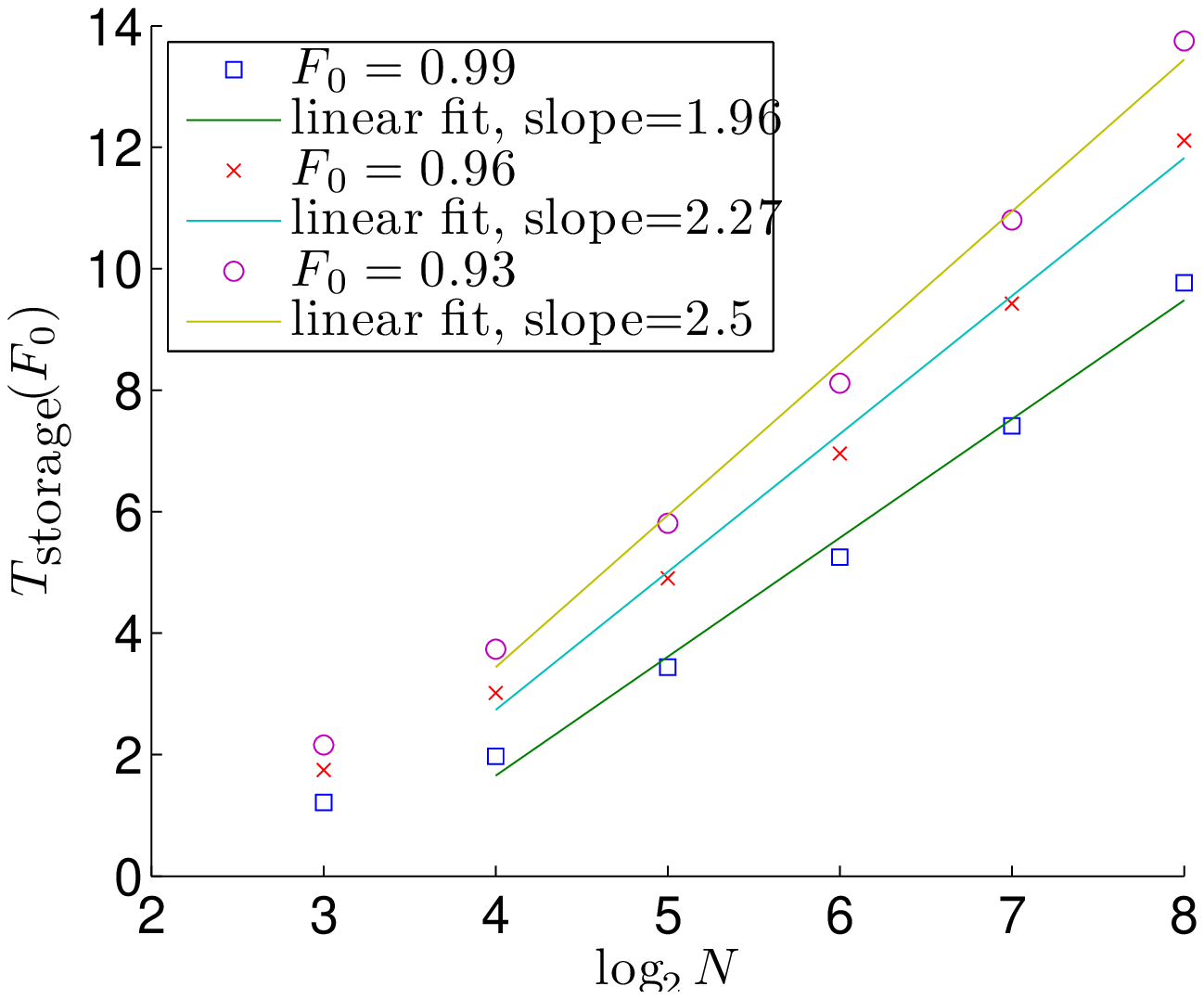,width=8.0cm}
   \label{figstortime:subfig2}}
\end{center}
\label{figstortime}
\caption{(No disorder) Fidelity and storage time vary with system size for  a fixed perturbation strength $\epsilon=0.7$.}
\end{figure}
\begin{figure}[H]
\begin{center}
\epsfig{file=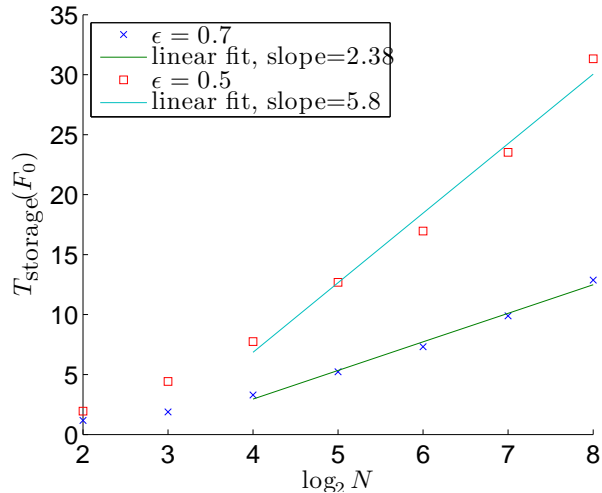,width=8cm}
\end{center}
\caption{(No disorder) Storage times vary with system size, for different perturbation strengths~$\epsilon\in\{0.5,0.7\}$ and fidelity threshold~$F_0=0.95$.}
\end{figure}
This confirms predictions made recently by Kay~\cite{Kay11} based on a mean-field analysis of error correction.
Following~\cite{Kay11} we can describe the time evolution of the encoded state $|g\ra=(|g_0\ra + |g_1\ra)/\sqrt{2}$
as a quantum quench in the transverse field Ising model. Indeed, applying the Jordan-Wigner transformation,
see Section~\ref{subs:JW}, the perturbed Hamiltonian becomes
\[
\hat{H}_0+\hat{V}=-\frac12 \sum_{j=1}^{N-1} X_j X_{j+1} + \frac12 \sum_{j=1}^N \mu_j Z_j,
\]
while the initial encoded state becomes the tensor product of $|+\ra$ states,
\[
|g\ra=|+\ra^{\otimes N}.
\]
Each elementary error $E_j=Z_j$ flips the corresponding qubit mapping $|+\ra$ to $|-\ra$.
The average fraction of flipped qubits in the final state is $(1-m^x(t))/2$, where
$m^x(t)$ is the magnetization,
\be
\label{rhoxt}
m^x(t)=\frac1N \sum_{j=1}^N \la g(t)|X_j|g(t)\ra, \quad  \quad |g(t)\ra= e^{i(\hat{H}_0+\hat{V})t}\,|g\ra.
\ee
The time evolution of $m^x(t)$
in the homogeneous case ($\mu_j=\mu$ for all $j$)  has  recently been
computed by Calabrese et al~\cite{Calabrese11} for quantum quenches within the ferromagnetic
phase, that is, $\mu<1$. In this regime $m^x(t)$ was shown to decay exponentially with time,
$m^x(t)\sim e^{-ct}$, where the coefficient $c>0$ can be easily computed from the quasiparticle spectrum and
Bogoliubov angles, see~\cite{Calabrese11} for details. Within the mean-field approximation
correlations between errors on different qubits are neglected. Then
the number of errors in the final state has expected value $N(1-m^x(t))/2$ and standard deviation $O(\sqrt{N})$.
Hence the error correction is likely to  succeed whenever
\[
N(1-m^x(t))/2 + O(\sqrt{N})\ll N/2,
\]
that is,
\[
m^x(t)\sim e^{-ct} \gg \frac1{\sqrt{N}}.
\]
This gives the desired logarithmic scaling of the storage time.

\subsubsection*{Disordered case\label{sec:numericsdisorderedcase}}
\begin{figure}
\subfigure[Storage fidelity $F_{\ket{g}}(t)$ for the clean case with $\mu=\epsilon=0.5$ and for $5$~disorder realizations with $\mu=0.5$ and randomness strength~$\eta=0.25$. The system size is~$N=12$.\label{figrandreal}]{
\epsfig{file=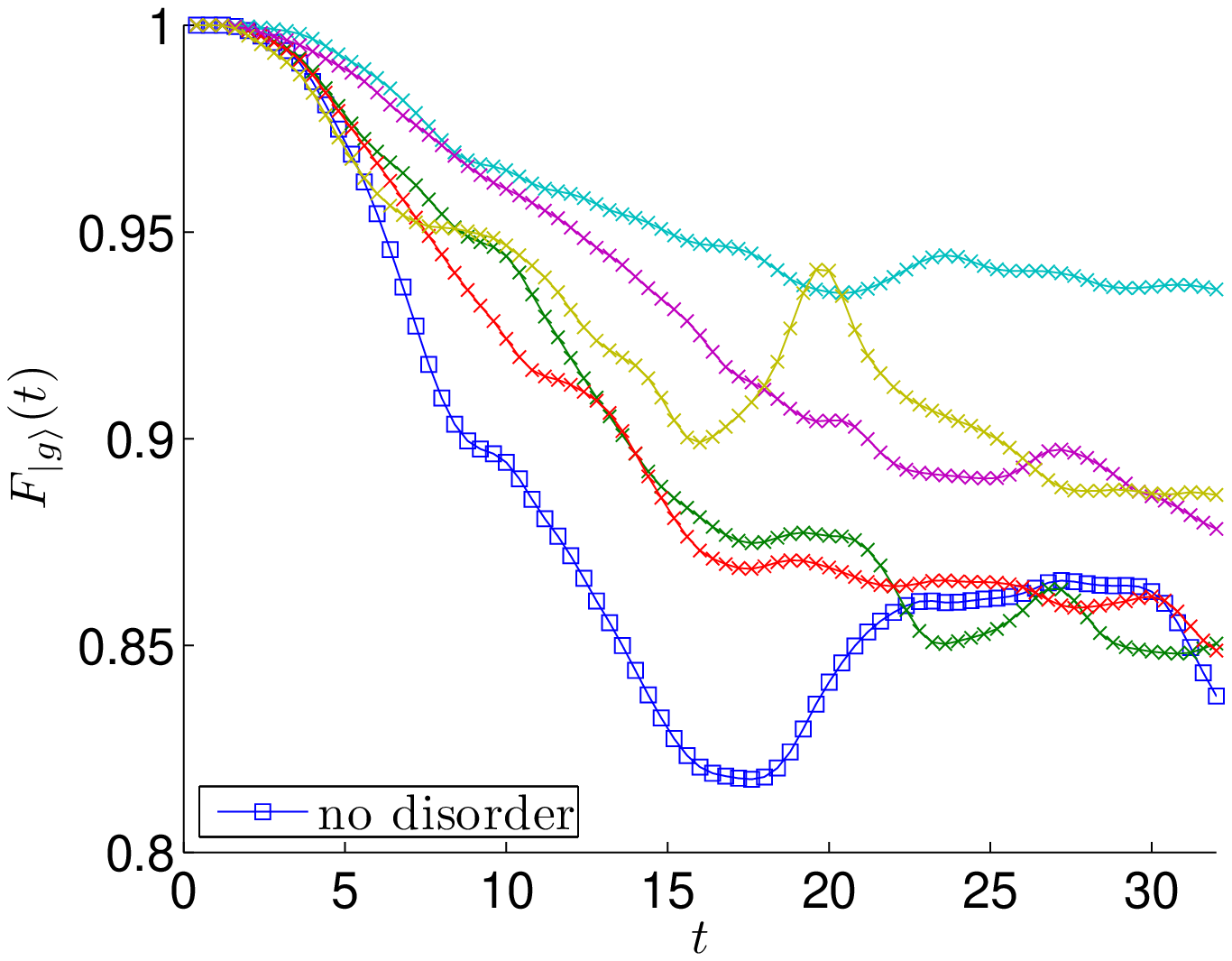,width=8.0cm}
}\ \ \
\subfigure[Plot of $\log_2 T_{\textrm{storage}}(F_0)$ versus $\log_2 N$ for the fidelity threshold $F_0=0.96$ with the same parameters as in Fig.~\ref{figrandreal}. For each system size, $10$~different disorder realizations are considered.  A straight line is fitted to the average $\log_2\bar{T}$ over disorder realizations, showing a linear relationship between~$T_{\textrm{storage}}(F_0)$ and~$N$.]{
\epsfig{file=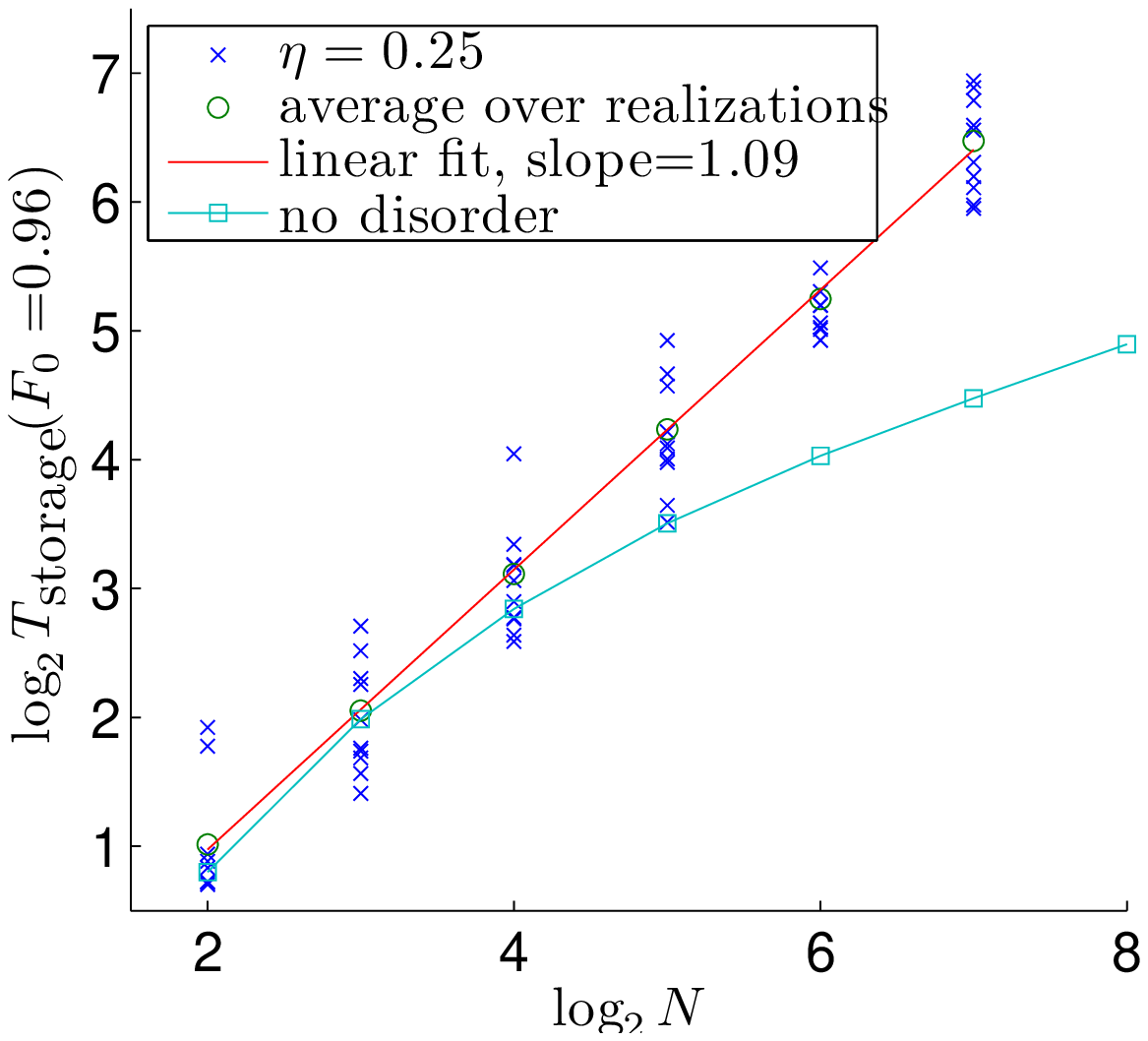,width=8.0cm}
   \label{storagetimerandom:subfig1}
 }
\caption{(Disorder) Fidelity and storage time for different randomness realizations and system sizes, for fixed perturbation- and randomness strength}
\end{figure}

Next we choose site-dependent chemical potentials~$\{\mu_j\}$ according to~\eqref{mu} with $\{x_j\}$ drawn independently and uniformly on $[-1,1]$.
To compare with the clean case, we use $N=12$ and the average chemical potential~$\mu=0.5$ to be in the regime~$\epsilon^2N>1$ of intermediate system size (in the clean case where $\epsilon=\mu$), see Section~\ref{subs:results}.  Fig.~\ref{figrandreal} shows the behavior of storage fidelity~$F_{\ket{g}}(t)$ without disorder~$(\eta=0)$  with disorder strength~$\eta=0.25$ for several disorder realizations.

Fig.~\ref{storagetimerandom:subfig1} shows that the (disorder-averaged) storage time scales roughly linearly with the system size in the presence of disorder. This is in stark contrast to the clean case (Fig.~\ref{figstortime:subfig2}). While this figure clearly shows a significant enhancement of storage times, it falls short of the exponential scaling of Theorem~\ref{thm:storage}. It appears that the regime where Theorem~\ref{thm:storage} applies is outside the range of what is numerically accessible.

\subsubsection*{Pseudorandom disorder}
Suppose now that disorder represents a controlled  external potential introduced on purpose to
enhance the storage time of the system.  Our numerical results demonstrate that for fixed parameters of the model the storage time varies strongly for different disorder realizations  lacking a self-averaging behavior. It shows that some disorder potentials are better than others in terms of their ability to suppress propagation of excitations and enhance the storage time.
A natural question is whether it is possible to choose a {\em deterministic} disorder potential that typically outperforms the random one in terms of the storage time when combined with an unknown homogeneous perturbation.

It is well known that certain deterministic potentials can give rise to localization of eigenvectors
in the Anderson model,  mimicking random disorder. Well-studied examples include
periodic potentials whose period is incommensurate with the one of the lattice, such as
Harper's potential~\cite{Ostlund83,Frohlich90} and pseudo-random potentials~\cite{Griniastyetal88,Brenner92}.

Here we focus on a potential generated by  iterations of the logistic map, namely,
$\mu_j=\mu+\eta(1-2y_j)$, where the sequence $\{y_j\}$ is defined by the initial
condition $y_1\in [0,1]$ and the rule
\begin{align}
y_{j+1}&=a\cdot y_j(1-y_j).\label{eq:logisticmap}
\end{align}
One can easily check that $y_j\in [0,1]$ for all $j$
provided that $0\le a\le 4$. The sequence $\{y_j\}$ is known to exhibit chaotic behavior for~$3.57\lesssim a< 4$ and almost all initial conditions.

With the hope of getting a long storage time, it is desirable to select values~$(y_1,a)$ giving rise to highly localized wavefunctions. To define an empirical measure of localization, we use the symmetric matrix~$H_s$ of Eq.~\eqref{eq:hslocalization}. This matrix has eigenvectors $\ket{\phi^{\pm}_j}$ with eigenvalues $\pm \lambda_j$, where  $\lambda_1 \leq \ldots \leq  \lambda_N$ are the Williamson eigenvalues of~$H_0+V$.  Omitting the eigenvectors~$\ket{\phi^\pm_1}$ corresponding to the boundary modes, we expect  the matrix entries of $T_{p,q}=\sum_{\alpha=2,s\in\{+,-\}}^N|\spr{p}{\phi^s_\alpha}\spr{\phi^s_\alpha}{q}|$ to decay exponentially away from the diagonal, $T_{p,q}\leq c\cdot e^{-|p-q|/\xi}$. Accordingly, we define $\tilde{\xi}(p)$ by least-square fitting a straight line to $(q,\log T_{p,q})$ such that
\begin{align*}
 T_{p,q}&\approx c(p)\cdot e^{-|p-q|/\tilde{\xi}(p)}\qquad\textrm{ for all }q\geq N\quad\textrm{if }p\leq N\\
 T_{p,q}&\approx c(p)\cdot e^{-|p-q|/\tilde{\xi}(p)}\qquad\textrm{ for all }q\leq N\quad\textrm{if }p> N\ .
\end{align*}
We then define the effective localization length as~$\xi_{\textrm{eff}}=\max_p \tilde{\xi}(p)$. This quantity can be computed in time~$O(N^3)$ from the sequence of on-site potentials~$\{\mu_j\}$.
\begin{figure}[H]
\centering
\begin{center}
\subfigure[This shows  $T_{\textrm{storage}}(F_0)$ compared to $\log\xi_{\textrm{eff}}$ for random and pseudorandom disorder. The rightmost box corresponds to a choice of pseudorandom disorder parameters equal to~$(y_1,a)=(0.2845,3.9914)$   (cf.~\eqref{eq:logisticmap}).
  As expected, choosing these parameters so as to minimize the effective localization length~$\xi_{\textrm{eff}}$  appears to be a good strategy for maximizing storage times. Here $N=64$, $\mu=0.5$, $\eta=0.25$ and $F_0=0.97$.]{
\epsfig{file=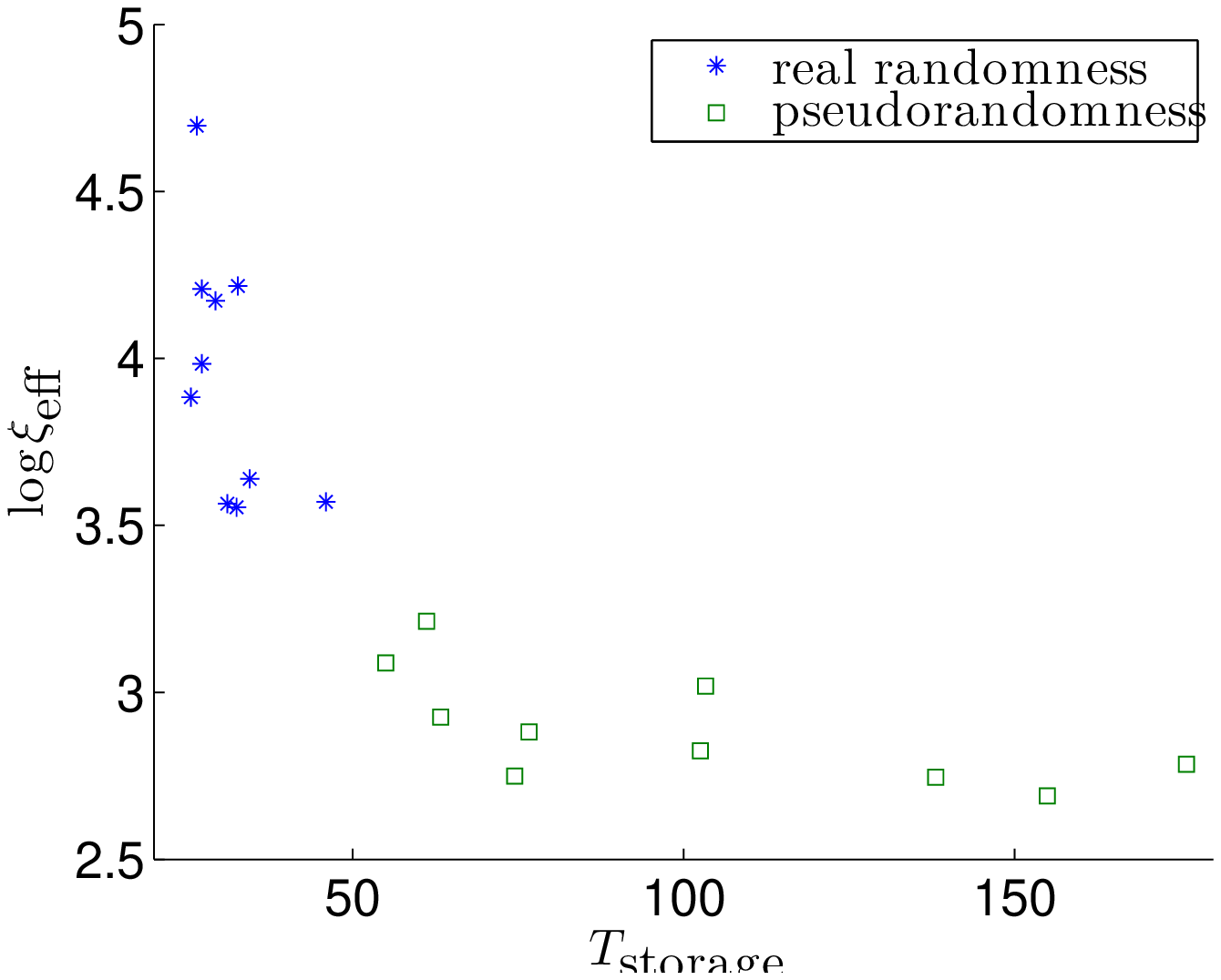,width=8.0cm}
   \label{figpseudo:subfig1}}
\subfigure[Plot  of $\log_2 \bar{T}$  for the average over disorder realizations and
of $\log_2 T_{\textrm{storage}}$
for certain pseudorandom sequences.]{
\epsfig{file=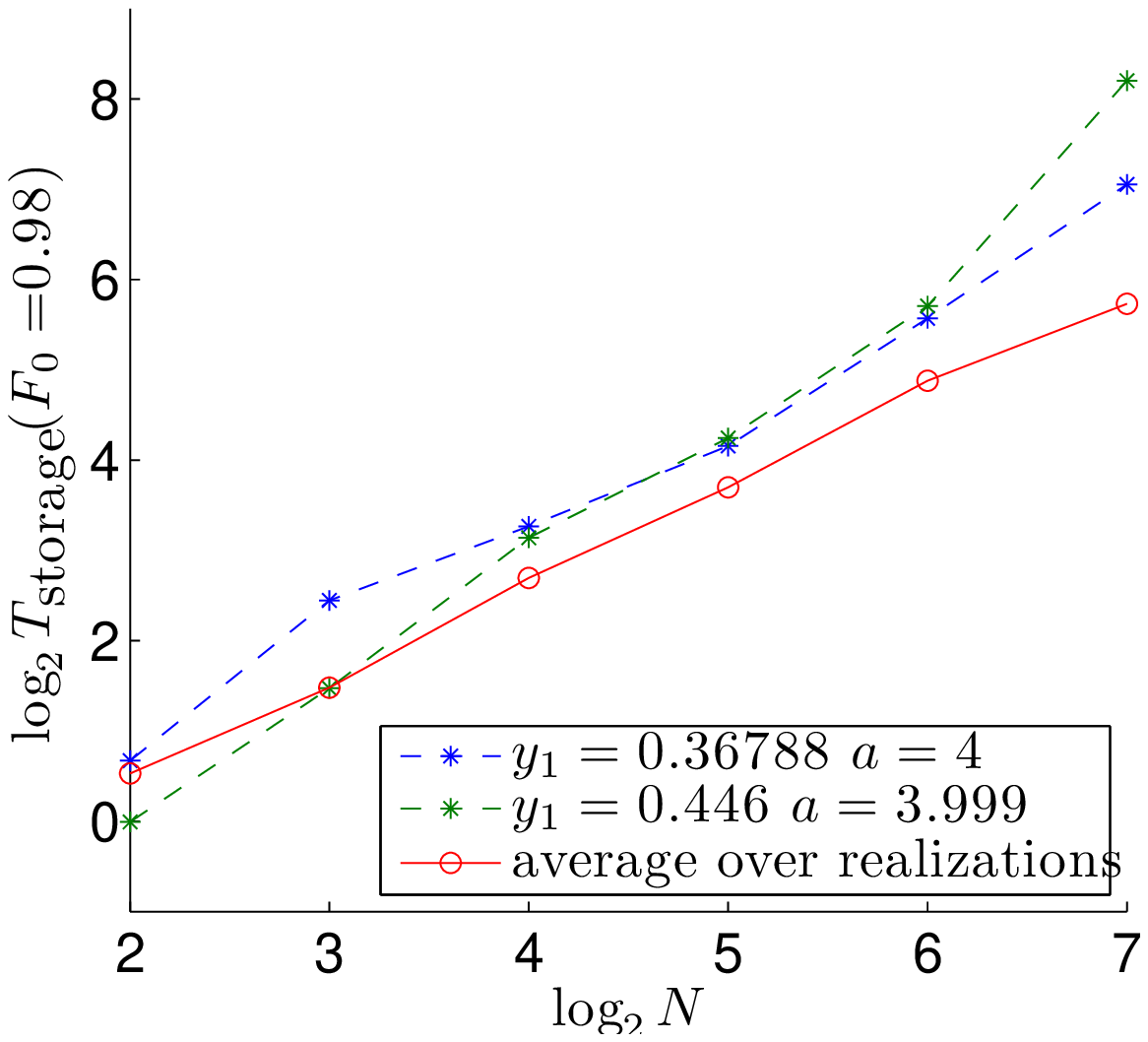,width=8.0cm}
   \label{figpseudo:subfig2}}
\end{center}
\label{figpseudo}
\caption{(Pseudo-random disorder) Enhancement of storage times over random disorder}
\end{figure}

We have computed~$\xi_{\textrm{eff}}$ and storage times for~$N=128$ and a discrete set of initial values~$(y_1,a)$. We expect initial values resulting in small values of~$\xi_{\textrm{eff}}$ to result in long storage times. Figure~\ref{figpseudo:subfig1} qualitatively confirms this expectation: it shows that~$\xi_{\textrm{eff}}$ indeed provides a crude measure for the quality of the quantum memory.  Figure~\ref{figpseudo:subfig2} shows the behavior of the storage time for
two different choices of $(y_1,a)$~and different system sizes. It shows that these pseudorandom sequences give better storage times than the average over truly random realizations.
\subsection*{Acknowledgments}
We would like to thank David DiVincenzo, Alexei Kitaev, Nate Lindner, Tobias Osborne and John Preskill for informative  discussions
and numerous useful comments concerning  Anderson localization.
SB thanks Daniel Loss for sharing his insights on disorder-enhanced stability of  the toric code
memory  that motivated this line of research. Part of this work was done while RK was at the Institute for Quantum Information, Caltech. We acknowledge partial support by the DARPA QuEST program under contract number~HR0011-09-C-0047.

\bibliographystyle{hplain}
\bibliography{q}

\end{document}